\newcommand{\abs}[1]{\lvert{#1}\rvert}
\newcommand{\fall}[1]{\underbar{r}}
\newcommand{\R}{{\mathbb{R}}}
\newcommand{\Z}{{\mathbb{Z}}}
\newcommand{\argmax}{\operatorname{argmax}}
\newcommand{\argmin}{\operatorname{argmin}}
\newcommand{\numSell}{s}
\newcommand{\pur}{X}
\newcommand{\puri}[1][i]{{\pur_{#1}}}
\newcommand{\supp}{n}
\newcommand{\suppi}[1][i]{{\supp_{#1}}}
\newcommand{\suppmax}{\supp_{\max}}
\newcommand{\Supp}{N}
\newcommand{\Suppi}[1][i]{{\Supp_{#1}}}
\newcommand{\util}{u}
\newcommand{\price}{p}
\newcommand{\prices}{{\mathbf \price}}
\newcommand{\pricej}[1][j]{{\price_{#1}}}
\newcommand{\pricesi}[1][i]{{\prices_{#1}}}
\newcommand{\br}{\textsc{Br}}
\newcommand{\disc}{\epsilon}
\newcommand{\mon}{\mu}
\newcommand{\moni}[1][i]{{\mon_{#1}}}
\newcommand{\ncomment}[1]{}
\newcommand{\val}{v}
\newcommand{\V}{{\mathbf{V}}}
\newcommand{\B}{{\mathbf{B}}}
\newcommand{\E}{{\mathbb{E}}}
\newcommand{\one}{{\mathbf{1}}}
\newtheorem{claim}{Claim}
\newtheorem{fact}[theorem]{Fact}
\newenvironment{proofof}[1]{{\textsc{Proof of #1.}}}{}
\newenvironment{oneshot}[1]{\@begintheorem{#1}{\unskip}}{\@endtheorem}
\begin{document}
\markboth{M. Babaioff, R. Paes Leme and B. Sivan}{Price Competition, Fluctuations and Welfare Guarantees}
\title{Price Competition, Fluctuations and Welfare Guarantees}
\author{Moshe Babaioff
\affil{Microsoft Research. \tt{moshe@microsoft.com}.}
Renato Paes Leme
\affil{Google Research. \tt{renatoppl@google.com}.}
Balasubramanian Sivan
\affil{Microsoft Research. \tt{bsivan@microsoft.com}.}
}
\begin{abstract}
In various markets where sellers compete in price, price oscillations are observed rather than convergence to equilibrium. Such fluctuations have been empirically observed in the retail market for gasoline, in airline pricing and in the online sale of consumer goods. Motivated by this, we study a model of price competition in which equilibria rarely exist. We seek to analyze the welfare, despite the nonexistence of equilibria, and present welfare guarantees as a function of the market power of the sellers.

We first study best response dynamics in markets with sellers that provide a homogeneous good, and show that except for a modest number of initial rounds, the welfare is guaranteed to be high. We consider two variations: in the first the sellers have full information about the buyer's valuation. Here we show that if there are $n$ items available across all sellers and $n_{\max}$ is the maximum number of items controlled by any given seller, then the ratio of the optimal welfare to the achieved welfare will be at most $ \log\left(\frac{\supp}{\supp-\suppmax + 1}\right)+1$. As the market power of the largest seller diminishes, the welfare becomes closer to optimal. In the second variation we consider an extended model in which sellers have uncertainty about the buyer's valuation. Here we similarly show that the welfare improves as the market power of the larger seller decreases, yet with a worse ratio of $\frac{\supp}{\supp-\suppmax + 1}$. Our welfare bounds in both cases are essentially tight. The exponential gap in welfare between the two variations quantifies the value of accurately learning the buyer's valuation in such settings.

Finally, we show that extending our results to heterogeneous goods in general is not possible. Even for the simple class of $k$-additive valuations, there exists a setting where the welfare approximates the optimal welfare within any non-zero factor only for $O(1/s)$ fraction of the time, where $s$ is the number of sellers.
\end{abstract}
\category{J.4}{Social and Behavioral Sciences}{Economics}
\category{F.2.0}{Analysis of Algorithms and Problem Complexity}{General}
\terms{Algorithms, Economics, Theory}
\keywords{Price Competition, Best Response Dynamics, Welfare Guarantee}
\acmformat{Babaioff, M., Paes Leme R., and Sivan B. 2014. Price Competition, Fluctations and Welfare Guarantees.}

\maketitle{}

\section{Introduction}
\label{sec:intro}
Price fluctuations have been observed in a variety of markets: from
the traditional retail market for gasoline (\citet{gasoline_1, gasoline_2}) all
the way to novel online marketplaces (\citet{EdelmanO2007}).
A recent WSJ article \cite{wsj_article} tracks the price of a microwave
across different online retailers (Amazon, Best Buy and Sears) and observes
sellers constantly adjusting prices leading to fluctuations.  The article
remarks that frequent price adjustments which used to be confined to domains
such as airline and hotel pricing are becoming increasingly common for all
sorts of consumer goods.
In this paper, we seek to analyze the efficiency
of such markets despite the lack of convergence to equilibrium. One interesting
feature is that the usual tools used in Price of Anarchy analysis (such as the
smoothness framework) are not available and instead one needs to directly
analyze the dynamics that leads to price fluctuations.

We consider a scenario with multiple sellers, each holding a set of goods, where
each seller's strategy is to set a price for each of his goods.
After prices are set, a buyer with a given valuation function over sets of goods (possibly
representing the aggregate demand of many buyers), chooses an optimal bundle of
items.
Sellers move sequentially and in each time period one of them responds to
the demand and the other sellers' prices by posting prices that are myopically
optimal. The supply for each seller refreshes every time period to the seller's original set.
In this paper we first study the special case of homogeneous goods, in
which all items are identical and then explore the case of heterogeneous goods.

As an illustration consider web-publishers. Each receives a fixed
number of impressions per day and sells those via posted prices\footnote{in
practice those are sold via an auction where the publisher can set the reserve price. Using the
reserve as a posted price, however, is a good first approximation to this
scenario.}. After prices are posted, a DSP (Demand Side Platform, typically a
network representing many advertisers), acquires impressions according to the
aggregate demand expressed by the advertisers. Each publisher (seller) can
observe the supply sold in each day and adjust his price. We are interested in
the dynamic that arises from each seller repeatedly updating his price to best
respond to other sellers. Below we discuss other markets
{with similar characteristics:}

\begin{itemize}
\item \emph{retail market for gasoline}: in each day a gas station can
sell a fixed amount of gas that it is capable of storing in its tank.
Periodically, stations update their price in response to the observed
consumption of the previous period as well as the prices posted by other gas
stations.
\item \emph{market for electronic components}: companies manufacturing
phones and other electronic devices typically buy component parts (such as
chips or flash memory) from other companies specialized in those. Typically
those suppliers update prices to respond to the demand of the buyer and to other
sellers.
\item \emph{airline tickets}: prices of airline tickets are the prototypical
example of price fluctuation arising from fierce competition. According to
\cite{economist_volatility}, there are roughly $1.86$ million price updates
per year for flights between New York City and San Francisco (across all
airlines, days and fare classes).
\end{itemize}
Clearly any real world example is much more complex than our simplified model.
Electronic components might not be completely identical, thus not being perfect substitutes.
Airlines sell tickets for the same flight during a long period of time, and
demand distribution shifts over time. Moreover, they experience seasonality and
many other issues.  Our model abstracts away all these issues in order to gain
tractability and isolates the price competition aspect of such settings.

In modeling price updates we assume that each seller maintains a price for each
of his units and at any given point in time, an arbitrary seller is allowed to
change his prices. In this paper we assume that sellers use \emph{myopic
best-response}, i.e., they update their prices to optimize their revenue in
the next time period based on their belief about the valuation function of the buyer
and on the current prices of other sellers. In the first part of the paper, we
consider the \emph{Full Information case}, in which each seller is certain about
the valuation of the buyer.
{In the second, we consider the \emph{Uncertain Demand case} in which
the valuation is still fixed over time, yet the sellers do not know it exactly.
Rather, each seller has a belief expressed as a distribution over valuation
functions of the buyer (and this belief does not rule out the actual
valuation).}
In the Uncertain Demand case, we still assume that each  seller
myopically responds to
the prices of the other sellers given his current belief about the demand,
{which is formed from his initial belief (prior) using all observations from
previous time-steps.}

{Myopic best response dynamics has been extensively studied for many repeated
game settings. It is attractive as a model of situations in which agents act
rationally in the short run, but lack deep understanding of the game and the
implications of their behavior in the long run. There is no coordination issue
and the dynamics is distributed in nature. In the Uncertain Demand case we still
assume that sellers are completely myopic and disregard the long term
implications of learning. Nevertheless, once observing new information about the
demand, they do update their belief. The myopic model allows us to focus on the
features arising purely from competition.}

\paragraph{Our results and techniques for homogeneous goods}
{Our paper relates to the model of price competition studied in
\citet{Babaioff14}.}
In that paper the authors note that if each seller holds only one item, {an
efficient} pure Nash equilibrium (NE) always
exists for any combinatorial valuation of the buyer (even when items are
not identical).\footnote{When the valuation is monotone there is always an
equilibrium in which all items are being sold.} They also show that if the
valuation of the buyer has
decreasing marginals and one seller holds all items
(monopolist case), then an equilibrium also exists but it can have welfare
$\Omega(\log n)$ factor away from the welfare of the optimal allocation, where
$n$ is the number of items.

If each seller holds multiple items but no seller is a monopolist, however, a
pure Nash equilibrium might not exist
(actually, we show that it fails to exist
unless there is an efficient Nash equilibrium, which occurs only in
very restricted settings). We illustrate in Figure~\ref{fig:price_cycle} a best
response sequence that cycles.
Pure Nash equilibrium might fail to exist even if the items are homogeneous, i.e., the valuation of a buyer
just depends on the number of items he acquires. This fact is in line with
observations that for various markets that
{can be approximately captured by}
this model, prices fluctuate
rather than converge to an equilibrium.

\begin{figure}
\centering
\includegraphics[scale=.45]{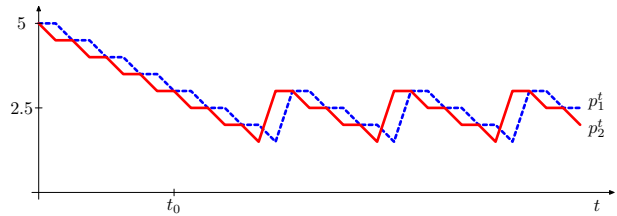}
\caption{The time sequence of prices posted by sellers $1$ and $2$, each with 2 items, when the buyer have a valuation with marginals $5,5,3,1$.
Notice that a best-response sequence cycles forever rather than converge to an equilibrium.}
\label{fig:price_cycle}
\end{figure}

Despite the lack of equilibrium, we still want to provide \emph{welfare
guarantees} for this setting. The guarantees we provide are as follows: we
say that a game has \emph{Eventual Welfare Guarantee} (EWG) $\alpha$ if
{for any initial prices and}
every sequence of best-responses, after finitely many \emph{rounds} of price
updates (where a round is a sequence of price updates in which every seller
updates his price at least once (\citet{NisanSVZ11})),
the welfare is at least an $1/\alpha$ fraction of the optimal welfare
\emph{in every time-step}.
In particular, we show that if there is a total supply of $n$ items and
$n_{\max}$ is the maximum number of items controlled by any given seller, then
the EWG is smaller than $1+\log (\supp/ (\supp-\suppmax +1))$
.\footnote{All logarithms in this paper are natural
logarithms (base $e$).}
{This high welfare is reached quite fast, after number of rounds that is
linear in the number of sellers and the inverse of a discretization parameter.
An immediate corollary is that} if no seller controls more than a $c$ fraction
of the goods, the EWG is {smaller than
$1-\log (1-c)$. These results present a bound on the welfare loss as a function
of the market power of the seller with the largest supply.
As that seller controls a smaller share, efficiency increases (as expected), not
only at equilibrium but also when prices fluctuate (after enough price
updates).}
For the Uncertain Demand case, we show the exponentially worse EWG 
bound of $\supp/ (\supp-\suppmax +1)$. If every seller controls at most a $c$ fraction of the
market, this implies an EWG of $1-c$. We also show that both bounds on the
Eventual Welfare Guarantee are tight up to constant factors.

\paragraph{Our results for heterogeneous goods} The dynamics changes quite dramatically in the setting of heterogeneous goods.
When the goods are heterogeneous, there is a whole spectrum of buyer valuations, some very complicated. One would naturally be interested in understanding simple valuation first and then move to more complicated ones.
Two extreme valuations that are simple yet incomparable are unit-demand and additive valuations.
These two extremes are well behaved: they both have an efficient pure Nash equilibrium. Given this, we move slightly away from these extremes to explore the simplest valuations that generalize both additive and unit-demand valuations, which are
$k$-additive valuations: given a bundle, the buyer's valuation is additive over the $k$ most valuable items in the bundle.

First, we show that for $k$-additive valuations an efficient pure Nash equilibrium need not always exist. Interestingly, in a heterogeneous goods setting, we observe that increasing competition by splitting the goods held by a monopolist among different sellers could hurt welfare! Whenever a pure Nash  equilibrium exists, the welfare obtained is very close to optimal welfare, and we obtain a tight characterization of welfare in any pure Nash equilibrium. But when it doesn't exist, the welfare obtained is very poor in a strong sense: we show that the Eventual Welfare Guarantee is unbounded. Further, there exist best response sequences with $s$ sellers where a non-zero approximation to optimal welfare is obtained only in $O(\frac{1}{s})$ fraction of the time. This implies that the positive results for the homogeneous case cannot be extended to heterogeneous items, even when restricting to the relatively simple $k$-additive valuations.

\paragraph{Relation to other equilibrium concepts}

Before we proceed, we would like to comment on our choice of  Eventual Welfare
Guarantee rather than more traditional alternatives. A common choice in settings
where pure Nash equilibria don't exist is to study mixed Nash equilibria, in
which each agent plays according to a distribution. A mixed Nash would be
characterized by a setting in which each agent has a distribution over actions
(in our case, each seller has a distribution over prices) and in each round he
plays an independent sample from this distribution. Although a
game-theoretically sound concept, it doesn't seem to reflect the empirical
behavior of sellers in various markets, since the prices used in a certain time
period are very related to prices used in the previous one. {This is unlike the
case of sellers sampling independently from their mixed strategies, which
results in frequent large changes of prices, unrelated to recent prices.} The
same argument can be made about other static equilibrium notions such as
correlated and coarse correlated equilibrium.

A popular dynamic alternative is to consider outcomes of no-regret dynamics --
see \citet{nadav10} and \citet{immorlicaMP10} for a discussion of no-regret
outcomes in price competition and  \citet{Roughgarden09,SyrgkanisT13} for
examples of this analysis in various other settings. While those model
strategy updates that are dependent on the outcome of previous steps, the
analysis generates welfare guarantees \emph{on average}, i.e., if agents play
long enough, the average welfare is at least a certain factor of the optimal
welfare.  Our analysis provides guarantees that hold for {\em each} time period
after a certain point. The concept of \emph{sink equilibria} proposed by
\citet{GoemansMV05} is closely related yet weaker, it only provides {\em
average} guarantees over the Markov chain defined by the best response graph.

A main difference between our approach and the
no-regret learning/sink-equilibria approach is that in the latter, the
welfare bounds are studied by analyzing a static \emph{limit object}. For the
no-regret learning, the limit object is typically a
coarse correlated equilibrium and for sink equilibria, it is the stationary
state of a certain Markov chain. One drawback of such approaches is that
although they provide a good description of the limiting behavior of the
process, they don't offer intuitions on the transient behavior of the dynamic.

\paragraph{Related work}
The work in out-of-equilibrium versions of price competition started by the work
of Edgeworth \cite{Edgeworth_monopoly} who criticized the prediction in the
oligopoly models of Cournot and Bertrand that prices converge to an equilibrium.
Edgeworth pointed out  that various small changes in model caused the
equilibrium no longer to exist. This idea was later developed by Shubik
\cite{shubik-book}, Shapley \cite{shapley57} and others. We refer to
\cite{vives} for a survey on the modern theory of oligopolies. Various early
concepts on oligopolies and price fluctuations received a formal treatment in a
sequence of papers by Maskin and Tirole
\cite{maskin_tirole_1,maskin_tirole_2,maskin_tirole_3}.
Our paper studies similar models but seeks to quantify the welfare loss from the
resulting dynamic on a per-step basis instead of characterizing equilibria. 

Another stream of related work is the study of the outcome of best response
dynamics in algorithmic game theory: \citet{MirrokniV04} study the relation
between Nash equilibria and outcomes of best response dynamics. They show that
even when all pure and mixed Nash equilibria have high welfare, there can
exist sequences of best response dynamics which cycle through states of very low
welfare. \citet{Roughgarden09} shows that for the special case of potential
games, the smoothness framework can be used to provide efficiency guarantees for
best-response dynamics. Recently, \citet{FanelliFM12} study the rate in which
best-response dynamics converge for potential games.

Also related is the work of \citet{NisanSVZ11} who characterize a class of games
for which it is rational for agents to best respond. The authors also look at
the same question from the mechanism design perspective and seek to design games
that converge to desirable outcomes. More recently, \citet{EngelbergFSW2013}
give conditions for games converge under any sequence of best responses.

Another popular alternative, that relaxes best-response dynamics and models
bounded rationality and bounded computational abilities, is logit
dynamics~\cite{McFadden74}. Here, at each time-step, one among the $n$ agents in
the game is chosen uniformly at random, and the chosen agent $i$ plays a
strategy $s$ with probability proportional to $e^{\beta u_i(s,s_{-i})}$, where
$\beta$ is a parameter that models the level of rationality, with higher $\beta$
being more rational, and $u_i(\cdot)$ is the utility of agent $i$. This dynamics
defines a Markov chain over the strategy profiles, and this Markov chain has a
unique stationary distribution, which is called as the logit
equilibrium~\cite{AFPP13}. Given this, the primary questions that are studied
here are the rate at which the dynamics converges to this logit
equilibrium~\cite{AFPPP11}, the existence of metastable distributions in cases
where the dynamics takes a long time to converge~\cite{AFPP12}, what happens
if players can sumultaneously update their strategies~\cite{AFPPP12, AFPPP13}.
Although the literature in this area is taking steps to explore the transient
dynamics in cases where the dynamics takes a long time to converge~\cite{FV12,
AFPP12}, the main difference from our approach is that our work explicitly
studies the transient dynamics and gives guarantee on a per-round basis, rather
than the stationary or metastable distribution that is accomplished in time. 

While in this paper we study the welfare guarantees in presence of price
competition, one could also study seller revenue in presence of price
competition. See~\cite{GKT99, GK99, KG99} for a one-shot version Nash equilibrium
anaylsis of a price competition game between sellers, using various price update
strategies.

\section{Preliminaries}
\label{sec:prelim}
\subsection{Homogeneous goods}
\paragraph{Pricing game with full information} A full information pricing game is
defined by $\numSell\geq 2$ sellers, with each seller $i\in [\numSell]$ holding $\suppi$ units of a homogeneous good {for which he has no value},
and a buyer with a publicly known valuation $\val:\{0,1,\hdots,
\supp\} \rightarrow \R_+$, where $\supp := \sum_{i=1}^{\numSell} \suppi$ denotes
the total supply of the good.
{We define $\suppmax$ to be the maximal supply of any single seller,
$\suppmax=\max_{i=1}^s \suppi$.}
We assume that the valuation function is
(i) monotone, i.e.,  $v(k) \leq
v(k+1), \forall k$; (ii) is normalized at zero, i.e., $v(0) = 0$ and (iii)
obeys the decreasing marginal returns property, i.e., $m_1 \geq m_2 \geq \hdots
m_n$ where we denote by $m_k = v(k) - v(k-1)$ the $k$-th marginal value.
We also use $v(\ell|k)$ to denote the marginal value of $\ell$ items
with respect to a set of $k$ items, i.e.,  $v(\ell|k)= v(k + \ell)-v(k)$.

Let $\Suppi =\{1+\sum_{r=1}^{i-1}\suppi[r],\dots,\suppi+\sum_{r=1}^{i-1}\suppi[r] \}$ denote
the set of $\suppi$ units that seller $i$ holds. Let $\pricej$ denote the price
of item $j$, and let $\pricesi$ denote the vector of prices posted by seller
$i$. The utility of seller $i$ is his total revenue, given by
$\util_i^{\pur^v}(\prices) = \sum_{j \in \puri^v(\prices)}\pricej$ where
$\puri^v(\prices)$ is the bundle of goods purchased by the buyer with valuation $v$ from seller
$i$, when facing price vector $\prices$.

We assume that the buyer responds to prices by picking a bundle maximizing
his total value minus his payment. {Since the items are homogeneous and
the marginals decreasing, the optimal choice of the buyer is to follow a
\emph{Greedy Algorithm}\footnote{The fact that a greedy
algorithm computes the utility maximizing bundle for the buyer holds more
generally when the items are heterogeneous and the buyer has a gross substitutes
valuation. See~\citet{gs_survey}.}: start with an empty set of items and
recursively choose one of the cheapest items available. If the price of this item is smaller than the marginal value, include the item in the selected set and repeat.
If the price is larger, then stop.
If the price of this item equal to the marginal value, either include the item in the selected set and repeat, or stop.
Any specific Greedy Algorithm needs to determine how to break ties between items of the same price,
as well as  deciding when to stop picking items for which the price is equal to the marginal value.
In fact, all optimal sets can be described as
outcomes of a greedy algorithm for some tie breaking rule and some rule of picking items with zero contribution to the utility of the buyer.
For clarity of presentation we fix the tie breaking rule used by the buyer as follows\footnote{See Appendix~\ref{app:randomTie} for a short discussion of random tie breaking.}:
between two items of the same price he breaks ties lexicographically (consistent with the lexicographic order over the sellers).
Also, he will always pick items with zero marginal utility, i.e., items for which the
price is equal to the marginal value.  We define
$\pur^v(\prices)$ as the set purchased by the buyer with value $v$
when facing a price vector $\prices$ according to the specific greedy algorithm discussed above.
When this is clear from the context, we omit $v$ and
refer to $\pur^v(\prices)$ as $\pur(\prices)$. }

We define the welfare associated with a strategy profile as the sum of utilities
of all agents involved (the buyer and all the sellers): $W(\prices) =
v(\vert X^v(\prices) \vert)$.

\paragraph{Discretization} We choose in this paper to model all valuations and
prices as integral multiples of a (smallest) monetary unit $\disc$ (e.g.,
$1$ cent in the case of US dollars).
{To simplify, we also assume that $1/\disc$ is an integer.}
We are primarily interested in studying price update
dynamics in this paper, where sellers frequently undercut
each other. A non-trivial change in price is necessary to meaningfully describe
the rate of progress of such update dynamics. Moreover, the constraint of
pricing in multiples of a fixed quantity is present in
essentially every (electronic or not) commercial platforms.

\paragraph{Notation}
We refer to the discretized domain as $\epsilon \Z_+ := \{0, \epsilon, 2
\epsilon, 3 \epsilon, \hdots \}$.
We assume from this point on that for all $k \in \{0,1,\hdots, n\}$,
$v(k) \in \epsilon \Z_+$ and that the strategy of a seller consists of choosing
a vector  $\prices_i \in \epsilon \Z_+^{n_i}$.

We will refer to the set of valuation functions as $\V$, i.e., $\V$ is the set
of valuation $v: \{0,1,\hdots, n\} \rightarrow \epsilon \Z_+$ that satisfy (i)
monotonicity, (ii) normalization at zero and (iii) decreasing marginal values.

\paragraph{Eventual Welfare Guarantee} The traditional solution concept for full
information games is that of the pure Nash equilibrium: a price vector is an
equilibrium if no seller can change the prices of the goods he controls and
improve his utility. Formally:\\
${\bf Nash}^{\pur^v} = \{\prices\in \epsilon
\Z_+^n:\util_i^{\pur^v}(\pricesi,\pricesi[-i])
\geq \util_i^{\pur^v}(\tilde{\prices}_{i},\pricesi[-i]),
\forall i\in
[\numSell],
\tilde{\prices}_{i}\in \epsilon \Z_+^{n_i}\}.$\\
As we show in Section~\ref{sec:fullInfo}, a pure Nash equilibrium does
not always exist in the pricing game defined above.
Given the absence of a pure strategy Nash equilibrium, and the motivation to
model price fluctuations
observed in online markets, we propose a notion on how to measure welfare
which we call \emph{Eventual Welfare Guarantee}.

First, we define a \emph{best response sequence}: We say that a sequence of
vectors of prices $\prices^0, \prices^1, \prices^2, \hdots, \prices^t, \hdots$ is a
{\em best-response sequence} if for each time-step $t$ there is a seller $i(t)$ such
that $\prices^t_{i(t)} \in  \br_{i(t)}^{X^v}(\prices^{t-1}_{-i(t)})$ and
$\prices^t_{i'} = \prices^{t-1}_{i'}$ for all $i' \neq i(t)$, where
$\br_i^{X^v}(\pricesi[-i])$ is the set of best-responses of seller $i$:
$$\br_i^{X^v}(\pricesi[-i]):= \argmax_{\pricesi \in  \epsilon \Z_+^{\suppi}}
\util^{\pur^v}_i(\pricesi,\pricesi[-i])$$
When the seller has more than one best-response, we will assume ties are broken
favoring uniform price vectors over discriminatory prices,
favoring the sale of more items and any remaining tie is broken according to an
arbitrary tie breaking rule. At the end of each time-step $t\geq 0$, the buyer
buys his favorite bundle $\pur^v(\prices^t)$, and
the sellers' supplies are replenished so that seller $i$ has $\suppi$ units.

We say that a best-response sequence is {\em fair}, if every seller is allowed to
best-respond infinitely often, i.e, for all $i' \in [s]$, $\{t; i(t) = i'\}$
is an infinite set. Given a fair best response sequence, we divide the time in rounds,
where a {\em round} is a minimal interval in which each seller has a chance to play at
least once. Formally, we define the beginning time $r_\ell$ of the
$\ell$-th round recursively:
$r_0 = 0$, and $r_\ell = \min \{r; \forall i' \in [s] \exists t \in [r_{\ell-1}, r) \text{ s.t. } i(t) = i' \}$.

We say  that a game has {\em Eventual Welfare Guarantee $\alpha$} if there is
a finite $\tilde{\ell}$ such that from round $\tilde{\ell}$ onwards, for every fair best-response sequence, the
welfare is at least a $\frac{1}{\alpha}$ fraction of the optimal welfare $W^* = v(n)$. Formally: for every $t \geq r_{\tilde{\ell}}$,
$W(\prices^t) \geq \frac{1}{\alpha} W^*$.

\paragraph{Pricing game with uncertain demand}
In the basic model we assumed that the valuation of the buyer is fixed and
known by all the sellers. In the uncertain demand model we still assume that the
buyer picks an optimal set with respect to some fixed true underlying valuation
$v^*$ but this valuation is not known to the sellers. Instead, each seller has a
(possibly different) belief expressed as a probability distribution over
possible valuations of the buyer. We assume the beliefs are consistent in the
sense that they assign non-zero probability to the true valuation $v^*$.

An important special case of the model is the case where there is a common prior
and the valuation of the buyer is drawn from this common prior at time-step zero
and used for all subsequent time-steps. In this special case, the prior beliefs
are consistent by definition.

We model the belief of seller $i$ at time $t$ as a probability distribution
$\B_i^t$ over $\V$, the set of all valuation functions. Given $v \in \V$, we
denote by $\B_i^t(v)$ the probability assigned
by seller $i$ to valuation $v$ at time $t$. We call $\B_i^0$, the prior of seller
$i$ and say that the prior is consistent if $\B_i^0(v^*) > 0$.
We observe that since $\V$ is a countable discrete set, a seller believing that
some valuation is feasible is equivalent to assigning positive probability to
it, unlike in continuous settings.

Given a sequence of prices $\prices^0, \prices^1, \prices^2, \hdots$,
in each
time-step $t$, the buyer faces prices $\prices^t$ and he
picks his optimal set $X^{v^*}(\prices^t)$ according to
his true valuation function $v^*$. Upon observing the set $S^t$ picked by the
buyer in time-step $t$, each seller $i$ updates his belief
{in a Bayesian way to take into account the purchasing information at time $t$}

$$\B_i^t(v) = \frac{ \one \{ S^t = X^v(\prices^t)\} \cdot \B_i^{t-1}(v)
}{\sum_{v' \in \V} \one \{ S^t = X^{v'}(\prices^t)\} \cdot \B_i^{t-1}(v')}$$
where $\one\{\cdot\}$ is the indicator function.
The assumption that for every seller has a consistent prior guarantees that the update is well defined.

We say that the sequence of prices is a {\em best response sequence} if for each time-step $t$, there is an
seller $i(t)$ such that $\prices_{i(t)}^t \in \br_{i(t)}^{X,
\B_{i(t)}^{t-1}}(\prices^{t-1}_{-i(t)})$ and $\prices_{i'}^{t} =
\prices_{i'}^{t-1}$ for $i' \neq i(t)$, where:
$$\br_{i}^{X,\B_i^{t-1}}(\prices_{-i}) := \text{argmax}_{\prices_i} \E_{v \sim
\B_i^{t-1}} \left[ u_i^{X^v}(\prices_i, \prices_{-i}) \right]$$

In other words, the {best responding seller picks a price vector that optimizes
his expected
utility according to his belief and given the prices of all other sellers.
We stress that sellers are behaving myopically and are not explicitly trying to
learn the buyer valuation (say by exploring various prices that are not
currently optimal).}

Given the definition of a best response sequence in the uncertain demand
setting, we define a fair best response sequence and the concept of Eventual
Welfare Guarantee in the same way as for a full information game.

\subsection{Heterogeneous goods}
\paragraph{Pricing game with full information} We define a full information pricing game with heterogeneous goods analogously to that for homogeneous goods. Our terminology doesn't distinguish between different goods and multiple copies of the same good. There are $s$ sellers with each holding an arbitrary subset of the universe of goods. We assume the buyer's valuation function to be $k$-additive: his value for a bundle of goods is the sum of the values of his $k$ most valuable goods in that bundle. With a such a valuation function, the optimal choice of a buyer while responding to a vector of prices is to still follow a {\em Greedy Algorithm}: start with an empty set, and recursively pick the item that gives the largest non-negative utility (item's value minus its price), and continue till the set is of size $k$ or all items offering non-negative utility are exhausted. As before, while there are several ways of breaking ties in a Greedy Algorithm, for clarity we stick to the following tie-breaking rules for the buyer. If at any step in the greedy algorithm multiple items offer the same non-negative utility, the buyer prefers the item from the lexicographically most preferred seller. Among multiple items from the same seller that offer the same utility, the buyer prefers the most expensive item. The buyer buys items of zero utility (as long as that doesn't make his purchased set larger than size $k$). Any remaining tie is broken arbitrarily.

\section{Full Information Game for homogeneous goods}
\label{sec:fullInfo}
We start the section by showing  that pure Nash equilibrium might fail to exist
in the full information game, and then present our upper bound on the Eventual
Welfare Guarantee of this game. We conclude by showing that the bound is tight.

\subsection{Nonexistence of Pure Nash Equilibria}
Before proceeding to discuss Eventual Welfare Guarantee of the full information
pricing game, we show that pure Nash equilibria (NE) need not always exist for
such games. When every seller has only one unit ($s=n$), there is always
a pure Nash equilibrium in which every seller prices at the $n$-th marginal $m_n$ and sells his
unit. The next example shows this is no longer the case when sellers have at
least two units each, and this is true for any tie breaking rule. We prove this in Appendix~\ref{app:fullInfo}.

\begin{example}
\label{example:no-Nash-eq}
Consider a setting with two sellers, each holding 2 units of the same good.
The buyer has diminishing marginal valuations, of $5,5,3,1$, i.e.,
$\val(1) = 5$, $\val(2) = 10$, $\val(3) = 13$, $\val(4)=14$. This setting does not have any pure Nash equilibrium.
\end{example}

{In Figure \ref{fig:price_cycle} we depict a price cycle resulting from a
best-response sequence between the two sellers in the previous example. In
time-step $t$, seller $1$ posts a price of $p_1^t$ for both of his items and in the next time-step $t+1$, seller
two posts a price of $p_2^{t+1}$ for his items.
We assume that seller $2$ best-responds in odd time-steps and seller $1$ in even ones.
We set $\disc=0.5$ and break ties toward seller 1.
Note that within $8$ time-steps (marked as $t_0$ in the figure) $3$ units are sold,
and that the sold quantity will stay at $3$ forever in this example (although prices will cycle forever).
The welfare when $3$ items are sold is $13/14$ fraction of the optimal welfare (93\%).
For this example, our main theorem presents a guarantee of a slightly smaller
fraction: $(1+\log(4/3))^{-1}$ = 77\%.}

The fact that there is no pure Nash equilibrium for the above example is no
coincidence.
We next show that the full information game will rarely have a pure
equilibrium.
Unless there is an efficient equilibrium in which all units are sold at the
marginal price of the last unit $m_n$,
there is no pure Nash equilibria at all.
This is in line with Edgeworth's original observation
\citeyear{Edgeworth_monopoly} for a different oligopoly model.

\begin{proposition}
Assume that the buyer has a decreasing marginal valuation with
$m_n=\val(n)-\val(n-1) > 0$ and there are at least $2$ sellers. Fix an
arbitrary deterministic tie breaking rule for the buyer and assume that $\disc>0$ is small
enough. Either the only pure Nash equilibrium is for all sellers to
price at $m_n$ and sell all units (an efficient equilibrium), or there is no
pure Nash equilibrium.
\end{proposition}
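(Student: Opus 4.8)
The plan is to establish the logically equivalent claim that \emph{every} pure Nash equilibrium $\prices$ is the profile in which all $\supp$ units are sold, each at price exactly $m_\supp$; the stated dichotomy then follows (it is vacuous when no equilibrium exists). Throughout I will use the standard description of the greedy buyer: sorting the offered prices $\price_{(1)}\le\dots\le\price_{(\supp)}$, she buys the $k$ cheapest units, where $k$ is the largest index with $\price_{(k)}\le m_k$ --- this is well defined because $\price_{(\cdot)}$ is nondecreasing and $m_{\cdot}$ nonincreasing, and because (as elsewhere in the paper) the buyer purchases a unit priced exactly at its marginal value, so ``exactly $k$ units sold'' forces $\price_{(k+1)}>m_{k+1}\ge m_\supp>0$. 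I will also use the elementary fact that raising one unit's price by $\eps$ raises exactly one coordinate of the sorted price vector by $\eps$ (namely $\price_{(c)}$, where $c$ is the number of units priced at most the old price), and symmetrically for lowering one price.

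First I would show that in any equilibrium all $\supp$ units are sold. Suppose not, so only $k<\supp$ units sell and $\price_{(k+1)}>m_{k+1}$. If some seller sells nothing, he deviates by pricing one unit at $\price_{(1)}-\eps$ (where $\price_{(1)}\le m_1$ since some unit sells, and if $k=0$ any seller can simply price a unit at $m_1-\eps$): that unit is then the unique cheapest, is bought, and yields positive revenue, a contradiction. Otherwise every seller sells at least one unit; fix a seller $i$ holding an unsold unit, so he sells his cheapest $t<\suppi$ units, the dearest of them at some price $Q$, and all his unsold units are priced $\ge Q$. The key is that $i$ has a profitable deviation of the form ``reprice \emph{all} his units to a common value $q$ just below $Q$'' (or, if $Q\le m_{k+1}$, just below $m_{k+1}$): after such a reprice $i$'s units occupy a consecutive block at the bottom of the sorted order, and comparing the number of positions in that block against the number of marginals that are $\ge q$ and the number of rival units priced below $q$ shows $i$ ends up selling at least as many units as before --- in fact strictly more, except in one degenerate alignment (his sold units are all priced exactly $Q$, no rival unit priced $Q$ sits among them in the order, and the sold set is precisely his $t$ units together with the rival units priced $<Q$). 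In that degenerate case I would instead let a rival who sells a unit strictly below the clearing level raise that unit's price by $\eps$ and check, via the elementary fact above, that it stays sold; this is again a strict improvement. Since in every case the gained (or repriced) unit is worth at least $m_\supp-\eps$, a fixed positive constant, while all the per-unit price changes are $O(\eps)$, the deviation is strictly profitable once $\eps$ is small enough (e.g.\ $\eps<m_\supp/\supp$), contradicting equilibrium.

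Next I would show that in any equilibrium every price equals $m_\supp$. With all $\supp$ units sold, the dearest unit is at position $\supp$, hence priced $\le m_\supp$, so every price is $\le m_\supp$. If some unit is priced $<m_\supp$, let $i$ own it and let $i$ reprice all his units to $m_\supp$: the other $\supp-\suppi$ prices are unchanged and $\le m_\supp$, so after the deviation every price is $\le m_\supp\le m_j$ for $j\le\supp$, whence the buyer again buys all $\supp$ units (using that a unit priced at its marginal value is bought) and in particular all of $i$'s units still sell --- now at $m_\supp$, with at least one strictly higher than before, so $i$'s revenue strictly increases. This contradicts equilibrium, so every price equals $m_\supp$, the efficient profile in the statement.

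The straightforward pieces are the ``some seller sells nothing'' case and the $m_\supp$-pricing step; the main obstacle is the ``every seller sells at least one unit'' case, i.e.\ pinning down a profitable deviation in every configuration with $k<\supp$ units sold. The difficulty is that the deviating seller may be selling several of his units at \emph{high} prices while also holding unsold units, so a naive undercut to capture one more unit can cannibalize his existing sales; handling this cleanly is what forces the ``reprice-all vs.\ a rival's $\eps$-raise'' case split, and it is where the hypothesis ``$\eps$ small enough'' is genuinely used (to make the unit gained, of value $\ge m_\supp-\eps$, outweigh the $O(\eps)$ losses). Finally, note that the proposition only claims a pure equilibrium \emph{must} coincide with this efficient profile; whether that profile is actually an equilibrium is instance-dependent and usually fails --- a seller controlling many units can profit by raising the price on a subset of them and selling fewer units at a higher price, an Edgeworth-type deviation --- which accounts for the ``or there is no pure Nash equilibrium'' alternative.
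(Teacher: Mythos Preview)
Your overall strategy---first show that any pure equilibrium must sell all $\supp$ units, then pin every price at $m_\supp$---matches the paper's, and your second step is essentially the paper's opening observation. The difference is entirely in how the ``not all units sold'' case is handled.

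The paper's argument is shorter because it inserts one preliminary lemma before attempting any undercut: in any equilibrium, every sold unit is priced in $\{h-\eps,h\}$, where $h$ is the \emph{global} highest sold price (otherwise the owner of a cheaper sold unit could raise it to $h-\eps$ and, by greedy, still sell). This immediately gives $h\ge m_\supp$, and then a single deviation---the seller $i$ with an unsold unit reprices everything to $h-2\eps$---finishes the job: $i$ now undercuts every other sold unit, captures at least $\min(\suppi,q)$ sales (strictly more than before, since some rival was selling at least one), loses at most $2\eps$ per previously sold unit, and gains at least $h-2\eps\ge m_\supp-2\eps$ on each new one.

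Your route instead anchors on $i$'s \emph{own} highest sold price $Q$, which is what forces the case split and the ``degenerate alignment'' exception. Two things are underspecified there. First, the characterization of the degenerate case (``$i$'s sold units all at $Q$, no rival at $Q$ interleaved, sold set $=$ $i$'s $t$ units plus rivals priced $<Q$'') is asserted rather than derived; showing that every failure of the reprice-to-$q$ deviation lands in exactly this configuration requires tracking where $Q$ sits relative to $p_{(k)}$ and $m_{k+1}$, which you do not do. Second, your fallback in the degenerate case---``let a rival who sells strictly below the clearing level raise by $\eps$''---is precisely the paper's near-uniform-price lemma in disguise. Had you proved that lemma up front, the case split would evaporate: you would know $Q\ge h-\eps\ge m_\supp-\eps$, and your undercut to $Q-\eps$ would coincide (up to $\eps$) with the paper's undercut to $h-2\eps$. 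So the ingredients are all present in your sketch; the paper's ordering---prove near-uniform sold prices first, then do one clean undercut---is what collapses the argument to a few lines and avoids the unverified dichotomy.
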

\begin{proof}
First observe that in any efficient Nash equilibrium the price of every
seller must be $m_n$ as otherwise either the buyer will drop some units (if the
price is larger), or a seller can increase his revenue (if the price is lower).

Assume that $\disc<m_n/(4n)$.
Consider any pure Nash equilibrium which is not efficient, that is, there is at
least one seller $i$ that is not selling his entire supply. Assume that the
buyer is buying $q$ units in total.

We first observe that all items must be sold for about the same price (prices of
sold units can differ by at most $\disc$).
Assume that in a pure equilibrium the most expensive unit is sold at price $h$
(note that $h\geq m_n$).
Then any sold unit is priced either at $h$ or at $h-\disc$.
This is so as if some unit is priced at $p<h-\disc$ then the seller selling this
unit can increase his revenue by increasing the price of this unit to $h-\disc$
and still sell (as the buyer must be running a greedy algorithm as presented
above).

Next, observe that at any pure Nash equilibrium every seller is selling at least
one item and making positive revenue, as if a seller prices all his units at
$\disc<m_n$ the buyer will always buy from him. We conclude that other than
seller $i$  that is not selling all his units, there is at least one other
seller that sells at least one unit for price of at least $h-\disc$.
Consider a deviation by seller $i$ changing all the prices of his units to $h-2\disc$.
Since the buyer must be using a greedy algorithm, this seller must be selling
the minimum between $n_i$ and $q$, which is strictly larger than the quantity he
was selling before. His utility increased by at least $h-2\disc-2(n_i-1)\disc$
as he gains at least $h-2\disc$ for extra units sold, and loses at most
$2(n_i-1)\disc$ for units sold before and for which he has reduces his price.
As $\disc<m_n/(4n)$ it holds that $h-2\disc-2(n_i-1)\disc\geq m_n-2\disc n_i>
m_n-2n(m_n/(4n))>0$, thus this is a beneficial deviation. This contradicts the
fact that the this was a pure Nash equilibrium.
\end{proof}

\subsection{Bounding the Eventual Welfare Guarantee}

In this section we prove the main theorem in the paper, which is a bound on the
Eventual Welfare Guarantee of the full information game in the homogeneous goods setting. 

\begin{theorem}\label{thm:main-epoa}
Assume that the buyer has a decreasing\\
 marginal valuation.
The Eventual Welfare Guarantee of the full information game is at most
$1+\log\left(\frac{\supp}{\supp-\suppmax + 1}\right)$. Moreover, such
welfare is achieved after at most $s \cdot v(1) / \epsilon$ rounds.
\end{theorem}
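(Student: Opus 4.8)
The plan is to reduce the best-response dynamics to a combinatorial description of its eventual behaviour, bound the number of rounds needed to reach it, and then run a harmonic-sum welfare estimate on the resulting quantity. I begin by analysing a single best response. Given the prices $\prices_{-i}$ of the other sellers, the buyer runs the greedy algorithm, so the other sellers present seller $i$ with a nondecreasing ``residual supply curve'' recording the prices at which extra units can be offloaded onto the buyer. Using the tie-breaking rules for best responses (uniform prices favoured over discriminatory ones, more sales favoured), I would show that $\br_i(\prices_{-i})$ consists of pricing all of $i$'s units at a single price, and that an optimal such price is always of the form $m_k-\disc$ for some $k$, or a competitor's price, or a competitor's price minus $\disc$; that is, a best-responding seller either undercuts to grab units or raises his price to the highest level at which he still sells his chosen target quantity. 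This gives control of how the total quantity sold and the price levels move along a best-response sequence.

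The crux, which I expect to be the main obstacle, is the convergence claim: after at most $\numSell\cdot v(1)/\disc$ rounds the total quantity sold $q^t$ becomes a constant, say $\hat q$. Since nothing ever sells above $v(1)=m_1$, all relevant prices lie in the grid $\{0,\disc,\dots,v(1)\}$, and the plan is to identify a potential --- morally the price level of the marginal sold unit, tracked per seller --- that is monotone and changes by at least $\disc$ in every round that has not yet stabilised, until the partition of the sold units among the sellers freezes; this yields the bound $\numSell\cdot v(1)/\disc$ (one factor $v(1)/\disc$ for the price range, one factor $\numSell$ for the sellers). The delicate point is the separation of the stable feature (the sold quantity, equivalently the level of the marginal sold price) from the feature that genuinely keeps oscillating forever (which seller is currently undercutting which, as in Figure~\ref{fig:price_cycle}), and ruling out the quantity itself oscillating indefinitely.

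Once $q^t\equiv\hat q$, the welfare equals $v(\hat q)$ at every step, so the Eventual Welfare Guarantee statement reduces to the single inequality $v(\hat q)\ge v(\supp)/\bigl(1+\log\tfrac{\supp}{\supp-\suppmax+1}\bigr)$. For this I would establish two properties of the stabilised state. (a) $\hat q\ge\supp-\suppmax+1$: if strictly fewer units were sold, the seller holding $\suppmax$ units could undercut the others at a price just below the marginal $m_{\hat q+1}$ and strictly increase his revenue by selling an extra unit, contradicting stability. (b) For every $k$ with $\supp-\suppmax< k\le\supp$ one has $k\,m_k\le v(\hat q)$: in any best response the largest seller's revenue is at least what he earns by undercutting all competitors just below $m_k$, which --- since the buyer then picks up the largest seller's units first --- is roughly $k\,m_k$ when $k\le\suppmax$, and for the remaining values of $k$ one combines this deviation with the residual-demand deviation and the trivial fact that total payments never exceed the buyer's value $v(\hat q)$; everything here holds up to lower-order terms in $\disc$.

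Given (a) and (b) the estimate is routine. Since $\hat q\ge\supp-\suppmax+1$, property (b) yields $m_k\le v(\hat q)/k$ for every $k\in\{\hat q+1,\dots,\supp\}$, so
\[
v(\supp)=v(\hat q)+\sum_{k=\hat q+1}^{\supp}m_k\;\le\;v(\hat q)\Bigl(1+\sum_{k=\hat q+1}^{\supp}\tfrac1k\Bigr)\;\le\;v(\hat q)\Bigl(1+\log\frac{\supp}{\supp-\suppmax+1}\Bigr),
\]
which is the claimed bound. The genuinely subtle part of this last step is getting property (b) for the \emph{full} range $k\le\supp$ and not merely $k\le\suppmax$: a smaller range only delivers a bound with $1+\log\suppmax$ in place of the sharper $1+\log\tfrac{\supp}{\supp-\suppmax+1}$, so the real work is in squeezing enough revenue lower bounds for the largest seller out of the stabilised (and possibly cycling) price configuration.
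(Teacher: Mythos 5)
Your overall architecture — stabilise the quantity sold, then run a harmonic‐sum estimate against decreasing marginals — is the same as the paper's, and your final algebra matches Lemma~\ref{lem:welfare-bound} exactly. But the part you flag as the ``real work,'' namely property~(b), is where the paper makes a definitional choice that your proposal never makes, and your substitute argument for (b) does not close.

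The paper does not fix $\hat q$ abstractly as ``the eventually constant quantity.'' It identifies $\hat q$ with $\min_i(\supp-\suppi+\moni)$, where $\moni$ is the number of units a \emph{monopolist} would sell against the residual demand $v(\cdot\,|\,\supp-\suppi)$. That choice is what makes property~(b) free: since $\moni$ maximizes $k\mapsto k\,m_{\supp-\suppi+k}$ by definition, you get $\moni\,m_{\supp-\suppi+\moni}\ge k\,m_{\supp-\suppi+k}$, and adding $(\supp-\suppi)m_{\supp-\suppi+\moni}\ge(\supp-\suppi)m_{\supp-\suppi+k}$ (decreasing marginals, since $k\ge\moni$) gives $\hat q\,m_{\hat q}\ge j\,m_j$ for all $j\in\{\hat q,\dots,\supp\}$, hence $j\,m_j\le\hat q\,m_{\hat q}\le v(\hat q)$. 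No deviation argument and no payment bound are needed. Your route --- lower-bound the largest seller's best-response revenue by $k\,m_k$ via the deviation ``price just below $m_k$,'' upper-bound it by $v(\hat q)$ --- doesn't deliver the lower bound: if competitors price below $m_k$ the largest seller does not capture the first $k$ units, and in the worst case he sells only about $k-(\supp-\suppmax)$ of them, giving roughly $(k-\supp+\suppmax)m_k$, not $k\,m_k$. For $k>\suppmax$ the deviation can't produce $k\,m_k$ at all (he doesn't have $k$ units), and the sketchy ``combine with the residual-demand deviation'' is exactly the gap. The lemma that fills it (Lemma~\ref{lem:monopolistQuantity}) says: whenever a best response makes seller $i$ the \emph{last considered} seller, he sells exactly $\moni$ units and earns the residual-monopoly revenue; this is the right seller and the right deviation to look at, rather than the largest seller.

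Two smaller remarks. First, your claim that $q^t$ literally becomes a constant $\hat q$ is not what the paper proves (nor needed): Lemma~\ref{lem:noFurtherFalling} shows $q^t$ reaches $\min_i(\supp-\suppi+\moni)$ within $s\cdot v(1)/\disc$ rounds and never drops below; it may keep moving above that floor. Since the welfare estimate only needs a floor, you should state (a) as ``$q^t\ge\supp-\suppmax+1$ forever after.'' Second, the ``never falls below'' half uses Lemma~\ref{lem:mostExpensive} (a best responder whose deviation is not to become last-considered still keeps at least as many prices below the current marginal threshold, so the greedy buyer still buys at least as many units); your proposal doesn't mention the device by which the floor is protected against a seller who raises prices. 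Your counting of $s\cdot v(1)/\disc$ rounds by a monotone price potential is the right idea and matches the inductive bookkeeping in Lemma~\ref{lem:noFurtherFalling}, so that part is fine modulo details.
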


We note that this theorem generalizes Theorem 23 in \cite{Babaioff14}, which
gives an $O(\log n)$ bound for the Price of Anarchy for the monopolist case.

In what follows, we will prove a sequence of lemmas leading to our main result.
We remind the reader that the buyer upon being faced with a price vector
$\prices$, chooses a set $X(\prices)$ using the greedy algorithm with
tie breaking rules as discussed in Section \ref{sec:prelim}.

\begin{lemma}\label{lem:mostExpensive}
Let $\price^*(\val,\prices)$ denote the price of the most expensive item that a
buyer with value $\val$ purchased at prices $\prices$. Then, at any other price
vector $\tilde{\prices}$ that has at least $|\pur(\prices)|$ prices no larger
than $\price^*(\val,\prices)$, the number of items sold is at least
$|\pur(\prices)|$.
\end{lemma}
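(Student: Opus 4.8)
The plan is to exploit the structure of the greedy buyer. Recall that the buyer, faced with any price vector, processes items in increasing order of price (breaking ties lexicographically) and purchases an item precisely when its price is at most the marginal value of the next unit; since marginals are decreasing and prices are processed cheapest-first, the buyer stops at the first item whose price strictly exceeds the current marginal, and otherwise keeps buying. So let $q = |\pur(\prices)|$. By definition of $\price^* = \price^*(\val,\prices)$, all $q$ items the buyer bought at $\prices$ have price at most $\price^*$, and in particular the buyer was willing to pay $\price^*$ for the $q$-th unit, i.e.\ $\price^* \le m_q = \val(q) - \val(q-1)$ (at worst with equality, and tie-breaking lets him take it).

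First I would argue the core monotonicity fact: if a price vector offers at least $q$ items priced at most some threshold $h \le m_q$, then the greedy buyer purchases at least $q$ items. To see this, run the greedy process on $\tilde\prices$: among the $j$ cheapest items for $j \le q$, each has price at most $h \le m_q \le m_j$ (using decreasing marginals, $m_j \ge m_q$ for $j \le q$), so the marginal value $\val(j)-\val(j-1) = m_j \ge h \ge$ (price of the $j$-th cheapest item), hence the buyer does not stop before acquiring the $j$-th item. Iterating for $j = 1, \dots, q$ shows the buyer acquires at least $q$ items. Applying this with $h = \price^*(\val,\prices)$, which satisfies $\price^* \le m_q$, and with $\tilde\prices$ having at least $q = |\pur(\prices)|$ prices no larger than $\price^*$ — exactly the hypothesis — gives $|\pur(\tilde\prices)| \ge q = |\pur(\prices)|$, as desired.

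The one subtlety I expect to be the main obstacle is nailing down the inequality $\price^*(\val,\prices) \le m_q$ carefully, together with the interaction with the zero-marginal-utility tie-breaking convention (the buyer always takes items priced exactly at the marginal value). The point is: the buyer bought $q$ items at $\prices$, the most expensive costing $\price^*$; in the greedy order that most expensive item was the last or near-last one accepted, and acceptance of an item at price $p$ when $k-1$ items are already held requires $p \le m_k$. Chasing the indices to conclude $\price^* \le m_q$ rather than $\le m_{q-1}$ or similar needs a little care with how ties and the stopping rule are resolved, but it follows from the explicit greedy description in Section~\ref{sec:prelim}. Once that inequality is in hand, the monotonicity argument above is routine, since it only uses decreasing marginals and the cheapest-first structure of greedy, and does not depend on the identities of the items or on which seller owns them.
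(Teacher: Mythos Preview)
Your proposal is correct and follows essentially the same route as the paper: both arguments extract the inequality $\price^*(\val,\prices) \le m_q$ from the fact that the buyer accepted the $q$-th unit at (at most) that price, then use decreasing marginals to conclude that the greedy procedure on $\tilde{\prices}$ cannot halt before consuming at least $q$ of the cheap items. The paper's proof is terser and simply phrases the key step as ``non-negative utility from purchasing the $|\pur(\prices)|$-th unit at $\price^*(\val,\prices)$,'' which already subsumes the tie-breaking subtlety you flag.
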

\begin{proof}
At prices $\prices$, the buyer obtained a non-negative utility from purchasing
his $|\pur(\prices)|$-th unit at $\price^*(\val,\prices)$. Therefore, because of
decreasing marginal utility, the utility obtained from buying $k$th unit for any
$k \leq |\pur(\prices)|$ at a price at most $\price^*(\val,\prices)$ is also
non-negative. Thus, the greedy procedure will not terminate before purchasing at
least $|\pur(\prices)|$ units.
\end{proof}

{The fact that the buyer purchases items following the greedy algorithm
gives the seller no incentive to price different units differently.}

\begin{lemma}\label{lem:unifPricing}
For every price vector $\pricesi[-i]$ of other sellers, every best-response
price vector of seller $i$ prices all his sold units at the same price.
\end{lemma}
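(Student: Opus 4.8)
The plan is to show that if seller $i$ has a best response that sells two of his units at different prices, then he has a strictly more profitable reply, so no best response can do this.

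Fix $\pricesi[-i]$ and a best response $\pricesi$ of seller $i$, and suppose the buyer buys $\ell$ units from seller $i$; if $\ell\le 1$ there is nothing to prove, so assume $\ell\ge 2$. Because the buyer runs the greedy algorithm, always taking a cheapest available item, the $\ell$ units seller $i$ sells are exactly his $\ell$ cheapest units, with prices $q_1\le\cdots\le q_\ell=:q$, and every unsold unit of his is priced at least $q$. His revenue is $\sum_{j=1}^\ell q_j$, which is strictly less than $\ell q$ whenever $q_1<q$. So it suffices to exhibit a deviation under which seller $i$ still sells $\ell$ units, now all at price $q$; the natural candidate is to reprice his $\ell$ cheapest units uniformly at $q$ and leave his remaining (unsold, priced $\ge q$) units untouched. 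Call the resulting profile $\tilde{\prices}$.

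The heart of the argument is to check that under $\tilde{\prices}$ the buyer still buys all $\ell$ of these units. Let $u$ be seller $i$'s $\ell$-th cheapest unit in the buyer's fixed tie-breaking order; its price equals $q$ under both $\prices$ and $\tilde{\prices}$. The position of $u$ in the greedy processing order equals $1$ plus the number of units strictly cheaper than $u$ plus the number of units tied with $u$ that precede it in the tie-order; I claim this position, call it $P$, is the same under $\prices$ and under $\tilde{\prices}$. Indeed, the only units whose price the deviation changes are seller $i$'s $\ell-1$ cheapest units, each of which moves from a price $\le q$ to exactly $q$: such a unit that was strictly cheaper than $u$ merely moves into the set of units tied with $u$ that precede it in the tie-order (same seller, smaller unit index), so it is still counted; every other unit keeps its price and hence its relation to $u$. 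Now, since the buyer bought $u$ under $\prices$, the price $q$ is at most the $P$-th marginal $m_P$; hence under $\tilde{\prices}$, where the $P$-th unit processed is again $u$ at price $q$, the buyer buys it, and by decreasing marginals the buyer buys every unit in positions $1,\dots,P$. Seller $i$'s other $\ell-1$ sold units lie in positions $1,\dots,P-1$, so all $\ell$ are indeed bought.

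It follows that under $\tilde{\prices}$ seller $i$ collects at least $\ell q>\sum_{j=1}^\ell q_j$, contradicting optimality of $\pricesi$; hence every best response prices all sold units at the same price. The step I expect to take the most care is the bookkeeping in the middle paragraph --- checking that no unit crosses the ``strictly cheaper / tied / more expensive than $u$'' boundary under the deviation, in particular when other sellers own units priced between some $q_j$ and $q$, or when several sellers have units priced exactly $q$ --- together with the ``prefix'' property of the greedy algorithm (once the $P$-th unit is bought, so is every earlier one) that lets us conclude all $\ell$ of seller $i$'s units are bought without re-examining the rest of the run. One could instead invoke Lemma~\ref{lem:mostExpensive} to see that the \emph{total} number of units sold does not decrease under this deviation, but since that lemma does not control seller $i$'s own share, the positional argument above appears to be necessary.
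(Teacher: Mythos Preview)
Your positional argument has a gap at the parenthetical ``(same seller, smaller unit index)'': nothing in the setup forces seller $i$'s cheaper units to carry smaller item indices than $u$. Concretely, let seller $i$ have items indexed $1,2,3,4$ with prices $q,\,q,\,q-1,\,q+1$ (no other sellers), and take marginals with $m_2\ge q>m_3$. The two cheapest in greedy order are items $3$ and $1$, so $\ell=2$ and $u$ is item~$1$. After you reprice item~$3$ to $q$ it lands \emph{after} $u$ in the tie order (index $3>1$), not before it; the position of $u$ drops from $2$ to $1$, item~$3$ is now third and is not bought, so neither the claim ``$P$ is unchanged'' nor ``the other $\ell-1$ sold units lie in positions $1,\dots,P-1$'' survives. (Seller $i$ still sells two units at $q$ here --- items $1$ and $2$ --- so the deviation \emph{is} profitable, just not for the reason you give.)

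Two easy repairs are available. First, you may assume without loss of generality that the best-response prices are nondecreasing in item index: since seller $i$'s indices form a contiguous block, permuting his price vector leaves the (price, seller)-sequence seen by the greedy algorithm unchanged and hence preserves revenue; with this WLOG your ``smaller index'' claim becomes true and the rest goes through. Second --- and this is what the paper actually does --- raise \emph{all} of seller $i$'s prices to $h=q$, invoke Lemma~\ref{lem:mostExpensive} to keep the total count from dropping, and then observe that if seller $i$'s own count fell, some previously unsold unit of another seller would now be sold; but every such unit is priced above $h$ or is priced at $h$ by a seller $j>i$, hence sits after seller $i$'s entire block in the new greedy order, forcing all of $i$'s units to be bought first. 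Your closing remark that Lemma~\ref{lem:mostExpensive} ``does not control seller $i$'s own share'' is therefore too pessimistic: the paper supplements it with exactly this short structural observation, and no positional bookkeeping is needed.
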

\begin{proof}
Consider any candidate price vector $\pricesi$ for seller $i$ where $i$ makes a
non-zero revenue. Let $h$ be the
price of the most expensive unit that seller $i$ sells at
$(\pricesi,\pricesi[-i])$.
Consider raising the price of every unit of seller $i$ to $h$. By
Lemma~\ref{lem:mostExpensive}, the total number of sold units will not fall
down. The number of units $|\puri(\prices)|$ sold from $i$ also will not fall
down: for it to fall down, the buyer should have compensated by buying some
previously unsold unit from some seller other than $i$. But the greedy procedure
implies that every unsold unit from sellers other than $i$ at the original price
vector $\prices$, is either more expensive than $h$ or is priced at $h$ but held
by a lesser preferred seller. So $|\puri(\prices)|$ will not go down either.
Thus pricing all sold units at $h$ strictly increases seller $i$'s revenue (if
at least two of $i$'s units were sold at $(\pricesi,\pricesi[-i])$), and
this proves the lemma.
\end{proof}

Recall our assumption that sellers prefer uniform price vectors over non-uniform
price vectors and prefer selling larger number of items.
Lemma~\ref{lem:unifPricing} states that the assumption that a seller
prices uniformly all his sold units, is actually without loss of generality. So
our assumption that each seller prices all his units (not just sold units)
uniformly boils down to assuming that all unsold units are priced the same as
the sold units. Pricing uniformly is a natural assumption in many settings,
and often also a constraint of the marketplace. The second
assumption that each seller prefers selling the larger number of items implies
that among all utility equivalent prices, the seller picks the lowest price.

\paragraph{Last Considered Seller} For a given price vector $\prices$ where the
prices are uniform within a seller, the greedy procedure visits sellers in the
order of increasing prices breaking ties lexicographically. We call the
$\numSell$-th seller in this order as the {\em last considered seller}. Note that the greedy
algorithm could sometimes stop with {an earlier seller (as the marginal utility
becomes negative),}
but the last considered seller is still defined to be the $\numSell$-th seller in the order.

Let $\moni$ denote the number of items that seller $i$ would sell to a buyer who
has already in possession of
$\supp - \suppi$ items, i.e., the number of items sold by
a monopolist seller to a buyer with valuation
$\val(\cdot|n - n_i)$ {(breaking ties towards larger quantity by our
assumption).
Formally, $\mu_i$ is the largest element in the set $\text{argmax}_k [ v(k \vert n - n_i) ]$.

The main step towards our main theorem (Theorem \ref{thm:main-epoa}) is to show
that after $\numSell \cdot v(1) /\disc$ rounds, at least $\min_i(\supp - \suppi
+\moni)$ are sold, {and the quantity sold will never drop below that amount in
later time-steps}.
We actually prove a stronger version of Theorem
\ref{thm:main-epoa} given by the following proposition:

\begin{proposition}\label{thm:soldGuarantee}
For every starting price $\prices^{0}$, and every possible order of
best-responses from sellers,
the number of units sold reaches $\min_i(\supp - \suppi +\moni)$ after at most
$\numSell\cdot v(1)/\epsilon$ rounds, and never falls below that amount. The
social welfare after
at most $\numSell \cdot v(1) /\disc$ rounds is at least\\
$\left[ 1+\log\left(\frac{\supp}{\min_i(\supp-\suppi+\moni)}\right) \right]^{-1}
\cdot \val(\supp)$
and never falls below.
\end{proposition}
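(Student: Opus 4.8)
The plan is to separate the statement into (i) a purely combinatorial fact about decreasing-marginal valuations that converts a lower bound on the \emph{number of units sold} into the claimed welfare bound, and (ii) the dynamical core, which establishes that lower bound on the number of units sold and its persistence. Part (i) is short; part (ii) is the main work.

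\textbf{Reducing welfare to quantity.}
Write $r_i:=\supp-\suppi$, set $q^{\star}:=\min_i(\supp-\suppi+\moni)$, and let $i^{\star}$ attain the minimum, so $q^{\star}=r_{i^{\star}}+\moni[i^{\star}]$. Since the welfare at a profile equals $v$ evaluated at the number of units sold and $v$ is monotone, once we know at least $q^{\star}$ units are sold it suffices to prove, and apply at $i=i^{\star}$, the following concavity lemma: for every seller $i$, recalling that $\moni$ (spelling out the monopolist's revenue maximization) is a largest maximizer of $k\mapsto k\,m_{r_i+k}$ over $0\le k\le\suppi$,
\[
 v(\supp)\ \le\ v(r_i+\moni)\Bigl(1+\log\tfrac{\supp}{\,r_i+\moni\,}\Bigr).
\]
To prove it I would use, for each $k$ with $\moni<k\le\suppi$, the optimality inequality $m_{r_i+k}\le \moni\,m_{r_i+\moni}/k$, together with $m_{r_i+\moni}\le v(r_i+\moni)/(r_i+\moni)$ (the right side is an average of $r_i+\moni$ marginals each at least $m_{r_i+\moni}$). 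Summing over $k$ from $\moni+1$ to $\suppi$ and bounding the partial harmonic sum by $\log(\suppi/\moni)$ gives $v(\supp)-v(r_i+\moni)\le \tfrac{\moni}{r_i+\moni}\,v(r_i+\moni)\,\log\tfrac{\suppi}{\moni}$; the proof then closes by the elementary inequality $\tfrac{\moni}{q}\log\tfrac{\supp-q+\moni}{\moni}\le\log\tfrac{\supp}{q}$ at $q=r_i+\moni$, which holds because $\mu\mapsto\tfrac{\mu}{q}\log\tfrac{\supp-q+\mu}{\mu}$ is nondecreasing on $(0,q]$ and equals $\log(\supp/q)$ at $\mu=q$. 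The degenerate case $\moni=0$ forces $m_{r_i+1}=0$, hence $v(\supp)=v(r_i)$ and the bound is trivial.

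\textbf{Setting up the dynamics.}
For the quantity statement I would first invoke Lemma~\ref{lem:unifPricing} and the tie-breaking conventions to assume each seller posts a common price $p_i^t$ on all his units, and note that no best response ever exceeds $v(1)=m_1$ (a unit priced above $m_1$ is never bought), so after the first round all prices lie in the finite set $\{0,\disc,2\disc,\dots,v(1)\}$. I would then describe a best response of seller $i$ in residual terms: if $i$ posts price $p$, the greedy buyer first buys the other units priced below $p$ (say $c(p)$ of them) and then buys $\min(\suppi,\lambda(p,c(p)))$ of $i$'s units, where $\lambda(p,c)$ is the largest $k$ with $m_{c+k}\ge p$; so $i$ acts as a monopolist, capped at $\suppi$ units, against the valuation $v(\cdot\,|\,c)$ left after the cheaper units are removed. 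The ``last considered seller'' of a profile is exactly the seller facing $c=\supp-\suppi$ (when all cheaper units are actually bought), so his sale is $\moni$ and the market total is $\supp-\suppi+\moni$; the quantity claim says the dynamics reaches such a configuration, or a better one, for the constraining seller.

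\textbf{The quantity statement (the main obstacle).}
Here lies the real work, and I would prove two parts. First, a monotone floor: if at some step at least $q^{\star}$ units are sold, then after any single best response at least $q^{\star}$ units are still sold. The lever is that a best-responding seller $i$ always has the deviation ``post $p=m_{r_i+\moni}$'' available; by Lemma~\ref{lem:mostExpensive} the cheaper units stay bought and the decreasing-marginals structure forces $i$ to sell at least $\moni$ units while no other seller's sale drops below its previous level, and a short case analysis on whether $i$ is the constraining seller shows the total cannot fall below $q^{\star}$. Second, fast progress: while fewer than $q^{\star}$ units are sold, each round strictly decreases a monovariant by at least $\disc$; I would try $\sum_i p_i^t$, which lies in $[0,\numSell\,v(1)]$ and would thus yield the $\numSell\,v(1)/\disc$ bound, arguing that while the floor is not yet active a round must contain a net undercut and never a net raise. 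Proving that the chosen monovariant is non-increasing across an \emph{entire round} — not merely across a single best response, since sellers do jump their prices up within a round — is the delicate point; the intended resolution is that an upward jump by a seller currently selling less than his monopoly amount is dominated, as a best response, by an undercut, so the net motion within a round is downward until $q^{\star}$ is reached. Granting both parts, the proposition follows, and Theorem~\ref{thm:main-epoa} follows from it after checking $\min_i(\supp-\suppi+\moni)\ge\supp-\suppmax+1$.
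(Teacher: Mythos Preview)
Your two-part decomposition and the concavity lemma are correct; the paper does the welfare step a bit more directly by bounding $m_{r_i+k}\le(r_i+\moni)\,m_{r_i+\moni}/(r_i+k)$ for $k\ge\moni$ (add $(r_i)m_{r_i+\moni}\ge(r_i)m_{r_i+k}$ to the monopolist inequality) and then invoking $\underline{W}\ge(r_i+\moni)\,m_{r_i+\moni}$, which sidesteps your extra monotonicity-in-$\mu$ step.

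The genuine gaps are in the dynamics. Your progress monovariant $\sum_i p_i^t$ does not work: take three sellers with one unit each, marginals $10,5,1$ (so each $\moni=1$ and $q^\star=3$), start from $(0,10,10)$, and let sellers respond in the order $1,2,3$. Seller~$1$ is already selling $\mu_1=1$ units and best-responds by \emph{raising} to $10$; then sellers $2$ and $3$ undercut to $10-\epsilon$ and $10-2\epsilon$. Only one unit is ever sold during the round, yet the sum jumps from $20$ to $30-3\epsilon$. Your resolution (``upward jumps by sellers selling fewer than $\moni$ are dominated'') is therefore beside the point --- raises by sellers already at their monopoly quantity are the issue. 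The paper instead tracks the \emph{maximum} price and hinges the whole argument on a single dichotomy: either some best-responder becomes the \emph{last considered seller}, in which case Lemma~\ref{lem:monopolistQuantity} immediately gives total sales $n-n_i+\moni\ge q^\star$; or no one ever does, and then in every block of $s$ rounds each seller currently holding the top price must drop strictly below it and cannot return (returning would make him last-considered), forcing the maximum down by at least $\epsilon$ per $s$ rounds. The same dichotomy delivers the floor cleanly: if the best-responder $i$ becomes last-considered, the total is $n-n_i+\moni\ge q^\star$; if not, his new price is at most some other seller's unchanged price, and Lemma~\ref{lem:mostExpensive} keeps the total from dropping. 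Your floor argument does not close either: knowing that the deviation ``post $p=m_{r_i+\moni}$'' is available only lower-bounds seller~$i$'s best-response \emph{revenue}, not the number of units his actual best response sells, so it does not by itself prevent the total from dipping below $q^\star$.
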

The proof of the proposition follows directly from the series of lemmas below. 

\begin{lemma}\label{lem:monopolistQuantity}
When seller $i$'s best-response to prices $\pricesi[-i]$
makes him the last considered seller in the buyer's greedy procedure,
he sells exactly $\moni$ items (regardless of $\pricesi[-i]$).
\end{lemma}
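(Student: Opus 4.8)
I want to show that if seller $i$'s best response $\pricesi$ makes him the last considered seller, then he sells exactly $\moni$ units, where $\moni$ is the monopolist quantity against the residual valuation $\val(\cdot \mid n - n_i)$. The key structural fact, from Lemma~\ref{lem:unifPricing} and our tie-breaking convention, is that a best-responding seller prices all his units uniformly, say at price $h$. Since $i$ is the last considered seller, the buyer's greedy procedure first exhausts (or partially consumes) all other sellers' units, all of which are priced at most $h$ (with lexicographic tie-breaking placing $i$ last among sellers at price exactly $h$). The first idea is therefore to argue that at the moment the greedy procedure turns to seller $i$, the buyer has already acquired exactly $n - n_i$ units from the other sellers. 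This needs a small argument: every other seller's unit is priced $\le h$, and by decreasing marginals and Lemma~\ref{lem:mostExpensive}-type reasoning, the buyer will not stop before consuming all units priced strictly below the marginal value; one must check that the buyer doesn't stop early among the other sellers' units in a way that leaves some of them unsold while $i$ still sells. I would handle this by noting that if any other seller's unit priced $\le h$ were left unsold, then by greedy-with-decreasing-marginals seller $i$'s units at price $h$ would also be unsold, contradicting that $i$ is selling (he makes nonzero revenue by the usual $\disc < m_n$ argument). So all $n - n_i$ other units are sold, and then the buyer consumes $i$'s units greedily against the residual valuation $\val(\cdot \mid n - n_i)$.

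**Reducing to the monopolist problem.** Once we know the buyer arrives at seller $i$ holding exactly $n - n_i$ items, seller $i$'s revenue as a function of his uniform price $h$ and the induced sales quantity is precisely that of a monopolist facing a buyer with valuation $\val(\cdot \mid n - n_i)$ on $n_i$ identical units. So the best-response problem for seller $i$, restricted to price vectors that make him last considered, is exactly the monopolist pricing problem for the residual valuation. The second step is then a clean lemma about the monopolist: the revenue-maximizing uniform price, with ties broken toward selling more, sells exactly $\moni := \max \argmax_k \val(k \mid n - n_i)$ units. I would prove this by the standard argument: if the monopolist sells $q$ units at uniform price $p$, he should set $p$ equal to the $q$-th marginal $\val(q \mid n - n_i) - \val(q-1 \mid n-n_i)$ (raising it to that value loses no sales by decreasing marginals and Lemma~\ref{lem:mostExpensive}, and lowering it below wastes revenue), giving revenue $q \cdot m_q'$ where $m'_k$ is the $k$-th residual marginal; but actually the cleanest formulation is that selling $q$ at the largest feasible uniform price yields revenue exactly $\val(q \mid n - n_i)$ only when marginals are nonnegative — more carefully, the monopolist's optimal revenue selling $q$ units is $q \cdot m'_q$, and one shows the maximizer of this over $q$, when broken toward larger $q$, coincides with $\moni$; I should double-check the precise relationship the paper intends between $\argmax_k v(k\mid n-n_i)$ and the revenue maximizer, since with decreasing marginals these are linked but the revenue objective $q m'_q$ is not literally $v(q\mid n-n_i)$. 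The right statement is: with decreasing marginals, the uniform-price monopolist selling to a buyer who greedily accepts, selling $q$ units requires price $\le m'_q$, hence revenue $\le q m'_q \le \val(q \mid n-n_i)$, and setting price $= m'_q$ achieves $q$ sales; combined with the tie-break toward more items, the optimum sells exactly $\moni$.

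**Main obstacle.** The delicate point, and where I expect to spend the most care, is the interaction between the buyer's fixed tie-breaking rule and the two sellers' tie-breaking assumptions when the best-response price $h$ equals a marginal value, so that the buyer is indifferent about the last unit or two. Because the buyer is defined to always take zero-marginal-utility items and break price ties lexicographically (seller $i$ last), and because seller $i$ by assumption prefers the best response that sells more, I need to verify these conventions line up so that the realized quantity is the \emph{largest} element of $\argmax_k \val(k\mid n - n_i)$, i.e.\ genuinely $\moni$ and not $\moni - 1$. I would address this by explicitly tracing the greedy procedure at price $h = m'_{\moni}$: the buyer, already holding $n - n_i$ units, accepts $i$'s units one by one as long as the marginal $\ge h$, and by the "always take zero-utility items" rule accepts exactly through the $\moni$-th (since $m'_{\moni} = h \ge 0$ and $m'_{\moni+1} < h$ or $n_i = \moni$), while seller $i$'s preference for selling more rules out the dominated response of pricing just above $m'_{\moni}$. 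This pins the quantity to exactly $\moni$, independent of $\pricesi[-i]$, completing the proof.
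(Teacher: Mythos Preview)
Your approach is correct but differs from the paper's. You proceed structurally: first establish that when $i$ is last considered and sells anything, the greedy procedure must have exhausted all $n-n_i$ other units before reaching him; then reduce seller $i$'s problem to the monopolist's problem against the residual valuation $\val(\cdot\mid n-n_i)$; then solve that problem directly, tracking tie-breaking. The paper instead gives a short two-sided revenue comparison: the monopolist can mimic $i$'s best-response price (and, since $i$ is last, faces the same residual buyer), so the monopolist's optimal revenue is at least $i$'s; conversely, $i$ can post the monopolist's optimal price, and whether this leaves him last (selling exactly $\mu_i$) or moves him earlier (selling at least $\mu_i$ by decreasing marginals), he matches the monopolist's revenue. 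Equal optimal revenues plus the shared ``prefer selling more'' tie-break then force equal quantities. Your route gives more structural insight into the greedy procedure and forces you to confront the tie-breaking and the definition of $\mu_i$ explicitly (your confusion there is justified: the paper's displayed formula for $\mu_i$ is at odds with its use in Lemma~\ref{lem:welfare-bound}, where $\mu_i$ clearly maximizes revenue $k\cdot m_{n-n_i+k}$, not value $v(k\mid n-n_i)$). The paper's mimicking argument is shorter and sidesteps the explicit greedy trace; in particular, it never needs to argue separately that $i$ sells at least one item, since the revenue comparison is trivially an equality when both sides are zero.
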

\begin{proof}
We prove this by showing that whenever seller $i$'s best response to
$\pricesi[-i]$ makes him the last considered seller, his revenue is identical to
what a monopolist seller would get when dealing with a buyer with valuation
$\val(\cdot|n-n_i)$. This proves that the number of items sold by
the best-responding seller and the monopolistic seller is the same (the latter
is $\moni$ by definition) because we have assumed that sellers break tie towards
selling more items. To prove the claim, note that the monopolist seller could
not have made strictly smaller profit than the best responding seller $i$
because he could have just used the price of the latter and obtained as much
profit.  Similarly, the best-responding seller $i$ could not have made
strictly smaller profit than the monopolist seller $i$: the former could have
always used the latter's price. If this price resulted in the best-responding seller
$i$ being the last considered seller, he would have by definition of $\moni$
sold exactly $\moni$ items and made the monopolist's profit.  If this price
resulted in seller $i$ being considered before the last seller, he would have
sold at least $\moni$ items, and thus making at least as much profit.
\end{proof}

\begin{lemma}\label{lem:noFurtherFalling}
For every starting price $\prices^0$ and every possible order of best-responses
from sellers, the number of units sold reaches $\min_i(\supp - \suppi +\moni)$
within the first $\numSell \cdot v(1)/\disc$ rounds, and never falls below after
that.\end{lemma}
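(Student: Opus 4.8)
The plan is to make the \emph{number of units sold} the central quantity, to use the ``last considered seller'' as the structural hook, and to reduce everything to a single statement about one best response. First, by Lemma~\ref{lem:unifPricing} and the convention that sellers break ties in favor of uniform price vectors, after the first round every seller prices all of his units at one price; since one round is well within the budget, I assume uniform per-seller prices from round~$1$ onward. Write $Q:=\min_i(\supp-\suppi+\moni)$. The heart of the argument is the one-step claim: \emph{if seller $i$ best-responds to a fixed $\pricesi[-i]$, and $q$ units were sold beforehand, then after his move at least $\min(\supp-\suppi+\moni,\,q)$ units are sold.} Granting this, the lemma follows quickly: (a) once at least $Q$ units have been sold, then since $\supp-\suppi+\moni\ge Q$ for every $i$, the claim gives at least $\min(Q,q)=Q$ after the next move, so by induction the quantity never again drops below $Q$; and (b) while fewer than $Q$ units are sold, $q<Q\le\supp-\suppi+\moni$ for the mover $i$, so the claim shows the quantity is non-decreasing. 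Hence it remains only to prove the one-step claim and to bound the number of rounds until $Q$ is first reached.

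For the one-step claim I would split on seller $i$'s new position in the greedy scan. If his best response makes him the last considered seller, then his revenue is positive (pricing at $\epsilon$ always sells at least one unit, hence so does any best response), so the buyer reaches seller $i$; thus every cheaper seller --- i.e.\ every other seller --- is sold out, contributing $\supp-\suppi$ units, and by Lemma~\ref{lem:monopolistQuantity} seller $i$ sells exactly $\moni$, for a total $\supp-\suppi+\moni$. Otherwise seller $i$ is not last considered. If moreover his new uniform price is at most $\price^*(\val,\prices)$ (the most-expensive-sold price of the old profile), then, since all of seller $i$'s old units were also priced at or below $\price^*$, the new profile still has at least $q$ items priced at most $\price^*$, so Lemma~\ref{lem:mostExpensive} yields at least $q$ units sold. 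The remaining sub-case --- seller $i$ raises his price strictly above the old $\price^*$ yet stays short of being last considered --- is the delicate one: here I would use that the sellers $i$ overtakes were all sold out before (their prices are unchanged and were at most $\price^*$), that the buyer still reaches $i$ (his best response sells $\ge1$ unit), and a decreasing-marginals comparison together with the revenue bound ``$i$ earns at least his monopolist revenue against $\val(\cdot\mid\supp-\suppi)$'' to conclude that the total does not fall below $q$.

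For the round bound I would argue that a ``deficient'' configuration --- fewer than $Q$ units sold --- cannot be a best-response fixed point: the last considered seller $L$ either sells fewer than $\moni[L]$ units, in which case he has a strictly preferred response (pricing to sell $\moni[L]$ at the monopolist-optimal price, which raises revenue or at least sells more units for the same revenue, and which in fact restores the quantity to $\supp-\suppi[L]+\moni[L]\ge Q$), or the buyer halts before reaching $L$, in which case $L$ sells nothing and deviating to $\epsilon$ is profitable. Hence in every deficient round at least one price strictly changes. Combining this with the facts that no price ever exceeds $m_1=v(1)$, that the quantity sold is non-decreasing while deficient, and that while deficient no best response lowers a price without also increasing the quantity sold, one bounds the number of deficient rounds by $\numSell\cdot v(1)/\epsilon$ via a potential built from the current prices and the quantity sold. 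The main obstacle is twofold: the delicate sub-case of the one-step claim (the careful accounting with decreasing marginals and the monopolist-revenue lower bound), and choosing the potential in the round count so that it delivers exactly $\numSell\cdot v(1)/\epsilon$ rather than a looser bound.
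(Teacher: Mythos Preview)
Your central tool, the one-step claim ``after seller~$i$ best-responds, at least $\min(\supp-\suppi+\moni,\,q)$ units are sold,'' is false, and the failure is exactly in the sub-case you flag as delicate. Take two sellers with $\suppi[1]=1$, $\suppi[2]=2$, tie-break favoring seller~$1$, marginals $10,4,1$, and start from $\pricei[1]=10$, $\pricei[2]=1$. Then $q=2$, $p^*=1$, and $Q=2$. Seller~$2$'s best response is to price at $9$: he sells one unit for revenue $9$, beating the revenue of $8$ from selling two at $4$. He is not last considered (seller~$1$ at $10$ is), yet the new quantity is $1<\min(2,2)$. Your sketch for this sub-case assumes the sellers above $i$'s new price were sold out before; but here the only such seller is~$1$, who sold nothing. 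A similar three-seller example (add a third seller with one item at price $10$; marginals $10,4,3,1$; $Q=3$) shows $q$ can drop from $2$ to $1$ while already deficient, so ``non-decreasing while deficient'' also fails.

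The paper sidesteps the one-step claim entirely. For the round bound it uses a concrete potential you do not have: the \emph{maximum posted price}. It argues a dichotomy over the first $\numSell\cdot v(1)/\epsilon$ rounds: either (Case~1) at some step the mover is last considered after moving, in which case Lemma~\ref{lem:monopolistQuantity} gives $q=\supp-\suppi+\moni\ge Q$ immediately; or (Case~2) this never happens, and then in every block of $\numSell$ rounds the current last-considered seller must strictly undercut the current maximum (otherwise he would remain last considered), and no one raises above it, so the maximum drops by at least $\epsilon$ per $\numSell$ rounds. This yields the exact bound $\numSell\cdot v(1)/\epsilon$ without any auxiliary potential mixing prices and quantities. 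For the ``never falls below'' part, the paper again splits on whether the mover is last considered; when he is not, it invokes Lemma~\ref{lem:mostExpensive}. The reason this invocation is legitimate here (and not in your one-step claim) is that once Case~1 has occurred, the configuration satisfies $p^*=\max_j \pricej$, so ``not last considered'' forces the mover's new price to be at most the maximum of the others, which is at most $p^*$, and the hypothesis of Lemma~\ref{lem:mostExpensive} is met. Your approach tries to prove the stronger statement that $q$ cannot drop from \emph{any} configuration with $q\ge Q$; that stronger statement is what the counterexample above refutes.
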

\begin{proof}
We prove this in two claims. First, we show that the number of units sold
reaches $\min_i(\supp - \suppi +\moni)$ within the first $\numSell\cdot
v(1)/\disc$
rounds, and second, we show that it doesn't fall below $\min_i(\supp - \suppi
+\moni)$.  In fact, we need only  $\numSell( v(1)/\disc-1)$ rounds to
reach $\min_i(\supp - \suppi
+\moni)$.

We begin with the first claim. Consider two cases.
\paragraph{Case 1} In the first $\numSell (v(1)/\disc-1)$ rounds, some
seller's best response
made him the last considered seller in the buyer's greedy procedure. The seller,
say $i$, who made this choice did so because the buyer at this price will
exhaust the supply of every other seller, and buy a few items from seller $i$
too. The exact number sold by seller $i$ will be $\moni$ as shown in
Lemma~\ref{lem:monopolistQuantity}, and the total number sold is $\supp - \suppi
+\moni$.

\paragraph{Case 2} In no time step in the first $\numSell(v(1)/\disc-1)$
rounds did
any seller best respond to become (or remain) the last considered seller in the
buyer's greedy procedure. Let $h^r$ be the largest price in the price vector at
the end of $rs$ rounds (so $h^0$ would the largest price in $\prices^0$). We
show by induction on $r$ that either $h^r \leq \max(h^0 - r\disc,\disc)$ or the
number of items sold already reached $\min_i (\supp - \suppi + \moni)$ at some
time step in the first $r$ rounds. Thus by the end $\numSell(v(1)/\disc-1)$
rounds, we are guaranteed to have sold at least $\min_i (\supp - \suppi +
\moni)$ units at some time step, proving the claim. The base case of $r=0$ is
trivial. Assume that for $r \leq k$ the number of items sold never reached
$\min_i (\supp - \suppi+\moni)$ in the first $r$ rounds, and that $h^r \leq
\max(h^0 - r\disc,\disc)$ for $r \leq k$.  Consider $r = k+1$. By the definition
of case 2, at no time step in any of the rounds between $ks+1$ to $ks+s$ (both
inclusive) did the largest price go strictly above $h^k$ as that would put this
in case 1.  Furthermore, in each of these $s$ rounds, some seller who is
currently the last considered seller in the buyer's greedy procedure has to
best-respond, and by definition of case 2, this seller has to strictly reduce
his price below $h^k$. Also, this seller will never again raise his price to
$h^k$ in these $s$ rounds as it will again violate case 2.  Thus in each of the
$s$ rounds between $ks+1$ and $ks+s$, as long as there is at least one seller
priced at $h^k$, at least one seller reduces his price by at least $\disc$.
Since there are $\numSell$ sellers, the number of sellers priced at $h^k$ is at
most $\numSell$, and thus after $\numSell$ rounds, the highest price would have
fallen down by at least $\disc$, thus proving the inductive step.

  We now prove the second claim, namely, the number of units sold doesn't fall
  below $\min_i (\supp - \suppi + \moni)$ once it reaches this quantity. At
  every time step $t$ after this quantity has been sold, the best-responding
  seller $i$'s action could either make (or retain) him the last considered
  seller in which case the number sold is $\supp - \suppi + \moni$ by
  Lemma~\ref{lem:monopolistQuantity}. If not, the best-responding seller $i$'s
  action will satisfy the condition of Lemma~\ref{lem:mostExpensive},
  in which case by Lemma~\ref{lem:mostExpensive} the number sold will not
  fall below what is currently getting sold.
\end{proof}

\begin{lemma}\label{lem:welfare-bound}
When the number of items sold is at least \\
$\min_i (\supp - \suppi + \moni)$, the
welfare is at least \\
$\left[ 1+\log\left(\frac{\supp}{\min_i(\supp -
\suppi + \moni)}\right)\right]^{-1} \cdot \val(\supp)$.
\end{lemma}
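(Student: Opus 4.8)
The plan is to reduce this to a pointwise bound on the marginals past $q_0 := \min_i(\supp - \suppi + \moni)$, followed by a harmonic-sum estimate exactly as in the monopolist argument of \cite{Babaioff14}. Since the good is homogeneous, the welfare when $q$ units are sold is exactly $\val(q)$, so by monotonicity of $\val$ it suffices to prove $\val(q_0) \ge \val(\supp)\,/\,\bigl(1 + \log(\supp/q_0)\bigr)$. Fix a seller $i^{*}$ attaining the minimum defining $q_0$ and set $a := \supp - n_{i^{*}}$, so that $q_0 = a + \mu_{i^{*}}$. As in the proof of Lemma~\ref{lem:monopolistQuantity}, a monopolist facing the residual valuation $\val(\cdot\mid a)$ optimally posts a uniform price and therefore earns exactly $\max_{1\le k\le n_{i^{*}}} k\,m_{a+k}$, with the quantity $\mu_{i^{*}}$ being a maximizer; hence $\mu_{i^{*}}\,m_{q_0} = \mu_{i^{*}}\,m_{a+\mu_{i^{*}}} \ge k\,m_{a+k}$ for every $k \in \{1,\dots,n_{i^{*}}\}$.

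The key step is to show that $j\cdot m_j \le \val(q_0)$ for every $j$ with $q_0 < j \le \supp$. Writing $j = a + k$ with $\mu_{i^{*}} < k \le n_{i^{*}}$, split $j\,m_j = a\,m_{a+k} + k\,m_{a+k}$: the second summand is at most $\mu_{i^{*}}\,m_{q_0}$ by optimality of $\mu_{i^{*}}$, while the first is at most $a\,m_{q_0}$ because $a+k > q_0$ and the marginals are nonincreasing. Adding, $j\,m_j \le (a + \mu_{i^{*}})\,m_{q_0} = q_0\,m_{q_0} \le \sum_{\ell=1}^{q_0} m_\ell = \val(q_0)$, the last inequality again using that marginals are nonincreasing. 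Then
\[
\val(\supp) = \val(q_0) + \sum_{j=q_0+1}^{\supp} m_j \le \val(q_0)\Bigl(1 + \sum_{j=q_0+1}^{\supp}\tfrac{1}{j}\Bigr) \le \val(q_0)\bigl(1 + \log(\supp/q_0)\bigr),
\]
where the last step bounds the partial harmonic sum by $\int_{q_0}^{\supp} dx/x$. Rearranging gives $\val(q_0) \ge \val(\supp)/\bigl(1+\log(\supp/q_0)\bigr)$, which is the assertion since $\supp/q_0 = \supp/\min_i(\supp - \suppi + \moni)$.

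The main obstacle is the pointwise estimate $j\,m_j \le \val(q_0)$. A naive use of the optimality of $\mu_{i^{*}}$ only controls $(j-a)\,m_j$, which would yield the weaker guarantee $1 + \log(n_{i^{*}}/\mu_{i^{*}})$ rather than the claimed $1 + \log(\supp/q_0)$; the improvement comes from absorbing the residual term $a\,m_j$ via monotonicity of the marginals beyond $q_0$, which upgrades $\mu_{i^{*}}\,m_{q_0}$ into $q_0\,m_{q_0}$. One should also check that the range of $k = j-a$ indeed lies in $\{1,\dots,n_{i^{*}}\}$ so that the optimality of $\mu_{i^{*}}$ applies, and that the monopolist-revenue characterization of $\moni$ invoked above is exactly the one underlying Lemma~\ref{lem:monopolistQuantity}.
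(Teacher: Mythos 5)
Your proof is correct and takes essentially the same route as the paper's: both isolate the two optimality ingredients — $\mu_{i^*} m_{a+\mu_{i^*}} \ge k\,m_{a+k}$ from the monopolist's best response, and $m_{a+\mu_{i^*}} \ge m_{a+k}$ for $k \ge \mu_{i^*}$ from decreasing marginals — add them to get $q_0\,m_{q_0} \ge j\,m_j$ for $q_0 < j \le n$, and then close with $q_0\,m_{q_0} \le \val(q_0)$ plus the harmonic-sum bound $\sum_{j=q_0+1}^n 1/j \le \log(n/q_0)$. The only cosmetic difference is that you state the pointwise bound $j\,m_j \le \val(q_0)$ explicitly before summing, whereas the paper bounds $\val(n)-\underline{W}$ directly and rearranges; the underlying chain of inequalities is the same.
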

\begin{proof}
Let $t^*$ be the smallest time
step at which the number of units sold reaches $\min_i (\supp - \suppi +
\moni)$. Let $W^t$ denote the social welfare at the end of time-step $t$, and
let $\underline{W} = \min_{t\geq t^*} W^t$. Let $i^* \in \argmin_i
{(\supp-\suppi+\moni)}$. Note that
when $\supp - \suppi[i^*] + \moni[i^*]$ units are sold, by the decreasing marginals property,
the social welfare is at least
$(\supp-\suppi[i^*]+\moni[i^*]) \cdot m_{\supp-\suppi[i^*]+\moni[i^*]}$ where
$m_k$
denotes the marginal of the $k$-th unit for the buyer. If more items are sold, the welfare doesn't drop, and therefore we have
$\underline{W} \geq
(\supp-\suppi[i^*]+\moni[i^*]) \cdot m_{\supp-\suppi[i^*]+\moni[i^*]}$ .

By the definition of a monopolist who sells to a buyer with valuation
$\val(\cdot \vert n-n_i)$, we have that $\moni\cdot m_{\supp-\suppi+\moni} \geq
k\cdot m_{\supp-\suppi+k}$ for all $k\in\{1,2,\ldots, \suppi\}$. Therefore, by  the decreasing
marginals property, we have that for all $k$ such that $\mu_i \leq k \leq n_i$:
$$(\supp-\suppi+\moni)m_{\supp-\suppi+\moni} \geq
(\supp-\suppi+k)m_{\supp-\suppi+k} $$

Since at least $\supp - \suppi[i^*] +\moni[i^*]$ units are sold,
$\underline{W}$ will be short of $\val(\supp)$ by at most the welfare of the
remaining marginals. I.e., we have
$\val(\supp) - \underline{W} \leq
\sum_{k=\moni[i^*]+1}^{\suppi[i^*]}m_{\supp-\suppi[i^*]+k}$. Thus,
\begin{align*}
\val(\supp)-\underline{W}
& \leq
\sum_{k=\moni[i^*]+1}^{\suppi[i^*]}m_{\supp-\suppi[i^*]+k}\\
&\leq \sum_{k=\moni[i^*]+1}^{\suppi[i^*]}\frac{
(\supp-\suppi[i^*]+\moni[i^*])m_{\supp-\suppi[i^*]+\moni[i^*]}}{{\supp-\suppi[i^*]+k}}
 \\
&\leq
\log\left(\frac{\supp}{\supp-\suppi[i^*]+\moni[i^*]}\right)(\supp-\suppi[i^*]
+\moni[i^*])m_{\supp-\suppi[i^*]+\moni[i^*]}\\
&\leq
\log\left(\frac{\supp}{\supp-\suppi[i^*]+\moni[i^*]}\right)\underline{W}
\end{align*}
where the second inequality follows from the previous expression and {the third
from the inequality
$\sum_{j={r+1}}^t 1/j \leq \log(t/r)$ for any integers $t$ and $n$.} Rearranging
the terms, we obtain the desired result:
$$\underline{W} \geq
\frac{\val(n)}{1+\log\left(\frac{\supp}{\supp-\suppi[i^*]+\moni[i^*]}\right)} =
\frac{\val(n)}{1+\log\left(\frac{\supp}{\min_i(\supp-\suppi+\moni)}\right)}$$
\end{proof}

Lemmas~\ref{lem:noFurtherFalling} and~\ref{lem:welfare-bound} together
prove
Proposition~\ref{thm:soldGuarantee} and Theorem \ref{thm:main-epoa}. The
following corollary is an immediate consequence of
Theorem~\ref{thm:soldGuarantee} for the case where all sellers have the same
quantity to sell. The  corollary illustrates that if each
seller holds the same number of items, then the Eventual Welfare Guarantee
depends only on the number of sellers. Moreover, the equilibrium is efficient
in the limit when the number of sellers grows to infinity.

\begin{corollary}\label{cor:soldGuarantee-same}
If each seller holds the same number of items, then the Eventual Welfare Guarantee 
is bounded by $1+\log\left(1+\frac{1}{s-1}\right)$. In particular, it
tends to $1$ as $s \rightarrow \infty$.
\end{corollary}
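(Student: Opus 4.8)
The plan is to obtain this as a direct specialization of Theorem~\ref{thm:main-epoa} (equivalently, Proposition~\ref{thm:soldGuarantee}). If each of the $s$ sellers holds the same number of units, then $n_i = n/s$ for every $i$, and in particular $\suppmax = n/s$. Substituting this into the guarantee $1+\log\!\left(\frac{\supp}{\supp-\suppmax+1}\right)$ furnished by Theorem~\ref{thm:main-epoa} yields an Eventual Welfare Guarantee of at most $1+\log\!\left(\frac{n}{\,n-n/s+1\,}\right)$.

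The only remaining step is a one-line estimate. Since $n-n/s+1 > n-n/s = n(s-1)/s$, monotonicity of $x \mapsto 1/x$ and of $\log$ gives
$$1+\log\!\left(\frac{n}{\,n-n/s+1\,}\right) < 1+\log\!\left(\frac{n}{\,n(s-1)/s\,}\right) = 1+\log\!\left(\frac{s}{s-1}\right) = 1+\log\!\left(1+\frac{1}{s-1}\right),$$
which is exactly the claimed bound. For the limiting statement, observe that $\frac{1}{s-1}\to 0$ as $s\to\infty$, hence $\log\!\left(1+\frac{1}{s-1}\right)\to\log 1 = 0$ and the whole expression tends to $1$.

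There is essentially no obstacle here: the corollary is a pure plug-in to the main theorem followed by a trivial inequality, so the "hard part" is really just bookkeeping the substitution $\suppmax = n/s$. If one prefers the marginally sharper version, one can instead invoke Proposition~\ref{thm:soldGuarantee} with $\min_i(\supp-\suppi+\moni)$ in place of $\supp-\suppmax+1$, using $\moni \ge 1$ (which holds because $m_{\,n-n_i+1}\ge 0$ and ties are broken toward larger quantities); this leaves the stated bound unchanged but slightly tightens the constant inside the logarithm.
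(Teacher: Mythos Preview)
Your proof is correct and follows exactly the approach the paper intends: the corollary is stated there as an immediate consequence of Theorem~\ref{thm:main-epoa}/Proposition~\ref{thm:soldGuarantee}, and your substitution $\suppmax = n/s$ together with the one-line estimate $n - n/s + 1 > n(s-1)/s$ is precisely how that immediate consequence is obtained. The closing remark about invoking Proposition~\ref{thm:soldGuarantee} via $\moni \ge 1$ is accurate but superfluous, since the main theorem already absorbs that step.
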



\begin{corollary}\label{cor:soldGuarantee-fraction}
If there is a constant $c \in (0,1)$, such that no seller controls more than a
$c$ fraction of items in the market, i.e., $n_i \leq c \cdot n$, then the
Eventual Welfare Guarantee is bounded by a constant: $1-\log (1-c)$. In
particular, it tends to $1$ as $c \rightarrow 0$.
\end{corollary}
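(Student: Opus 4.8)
The plan is to derive this directly from Theorem~\ref{thm:main-epoa}, which already gives an Eventual Welfare Guarantee of at most $1+\log\left(\frac{\supp}{\supp-\suppmax+1}\right)$. Since the hypothesis is $\suppi \le c\cdot\supp$ for every seller $i$, in particular $\suppmax \le c\cdot\supp$, and so it suffices to bound the quantity $\log\left(\frac{\supp}{\supp-\suppmax+1}\right)$ using this constraint on $\suppmax$.

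The key step is the elementary estimate $\supp-\suppmax+1 \ge \supp - c\supp + 1 = (1-c)\supp + 1 > (1-c)\supp$, which holds because $1-c>0$. Since $x\mapsto \supp/x$ is decreasing in $x$ for positive $x$, this yields $\frac{\supp}{\supp-\suppmax+1} < \frac{\supp}{(1-c)\supp} = \frac{1}{1-c}$, and hence, by monotonicity of $\log$, $\log\left(\frac{\supp}{\supp-\suppmax+1}\right) < \log\left(\frac{1}{1-c}\right) = -\log(1-c)$. Plugging this into the bound of Theorem~\ref{thm:main-epoa} gives an Eventual Welfare Guarantee of at most $1-\log(1-c)$, which is a constant independent of $\supp$ and $\numSell$.

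Finally, to get the ``in particular'' clause, observe that $-\log(1-c)\to 0$ as $c\to 0^{+}$ (by continuity of $\log$ at $1$), so the bound $1-\log(1-c)$ tends to $1$. I do not expect any real obstacle here: the only thing to be slightly careful about is that dropping the ``$+1$'' in $\supp-\suppmax+1$ only strengthens the denominator lower bound (so the inequality goes the right way), and that $c<1$ is needed for $1-c>0$ so that the division and the logarithm are well defined.
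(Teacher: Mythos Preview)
Your proposal is correct and is exactly the intended derivation: the paper states this corollary without proof, as an immediate consequence of Theorem~\ref{thm:main-epoa}, and your chain of inequalities $\supp-\suppmax+1 \ge (1-c)\supp+1 > (1-c)\supp$ followed by monotonicity of $\log$ is precisely how one extracts the bound $1-\log(1-c)$ from that theorem.
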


\subsection{Lower bound on the Eventual Welfare Guarantee}

We next present a lower bound showing that the {efficiency bound}
of Theorem~\ref{thm:main-epoa} 
is asymptotically  tight when $\frac{\supp}{\supp-\suppmax}$ grows large, i.e.,
the nearly-monopolistic setting.
\begin{theorem}\label{thm:lower_bound}
Fix any number of sellers $s$ and any positive supplies $\{\suppi\}_{i\in [s]}$ such that $\frac{\supp}{\supp-\suppmax}>3$.
There exists a decreasing marginal valuation for the buyer
such that for any sequence of best-responses from sellers,\footnote{and any
lexicographic order that determines the tie breaking rule}
the social welfare after any number of rounds is at most\\
$\left[ \frac{1}{5}\log\left(\frac{\supp}{\supp-\suppmax}
\right)-1\right]^{-1} \cdot \val(\supp)$.
\end{theorem}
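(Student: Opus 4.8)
The plan is to produce, for the given $\numSell$ and supplies $\{\suppi\}$, one decreasing‑marginal valuation $\val\in\V$ under which \emph{every} fair best‑response sequence sells exactly $2N$ units at \emph{every} time step after a bounded number of rounds, where $N:=\supp-\suppmax$; note $N\ge1$ and, since $\supp/N>3$, also $2N<\supp$ and $\suppmax>2N$. Writing $i^\ast$ for a seller with $\suppi[i^\ast]=\suppmax$, the idea is that $i^\ast$ is forced to play the role of a monopolist facing a buyer who already holds $N$ units, and $\val$ is built so that this monopolist sells only $N$ more units while $\val$ still carries a logarithmically large unsold ``tail''. Then the welfare is pinned to $\val(2N)$, which will be $\Theta\!\bigl(\val(\supp)/\log(\supp/N)\bigr)$.

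For a sufficiently large multiple $a$ of $\disc$ (fixed at the end to absorb all $O(\disc)$ rounding errors), put $m_\ell:=a$ for $1\le\ell\le 2N$, and for $\ell>2N$ let $m_\ell$ be the largest multiple of $\disc$ that is \emph{strictly} below $\tfrac{aN}{\ell-N}$. Because $\tfrac{aN}{\ell-N}$ is strictly decreasing and equals $a$ at $\ell=2N$, this gives a legal valuation $\val(k)=\sum_{j\le k}m_j\in\disc\Z_+$ with $a=m_1\ge\cdots\ge m_\supp>0$ (positivity needing $\tfrac{aN}{\suppmax}>\disc$). Two arithmetic facts, routine to check, drive everything. (i) Selling $k$ units at one price yields revenue at most $k\,m_k$, and $k\mapsto k\,m_k$ is \emph{uniquely} maximized at $k=2N$ with value $2Na$: it equals $ka$ for $k\le 2N$, and is below $\tfrac{aNk}{k-N}<2Na$ for $k>2N$. (ii) The monopolist against valuation $\val(\cdot\mid\supp-\suppi)$ sells $\moni[i^\ast]=N$ units when $i=i^\ast$ (revenue $\le k\,m_{N+k}$ equals $ka$ for $k\le N$ and is below $aN$ for $k>N$ — the strict inequality in the definition of $m_\ell$ being exactly what rules out selling $N{+}1$), and sells $\moni=\suppi$ units when $i\ne i^\ast$ (its revenue $k\mapsto k\,m_{(\supp-\suppi)+k}$ is increasing in the tail). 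Hence $\min_i(\supp-\suppi+\moni)=\min(2N,\supp)=2N$, so Proposition~\ref{thm:soldGuarantee} already gives that after at most $\numSell\,\val(1)/\disc$ rounds at least $2N$ units are sold, and never fewer afterwards, for any order of best responses.

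The heart of the proof is the matching bound $\abs{\pur(\prices^t)}\le 2N$ from a finite round on. First, a seller whose (uniform) price is strictly above $a$ sells nothing, since every marginal is $\le a$; so once every seller has responded at least once all prices are $\le a$, hence the $2N$ cheapest units are always bought. A $(2N{+}1)$‑st sale would then need $2N{+}1$ units priced $\le m_{2N+1}<a$. But by (i) no seller ever earns more than $2Na$, while $i^\ast$ always has a best response leaving its price within $O(\disc)$ of $a$ — either pricing exactly at $a$ and being last‑considered, selling its monopolist quantity $N$ for revenue $Na$, or, if a rival already sits near $a$, undercutting that rival and selling $2N$ units at price $\approx a$ for revenue $\approx 2Na$. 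Consequently some seller is always priced near $a$, so every other seller $i$ strictly prefers undercutting near $a$ and selling $\min(\suppi,2N)$ units (revenue $\approx\min(\suppi,2N)\,a$) to being last‑considered (revenue only $\approx\tfrac{aN\suppi}{\suppmax}$). Tracking this price dynamics — the top price oscillates in a bounded window below $a$, and at every step exactly the $2N$ units of marginal $a$ are bought — yields $\abs{\pur(\prices^t)}=2N$, hence $W(\prices^t)=\val(2N)=2Na$, for all sufficiently late $t$. Making this price‑dynamics analysis rigorous, rather than just computing the static monopolist quantities, is the main obstacle; the key lemma I would isolate is: \emph{once all prices lie in a bounded window below $a$, no best response can sustain a sale past position $2N$.} (A tie‑break occasionally raising the count to, say, $2N{+}1$ or even $3N$ would cost only an absolute constant in the estimate below, so the exact value $2N$ is not essential.)

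Finally, with $W=2Na$ and
\[
\val(\supp)\;\ge\;2Na+\sum_{\ell=2N+1}^{\supp}\Bigl(\tfrac{aN}{\ell-N}-\disc\Bigr)\;\ge\;aN\sum_{k=N+1}^{\suppmax}\tfrac1k
\]
(the last step using $\disc(\suppmax-N)\le 2Na$, which holds once $a$ is large enough), we get $\tfrac{\val(\supp)}{W}\ge\tfrac12\sum_{k=N+1}^{\suppmax}\tfrac1k\ge\tfrac12\log\tfrac{\suppmax}{2N}=\tfrac12\bigl(\log\tfrac{\suppmax}{N}-\log 2\bigr)$ by the integral bound together with $N+1\le 2N$. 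Since $\supp/N>3$ gives $\tfrac{\suppmax}{N}=\tfrac{\supp}{N}-1>\tfrac12\cdot\tfrac{\supp}{N}$, this is at least $\tfrac12\bigl(\log\tfrac{\supp}{N}-2\log 2\bigr)=\tfrac12\log\tfrac{\supp}{N}-\log 2\ge\tfrac15\log\tfrac{\supp}{N}-1$ (because $\tfrac{3}{10}\log\tfrac{\supp}{N}>0>\log 2-1$). Rearranging gives $W(\prices^t)\le\bigl[\tfrac15\log(\supp/(\supp-\suppmax))-1\bigr]^{-1}\val(\supp)$ at every late $t$, which is the claim.
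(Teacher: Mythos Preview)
Your proposal has a genuine gap, and it is not merely the ``main obstacle'' you flag — the construction itself does not support the claimed bound. The issue is the upper bound on the number of items sold. With your smoothly decaying tail $m_\ell\approx aN/(\ell-N)$, the dominant seller's revenue from being first and selling $q>2N$ items is roughly $q\,m_q\approx aN\cdot q/(q-N)$, which stays \emph{above} $aN$ for all $q$ (and approaches $aN$ only as $q\to\infty$). But $aN$ is exactly his fallback revenue as the last-considered monopolist (your fact~(ii)). So whenever a rival undercuts, the big seller strictly prefers to undercut further rather than jump back to price $a$; the race to the bottom continues until discretization effects of order $q\epsilon$ finally tip the balance, which happens only around $q\sim\sqrt{aN/\epsilon}$. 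For ``sufficiently large $a$'' this is far beyond any fixed multiple of $N$, and the welfare at those time steps is within a constant of $\val(\supp)$ — so the ratio you want does not hold. Your fact~(i), that $k\,m_k$ is maximized at $2N$ with value $2Na$, is true but irrelevant: the dominant seller cannot secure $2Na$ once a rival is priced below $a$, and the correct benchmark is his guaranteed $Na$, which your tail never strictly beats.

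The paper sidesteps all price-dynamics tracking with a single static observation, enabled by a different valuation. It uses a \emph{step} tail: the first $2r$ marginals equal $1+\epsilon$, then blocks of $r$ marginals equal $1/3,1/4,\ldots$. The point is that for every $k>2r$ one has $k\,m_k\le r$, \emph{strictly} less than the $(1+\epsilon)r$ the big seller can always guarantee by posting $1+\epsilon$. Hence every best response of the big seller has price above $m_{2r+1}=1/3$; and since he holds more than $2r$ items, position $2r+1$ in the sorted order is always one of his, so the buyer stops at or before $2r$ regardless of what the other sellers do. That single revenue comparison replaces your entire price-dynamics argument. (Incidentally, the appeal to Proposition~\ref{thm:soldGuarantee} for a lower bound on items sold is unnecessary here: the theorem only requires an \emph{upper} bound on welfare.)
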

\begin{proof}
Recall that $\suppmax=\max_i\ \suppi$ and $\supp=\sum_i \suppi$.
To prove the claim we need to show that 
the ratio of
$v(\supp)$ to the welfare in every round is at least
$\frac{1}{5}\left(\log\left(\frac{\supp}{\supp-\suppmax}\right)-1\right)$.


Recall that we have assumed that $\frac{\supp}{\supp-\suppmax}>3$ and thus $\suppmax> 2\supp/3$.
Let $r=\supp-\suppmax< n/3$ be the total supply of all sellers but the one with $\suppmax$ units.
Assume that $\disc<1$ is small enough.
Consider the following valuation of the buyer.
He values each of the first $2r$ units for $1+\disc$ each, he values the next
$r$ items for $1/3$ each, the next $r$ for $1/4$ each etc.,
till we reach $n$ units (note that the last bunch might have less than $r$ units).
We claim that at most $2r$ items will ever be sold. Indeed, if
$\ell > 2r$ items are sold, then the seller with $\suppmax$ units must be
selling at least $\ell-r>r$ units (as all other sellers combined supply only $r$ units).
To sell such a quantity he must be pricing his items at price at most $\lceil (\ell-r) /r \rceil^{-1}$, since this
is the $(\ell-r)$-th marginal of the buyer, therefore, getting revenue at most
$(\ell-r) / \lceil (\ell-r) /r \rceil \leq r$. However, by posting price
$1+\epsilon$, this seller is guaranteed revenue at least $(1+\epsilon) r$.

As at most $2r$ items are ever sold, the welfare at every round is at most $2r(1+\disc)$, while
the welfare $v(\supp)$ is at least $r(2+\sum_{j=3}^{\lfloor n/r\rfloor}
\frac{1}{j})\geq r(\frac{1}{2}+\sum_{j=1}^{\lfloor \supp/r\rfloor}
\frac{1}{j})\geq r(\frac{1}{2}+ \log ({\lfloor \supp/r\rfloor}))
\geq r\left(\log (\supp/r)-  \frac{1}{2}\right)$ since $\log ({\lfloor
\supp/r\rfloor})\geq \log ({\supp/(2r)})= \log ({\supp/r})-1$ because
$\supp>2r$.
We conclude that the ratio of $v(\supp)$ to the welfare in every round is at least\\
$\frac{\log (\supp/r)- \frac{1}{2}}{2(1+\disc)} \geq \frac{1}{4(1+\disc)}
\left(\log \left(\frac{\supp}{\supp-n_{max}}\right) - 1 \right)$\\
$\geq
\frac{1}{5} \left(\log
\left(\frac{\supp}{\supp-n_{max}}\right) - 1 \right)$.
\end{proof}

\ncomment{
In order to see that the number of rounds to reach good welfare state is tight,
consider the following example:

\begin{example}
 Consider $s$ sellers, where $s-1$ sellers hold $1$ item and one seller
holds $n-s+1$ items. Also, consider the buyer as having marginal valuation $2
\beta $ for the first $s-1$ items and marginals $\frac{\beta}{s},
\frac{\beta}{s+1}, \hdots, \frac{\beta}{n}$ for the remaining items. Assume that
initially all sellers price their items uniformly at $2 \beta$. Schedule the
best response sequence such that in each round, the \emph{last considered
seller} is the last to best-respond.

We note two things: if the smallest price posted is larger then $\beta$, no
seller will deviate to post less then the smallest price minus $\epsilon$, not
even the seller controlling most of the items.

Now, after $r$ rounds, the
price is $\lceil 2 \beta- \frac{r}{s} \epsilon \rceil$, since
\end{example}

}

\section{Pricing Game with Uncertain Demand for Homogeneous Goods}
\label{sec:bayesian}
In this section we drop the assumption that the valuation of the buyer is common
knowledge among the sellers.
Instead, while we still assume that the buyer has an intrinsic valuation $v^*
\in \V$,
we no longer assume that it is known to the sellers.
Rather, we assume that each seller has a prior belief $\B_i^0$,
expressed as a probability distribution over $\V$.
Note that the initial beliefs of different sellers might be different.
At each step, sellers observe the quantity sold and update their beliefs in a
Bayesian way.

We remind the reader that $\V$ is a set of functions mapping quantities to
values in  $\epsilon \Z_+$,  and
thus $\V$ is a countable infinite set.
In our main theorem for the pricing game with uncertain demand present below,
we assume that the prior belief of each seller has finite
support, i.e., it places positive probability on finitely many valuation
functions.
A natural example of a finite set of beliefs is the one where all sellers have
an upper bound $M$ on the total value of the buyer $\val(\supp)$, that is,
$\B_i^0(v) = 0$ for all $v$ such that $\val(\supp) > M$. In this case the
beliefs form a finite set.

\begin{theorem}\label{thm:stochastic-demand}
If the prior belief of each seller has finite support and all priors are
consistent\footnote{i.e., they place positive probability on the true valuation of
the buyer}, then the Eventual Welfare Guarantee of the uncertain demand game is
at most $\alpha = \frac{\supp}{\supp-\suppmax+1}$.
Moreover, in any best response sequence, the welfare is lower than an $1/\alpha$ fraction of
the optimal welfare in at most $\frac{s \cdot K}{\epsilon} \cdot \sum_i
\abs{\text{supp}(\B_i^0)}$ rounds, where $\text{supp}(\B_i^0) = \{v \in \V;
\B_i^0(v) > 0 \}$ and $K = \max_{v \in
\cup_i \text{supp}(\B_i^0)} v(1)$. If all sellers start with the same
initial belief $B^0$, the number of rounds will decrease to $\frac{s \cdot
K}{\epsilon} \cdot \abs{\text{supp}(\B^0)}$.
\end{theorem}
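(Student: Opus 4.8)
The plan is to bound the number of rounds in which the buyer purchases fewer than $\supp-\suppmax+1$ units; this suffices because, whenever at least $q$ units are sold, the welfare is $\val(q)\ge\frac{q}{\supp}\val(\supp)$ (by decreasing marginals, $\val(\supp)-\val(q)\le(\supp-q)m_q\le\frac{\supp-q}{q}\val(q)$), so selling $\supp-\suppmax+1$ units already yields a $\frac1\alpha$ fraction of $\val(\supp)$, and a finite bound on the number of such ``bad'' rounds simultaneously establishes the Eventual Welfare Guarantee $\alpha$ and the quantitative claim. I would build the argument from two ingredients: the sellers' beliefs can be refined only finitely often, and between refinements the play is controlled by a weakening of the full-information analysis behind Proposition~\ref{thm:soldGuarantee}.

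\noindent\emph{Finitely many refinements.} Call $t$ a \emph{learning step} if $\text{supp}(\B_i^t)\subsetneq\text{supp}(\B_i^{t-1})$ for some $i$. Since $\B_i^t$ is $\B_i^{t-1}$ conditioned on the event $\{S^t=\pur^{\val}(\prices^t)\}$ and $\val^*$ always lies in this event (consistency of the prior, together with the fact that the buyer genuinely plays $\pur^{\val^*}$), each $\text{supp}(\B_i^t)$ is a strictly shrinking, never-empty subset of the finite set $\text{supp}(\B_i^0)$; hence seller $i$ is refined at most $\abs{\text{supp}(\B_i^0)}-1$ times, for at most $\sum_i(\abs{\text{supp}(\B_i^0)}-1)$ learning steps in all. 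With a common prior all sellers observe the same $S^t$ and start equal, so all $\B_i^t$ coincide and there are at most $\abs{\text{supp}(\B^0)}-1$ learning steps.

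\noindent\emph{Between refinements.} Cut the timeline at the learning steps, attaching each one to the maximal block of non-learning steps that follows it, and call the resulting pieces \emph{epochs}. From the second step of an epoch onward the beliefs $(\B_i)_i$ are frozen, and --- crucially --- every price vector $\prices^t$ actually \emph{played} in the epoch is $\val^*$-unambiguous: $\pur^{\val}(\prices^t)=\pur^{\val^*}(\prices^t)$ for every $\val\in\text{supp}(\B_i)$ and every $i$ (otherwise $t$ would be a learning step). Hence, inside the epoch: (a) for the action a seller actually plays, his expected utility equals his realized utility; and (b) that utility is at least the expected utility of pricing all units uniformly at $\disc$, which is strictly positive because $\B_i(\val^*)>0$ and the $\val^*$-buyer buys at least one unit at such prices (the mild non-degeneracy $m_\supp^{\val^*}>\disc$ handled as in the full-information game). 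Using (a) and (b) I would re-run the argument of Lemmas~\ref{lem:mostExpensive}--\ref{lem:noFurtherFalling} within the epoch: if some seller best-responds so as to become the last-considered seller, then by (b) he sells at least one unit, which --- by the greedy structure of the $\val^*$-buyer --- forces every other seller to sell his entire supply, giving $\supp-\suppi+1\ge\supp-\suppmax+1$ units sold; otherwise the highest-price potential of Lemma~\ref{lem:noFurtherFalling} drops by at least $\disc$ every $\numSell$ rounds (with $\val(1)$ replaced by $K$), so within $\numSell(K/\disc-1)$ rounds after the beliefs freeze the prices are low enough to force at least $\supp-\suppmax+1$ sales; and Lemma~\ref{lem:mostExpensive}, which concerns $\val^*$ alone, keeps the count from ever falling back below $\supp-\suppmax+1$ for the rest of the epoch.

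\noindent Putting these together, every bad round lies among the first $\numSell K/\disc$ rounds of its epoch, and the number of epochs is at most one more than the number of learning steps, hence at most $\sum_i\abs{\text{supp}(\B_i^0)}$ (respectively $\abs{\text{supp}(\B^0)}$ with a common prior); multiplying yields the claimed bound on bad rounds and hence the Eventual Welfare Guarantee $\alpha=\supp/(\supp-\suppmax+1)$. The hard part, I expect, is the per-epoch step: unlike in the full-information game, an expected-revenue-maximizing best-response need not equal a full-information best-response, and it is not obvious that the uniform-pricing reduction of Lemma~\ref{lem:unifPricing} --- on which the ``last-considered seller'' bookkeeping rests --- survives once the objective is an expectation over valuations; moreover $\val^*$-unambiguity constrains only the action actually played, not the whole deviation set, so the exact monopolist count of Lemma~\ref{lem:monopolistQuantity} is unavailable and one can extract only ``the last-considered seller sells at least one unit''. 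This loss of the $\moni$-refinement is exactly what degrades the guarantee from $1+\log\big(\frac{\supp}{\supp-\suppmax+1}\big)$ (Theorem~\ref{thm:main-epoa}) to the exponentially weaker $\frac{\supp}{\supp-\suppmax+1}$.
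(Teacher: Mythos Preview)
Your plan matches the paper's proof almost exactly: split the timeline at the (finitely many) learning steps into ``uninformative'' intervals, re-run the full-information descent argument of Lemma~\ref{lem:noFurtherFalling} inside each interval with $\moni$ weakened to $1$, and multiply the per-interval bound $sK/\epsilon$ by the number of intervals. The one piece you flag as ``the hard part'' --- whether the uniform-pricing reduction survives under an expected-utility objective --- is handled in the paper by a short lemma showing that any \emph{uninformative} best-response can be made uniform (since all $\val$ in the support buy the same set at that price, the argument of Lemma~\ref{lem:unifPricing} goes through verbatim), together with an explicit tie-breaking convention that sellers prefer uniform uninformative best-responses; with this in place your observations (a), (b) and the last-considered-seller bookkeeping go through exactly as you outline.
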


We prove Theorem~\ref{thm:stochastic-demand} in Appendix~\ref{app:bayesian}.

We note two main differences between Theorem
\ref{thm:stochastic-demand} and its counterpart in the full information game
(Theorem \ref{thm:main-epoa}). The first is the welfare ratio: in the
nearly-monopolsitic setting in which one seller controls a large fraction of the
market, i.e. $\frac{\supp}{\supp-\suppmax+1} \rightarrow \infty$, the bound is
exponentially worse than the full information bound.
{We will show in Theorem~\ref{thm:stochastic-demand-lb} that welfare can
indeed be very low if beliefs do not converge to the true valuation.}
This difference highlights that the efficiency of the market can exponentially
improve if sellers can accurately learn the valuation of the buyer.
For the other extreme case where no seller controls a significant
fraction of the market, i.e. $\suppmax = \epsilon n$, the two bounds become
approximately $1+\epsilon$ as $\epsilon \rightarrow 0$.

The second difference is that although both theorems guarantee that best response dynamics will lead
to good welfare states in finite time, the first bound guarantees
that except for possibly the first $t' = \frac{s \cdot v(1)}{\epsilon}$ rounds,
the welfare will be high. Theorem \ref{thm:stochastic-demand} gives a much
weaker guarantee: it shows that welfare will be good for all but at most $t'' =
\frac{s \cdot K}{\epsilon} \cdot \sum_i \abs{\text{supp}(\B_i^0)}$ rounds. Yet
it doesn't guarantee those will be the first  rounds. We leave the question of
getting a stronger convergence guarantee for this setting as an open problem.

We next present corollaries for the uncertain demand setting that are
parallel to Corollaries \ref{cor:soldGuarantee-same} and
\ref{cor:soldGuarantee-fraction}.

\begin{corollary}
If every seller controls the same number of items, then the Eventual Welfare Guarantee is
{at most $1+\frac{1}{s-1}$.}
In particular, it tends to $1$ as $s \rightarrow \infty$.
\end{corollary}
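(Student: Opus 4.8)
The plan is to obtain this as an immediate corollary of Theorem~\ref{thm:stochastic-demand}. First I would unpack the hypothesis: if all $\numSell$ sellers control the same number of items and the total supply is $\supp$, then each seller holds $\suppi = \supp/\numSell$ units, so in particular $\suppmax = \supp/\numSell$. Substituting this into the bound $\alpha = \frac{\supp}{\supp - \suppmax + 1}$ furnished by Theorem~\ref{thm:stochastic-demand} immediately gives that the Eventual Welfare Guarantee is at most $\frac{\supp}{\supp - \supp/\numSell + 1}$.

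The second step is a one-line simplification. Since $\supp - \suppmax + 1 > \supp - \suppmax > 0$ (using $\numSell \ge 2$, which is assumed throughout, so $\suppmax < \supp$), discarding the additive $1$ in the denominator only enlarges the fraction, hence
\[
\frac{\supp}{\supp - \suppmax + 1} \;\le\; \frac{\supp}{\supp - \supp/\numSell} \;=\; \frac{1}{1 - 1/\numSell} \;=\; \frac{\numSell}{\numSell-1} \;=\; 1 + \frac{1}{\numSell-1}.
\]
The claim that the guarantee tends to $1$ as $\numSell \to \infty$ is then obvious from this closed form.

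There is essentially no obstacle to overcome here; the only point requiring the slightest care is the direction of the inequality when dropping the $+1$ from the denominator, and this goes in our favour precisely because we are establishing an \emph{upper} bound on $\alpha$. (If one wished, one could retain the $+1$ and state the marginally sharper bound $\frac{\numSell}{\numSell - 1 + \numSell/\supp}$, but the cleaner form $1 + \frac{1}{\numSell-1}$ is the natural parallel of Corollary~\ref{cor:soldGuarantee-same} for the full-information case and is what I would write.)
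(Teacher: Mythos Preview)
Your proposal is correct and is exactly the approach the paper takes: the corollary is stated without proof as an immediate specialization of Theorem~\ref{thm:stochastic-demand}, and your substitution $\suppmax = \supp/\numSell$ followed by dropping the $+1$ in the denominator is precisely the intended one-line derivation.
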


\begin{corollary}
If every seller controls {at most} a $c$ fraction of the market, then the
Eventual Welfare Guarantee is
{at most $\frac{1}{1-c}$}.
In particular, it tends to $1$ as $c \rightarrow 0$.
\end{corollary}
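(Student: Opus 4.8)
The plan is to derive this directly from Theorem~\ref{thm:stochastic-demand}. That theorem already establishes that the Eventual Welfare Guarantee of the uncertain demand game is at most $\alpha = \frac{\supp}{\supp-\suppmax+1}$, so all that remains is to bound this quantity under the additional hypothesis that $\suppi \le c\cdot\supp$ for every seller $i$. In particular, I would not need to revisit the belief-update dynamics or the monopolist-quantity arguments at all; the corollary is a pure arithmetic consequence of the main theorem.

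First I would observe that $\suppmax = \max_i \suppi \le c\cdot\supp$, and hence $\supp-\suppmax+1 \ge (1-c)\supp + 1 > (1-c)\supp$, where positivity of the right-hand side uses $c<1$. Substituting this into the bound for $\alpha$ yields
$$\alpha = \frac{\supp}{\supp-\suppmax+1} \le \frac{\supp}{(1-c)\supp} = \frac{1}{1-c},$$
which is exactly the claimed guarantee. Finally, letting $c \to 0^+$ gives $\frac{1}{1-c}\to 1$, so the market becomes asymptotically efficient (in the Eventual Welfare sense) as the market power of every seller vanishes.

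There is essentially no obstacle here: the entire content sits in Theorem~\ref{thm:stochastic-demand}, and this statement is just the monotonicity of $c \mapsto \frac{\supp}{(1-c)\supp+1}$ together with dropping the harmless $+1$ in the denominator. The only minor points to state cleanly are that $\supp-\suppmax+1 \ge (1-c)\supp$ holds trivially (the $+1$ only helps) and that we are using $c\in(0,1)$ so that $\frac{1}{1-c}$ is a finite, meaningful constant; both are immediate from the hypotheses.
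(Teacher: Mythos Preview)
Your proposal is correct and matches the paper's approach: the paper states this corollary immediately after Theorem~\ref{thm:stochastic-demand} without proof, treating it as a direct arithmetic consequence of the bound $\alpha = \frac{\supp}{\supp-\suppmax+1}$ under the hypothesis $\suppmax \le c\supp$, exactly as you have written.
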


And finally, we show that the bound we have presented for The Eventual Welfare Guarantee of uncertain demand games is essentially tight. We show that it can be as large as $\Theta(n)$ even when the beliefs of all sellers are identical. The proof of this Theorem (Theorem~\ref{thm:stochastic-demand-lb}) follows directly from the example below. 

\begin{theorem}\label{thm:stochastic-demand-lb}
The Eventual Welfare Guarantee of an uncertain demand game can be as large as $\Theta(n)$.
\end{theorem}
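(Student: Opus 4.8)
The plan is to exhibit, for every $n$, one uncertain-demand instance on $n$ homogeneous units and one fair best-response sequence on it along which the buyer buys only $O(1)$ units at \emph{every} time-step, while the optimal welfare $W^*=v^*(n)$ is $\Theta(n)$. Since the definition of the Eventual Welfare Guarantee only asks for \emph{some} round after which the welfare is high in every fair sequence, producing a single fair sequence whose welfare stays $O(1)=O(W^*/n)$ forever already shows that the instance has EWG $=\Omega(n)$; combined with the matching upper bound of Theorem~\ref{thm:stochastic-demand} applied with $\suppmax=n-O(1)$ (which gives $\frac{n}{n-\suppmax+1}=\Theta(n)$), this yields EWG $=\Theta(n)$.

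Concretely I would use a near-monopolistic market: one seller holds $\suppmax=n-1$ units and a second seller holds the last unit. The true valuation $v^*$ is taken with (essentially) constant marginals, so $v^*(n)=\Theta(n)$ while $v^*(k)=\Theta(1)$ for any fixed $k$; note that by the telescoping estimate $v^*(n)/v^*(k)\le n/k$ (decreasing marginals) such a valuation is \emph{forced} if the per-step ratio is to be linear, so the example cannot escape this shape. The prior of each seller is supported on a small finite consistent set $\{v^*,v',\dots\}$ but places only a tiny mass $\beta$ (polynomially small in $n$, and small relative to $\epsilon$) on $v^*$; the ``fear'' valuation $v'$ is one under which the buyer demands only $O(1)$ units, which is what makes the sellers keep prices high.

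The heart of the argument is a \emph{confusing region} $\mathcal C$ of price profiles: those at which the buyer's greedy choice under $v^*$ coincides, as a set, with its greedy choice under every valuation in the support of the prior. On $\mathcal C$ the Bayesian update is the identity, so the belief stays frozen at the prior and the dynamics on $\mathcal C$ are just fixed-belief best-response dynamics for the blended utilities $\beta u_i^{v^*}+(1-\beta)u_i^{v'}+\cdots$. I would choose $v^*,v'$ (and the tie-breaking) so that $\mathcal C$ is exactly the set of profiles in which at most a fixed number of units are priced at or below the true marginal level: on such profiles both $v^*$ and $v'$ buy precisely those $O(1)$ cheapest units, so the welfare is $v^*(O(1))=\Theta(1)=\Theta(W^*/n)$. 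The remaining steps are then (i) pin down $v^*$, $v'$, the prior and the tie-breaking so that $\mathcal C$ has this description and the welfare on it is $\Theta(W^*/n)$; (ii) write down an explicit fair sequence $\prices^0,\prices^1,\dots$ that starts in $\mathcal C$ and, with an appropriate schedule of the two sellers, stays in $\mathcal C$ forever; (iii) check that every move in this sequence is genuinely a best response of the mover given his frozen belief and the other seller's current price; (iv) read off the welfare bound and invoke Theorem~\ref{thm:stochastic-demand}.

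The main obstacle is step (iii) --- forward-invariance of $\mathcal C$ under myopic best response --- and the delicate case is the near-monopolist. With constant true marginals he could, under $v^*$ alone, ``flood'' the market by pricing many units just below the true marginal and sell $\Theta(n)$ units; this would be a profitable deviation and, since $v'$ predicts only $O(1)$ units sold, it would make the belief collapse onto $v^*$, after which Theorem~\ref{thm:main-epoa}/Proposition~\ref{thm:soldGuarantee} would push the welfare back to optimal and ruin the construction. The point of choosing the mass $\beta$ on $v^*$ tiny is precisely to kill this: the expected gain from flooding is $\approx\beta\cdot\Theta(n)$, which must be dominated by the revenue the near-monopolist already earns inside $\mathcal C$ from serving the $v'$-demand, so flooding is never myopically optimal. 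One must also verify the analogous (easier) deviations for the small seller, and --- crucially --- that no seller can profitably raise prices so high that \emph{nothing} is sold (which the upper bound already tells us is impossible once beliefs are frozen), so that the sequence remains in the part of $\mathcal C$ where the $O(1)$ units are actually sold rather than the degenerate part where welfare would be $0$. Once these deviation checks go through, the welfare computation and the appeal to Theorem~\ref{thm:stochastic-demand} are routine.
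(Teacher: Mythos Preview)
Your proposal is correct and follows essentially the same approach as the paper: a near-monopolist holding $n-1$ units versus a single-unit seller, a true valuation $v^*$ with (essentially) constant marginals so that $v^*(n)=\Theta(n)$, a decoy valuation on which the near-monopolist's myopic optimum sells only $O(1)$ units, a prior putting almost all mass on the decoy so that flooding is never myopically profitable, and the observation that the resulting prices are uninformative so beliefs stay frozen forever. The paper's proof is simply a concrete instantiation of your plan: it takes the decoy to be the harmonic-marginals valuation $1+\epsilon,1+\epsilon,\tfrac13,\tfrac14,\tfrac15,\dots,\tfrac1n$ (thereby recycling the full-information lower bound of Theorem~\ref{thm:lower_bound} to certify that at most three items are ever sold), makes $v^*$ agree with it on the first four marginals so that those prices are uninformative, and lets $v^*$ have constant marginal $\tfrac14$ thereafter so that $W^*=\Theta(n)$.
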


\begin{example}
\label{ex:stochastic-lb}
Consider a buyer with a valuation $v^*$ over $n$ items with marginals
$1+\epsilon, 1+\epsilon, \frac{1}{3}, \frac{1}{4},\frac{1}{4},\frac{1}{4},
\hdots, \frac{1}{4}$. Now, let $v$ be the valuation function with marginals
$1+\epsilon, 1+\epsilon, \frac{1}{3}, \frac{1}{4}, \frac{1}{5}, \frac{1}{6},
\hdots, \frac{1}{n}$. Consider also two sellers holding $n_1 = 1$ and $n_2 =
n-1$ items and with beliefs
{$\B_1^0= \B_2^0$} assigning probability $\delta$ to $v^*$ and
$1-\delta$ to $v$.
If $\delta$ is small enough, both sellers will best respond
assuming that the buyer chooses how many items to buy according to $v$. Since
for this best-response sequence  more than $3$ items will never be sold (since
the situation here is the same as in the example used in the proof of Theorem
\ref{thm:lower_bound}), the sellers will never be able to refine their beliefs.
This leads to an best-response dynamics in which the welfare is {less than $3$}
for each time-step, while the optimal welfare is $\frac{n}{4}$.
\end{example}

\section{Heterogeneous Goods}
\label{sec:hetero}
As explained in the introduction, the pricing game with heterogeneous goods has very different properties from that for homogeneous goods. A good place to start studying heterogeneous goods is the case of two extremes: a buyer with additive valuations, and a buyer with unit-demand valuations. In both these cases, it turns out that an efficient pure Nash equilibrium is guaranteed to exist.
\begin{fact}
\label{fact:additive}
 The full information pricing game with sellers holding heterogeneous goods, and the buyer having additive valuation, is guaranteed to have an efficient pure Nash equilibrium.
\end{fact}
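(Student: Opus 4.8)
The plan is to exhibit one explicit price profile and verify directly that it is simultaneously efficient and a pure Nash equilibrium. Let $v_j$ denote the buyer's additive value for good $j$, so the buyer's value for a bundle $S$ is $\sum_{j\in S} v_j$; recall that, by analogy with the homogeneous setting, all values and all admissible prices lie in $\epsilon\Z_+$. Consider the profile $\prices$ in which every seller prices each good $j$ that he holds at exactly $\pricej = v_j$ (a legal price, since $v_j \in \epsilon\Z_+$).

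First I would argue efficiency. Facing $\prices$, the buyer's greedy procedure assigns utility $v_j - \pricej = 0$ to every good, so under the stated tie-breaking convention that the buyer purchases all items of zero utility (and there is no cardinality cap in the purely additive case), the buyer buys every good. Hence $W(\prices) = \sum_j v_j$, which is exactly the optimal welfare, so the profile is efficient.

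Next I would check that $\prices$ is a pure Nash equilibrium. The key observation is that additivity decouples the sellers: since the buyer's utility $\sum_{j\in S}(v_j - \pricej)$ is separable across goods, for any price vector the buyer purchases good $j$ if and only if $\pricej \le v_j$ (using the tie-breaking at equality), independently of the prices of all other goods, and in particular independently of what other sellers do. Consequently, at any profile $(\pricesi,\pricesi[-i])$ the revenue of seller $i$ equals $\sum_{j}\{\pricej : j \text{ held by } i,\ \pricej \le v_j\}$, a quantity that does not depend on $\pricesi[-i]$ at all. For each good $j$ held by seller $i$, this per-good contribution is maximized by setting $\pricej = v_j$: raising the price above $v_j$ forfeits the sale entirely, while lowering it retains the sale but at a strictly smaller price. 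Therefore pricing each held good at its value is a best response to every $\pricesi[-i]$, and in particular to the prices in $\prices$; no seller has a profitable deviation.

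There is essentially no hard step here — this warm-up isolates the fully separable case, and the argument is a direct consequence of the model definitions in Section~\ref{sec:prelim}. The only points requiring minor care are that the buyer's tie-breaking rule really does cause zero-utility goods to be sold (needed for efficiency) and that additivity really does render each seller's optimization independent of the others (needed for the equilibrium property); both are immediate.
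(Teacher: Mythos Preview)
Your proposal is correct and follows exactly the paper's own approach: exhibit the profile in which each seller prices every good at the buyer's value for it, observe that the zero-utility tie-breaking rule makes the buyer purchase everything (efficiency), and note that no seller can improve since raising a price loses the sale and lowering it only reduces revenue. The paper's proof is a single sentence to this effect; you have simply written out the details.
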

\begin{fact}
\label{fact:unit-demand}
 The full information pricing game with sellers holding heterogeneous goods, and the buyer having unit-demand valuation, is guaranteed to have an efficient pure Nash equilibrium.
\end{fact}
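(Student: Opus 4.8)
The plan is to exhibit an explicit price vector and check directly that it is an efficient pure Nash equilibrium; the argument is a discretized Bertrand-competition argument. Write $v_j$ for the buyer's value of good $j$ and let $T_i$ be the set of goods held by seller $i$. Since the buyer is unit-demand (the $k=1$ case), the optimal welfare is $W^{*}=v^{\star}:=\max_j v_j$. Let $i^{\star}$ be the lexicographically first seller owning a good of value $v^{\star}$, let $j^{\star}\in T_{i^{\star}}$ be such a good, put $M_i:=\max_{j\in T_i}v_j$ (with $M_i:=0$ if $T_i=\emptyset$), and let $\tilde M:=\max_{i\neq i^{\star}}M_i$ be the largest utility a competitor of $i^{\star}$ can ever offer at a non-negative price. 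Note $0\le\tilde M\le v^{\star}$ and all of these quantities are integer multiples of $\epsilon$; moreover if $\tilde M=v^{\star}$ then the owner of the corresponding good is lexicographically after $i^{\star}$, by the choice of $i^{\star}$.

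The candidate equilibrium $\prices$ is as follows. If $\tilde M>0$, choose $i_2\neq i^{\star}$ and $j_2\in T_{i_2}$ with $v_{j_2}=\tilde M$; price $j_2$ at $0$, price $j^{\star}$ at $v^{\star}-\tilde M$ when $i^{\star}$ precedes $i_2$ lexicographically and at $v^{\star}-\tilde M-\epsilon$ otherwise (this is $\ge 0$ since in the latter case $\tilde M<v^{\star}$), and price every remaining good $j$ at its own value $v_j$. If $\tilde M=0$ (the monopolist-like case), price $j^{\star}$ at $v^{\star}$, every other good of $i^{\star}$ at its own value, and every good held by any other seller at $\epsilon$.

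I would then verify three things, tracking the buyer's tie-breaking rules from Section~\ref{sec:prelim} (lexicographic across sellers, most expensive within a seller, goods of zero utility are bought). \emph{Efficiency:} in every case the greedy buyer selects $j^{\star}$: when $\tilde M>0$ it yields utility $\tilde M$ or $\tilde M+\epsilon$, which strictly beats every other good except possibly $j_2$, and $j_2$ is beaten either strictly or on the lexicographic tie; when $\tilde M=0$ it is the most expensive zero-utility good and all zero-utility goods belong to $i^{\star}$. Hence $W(\prices)=v^{\star}=W^{*}$. \emph{No competitor deviates:} a seller $i\notin\{i^{\star},i_2\}$ has $M_i\le\tilde M$, so for any of his goods to be bought he must price it at a non-positive level, i.e. at $0$, which does not improve on his current zero revenue; similarly $i_2$ cannot deliver utility above $\tilde M$ at a positive price, so cannot sell for anything positive. \emph{$i^{\star}$ does not deviate:} to sell a good $j\in T_{i^{\star}}$ the buyer's utility from it must reach the competitive level $\tilde M$ that $j_2$ offers, so $i^{\star}$'s revenue is at most $\max_{j\in T_{i^{\star}}}(v_j-\tilde M)=v^{\star}-\tilde M$ (with an extra $-\epsilon$ in the lexicographically unfavourable subcase) --- exactly what $\prices$ already gives him; in the monopolist-like case this is just the observation that a unit-demand buyer never pays more than $v^{\star}$ for a single good.

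The conceptual content --- that the efficient good should be sold at the Bertrand price $v^{\star}-\tilde M$ --- is routine; the real work is the bookkeeping that makes the single-$\epsilon$ perturbations point the right way, so that the buyer is driven onto $j^{\star}$ while neither $i^{\star}$ (by raising its price into a tie) nor a competitor (by undercutting into a tie) can exploit the tie-breaking rule, and so that every posted price stays a non-negative multiple of $\epsilon$. Taking $i^{\star}$ to be the lexicographically first owner of a maximum-value good is exactly what keeps the construction non-negative, and the degenerate cases $\tilde M=0$ and $v^{\star}=0$ are peeled off separately.
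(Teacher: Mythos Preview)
Your proposal is correct and follows essentially the same Bertrand-competition construction as the paper: sell the efficient good at price $v^{\star}-\tilde M$ and make the competitors' goods available at (essentially) zero price. You are in fact more careful than the paper's brief sketch, which prices all of $i^{\star}$'s items at $v_{\max}-v_2$ and all competitors' items at $0$ without the $\epsilon$ adjustment; your case split on whether $i^{\star}$ precedes $i_2$ and your separate treatment of $\tilde M=0$ close tie-breaking edge cases that the paper's one-line argument glosses over.
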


We prove these facts in Appendix~\ref{app:hetero}. Given that the two extremes are well behaved, we move to explore $k$-additive valuations, the simplest generalization of the two extremes.
We already see that there need not exist efficient equilibria, even when an equilibrium exists.

\begin{fact}\label{fact:additiveNoEff}
The full information pricing game with just two sellers, and a $3$-additive buyer need not have an efficient pure Nash equilibrium.
There is such a setting in which equilibrium exists but is not efficient.\footnote{The equilibrium is unique up to price of the unsold item, which can be increased with no change to the utilities of both sellers.}
\end{fact}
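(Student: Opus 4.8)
The plan is to prove the statement by exhibiting one explicit instance together with one price profile, and then verifying by hand that this profile is the unique pure Nash equilibrium and that its outcome is suboptimal. Concretely, I would start from a monopolist holding four goods of values $1,1,2\disc$ and $\disc$, observe that a monopolist facing a $3$-additive buyer is efficient here (pricing each good at its value lets the buyer take the three most valuable goods, so the seller extracts the full optimal welfare $2+2\disc$), and then split the goods between two sellers: seller~1 (the one ranked earlier in the buyer's lexicographic tie-break) gets the single good of value $\disc$, while seller~2 gets the two goods of value $1$ and the good of value $2\disc$. This directly realizes the intuition mentioned in the introduction that breaking a monopolist into competitors can hurt welfare.

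The candidate equilibrium is the profile in which every good is priced exactly at its value. Against it the buyer's greedy procedure sees four goods all at zero utility and fills its three slots by the stated tie-break: it first takes seller~1's good (lexicographically preferred), and then the two most expensive goods of seller~2, which are its two value-$1$ goods, leaving seller~2's value-$2\disc$ good unsold. So the buyer ends up with goods of values $\disc,1,1$, i.e.\ welfare $2+\disc$, strictly below the optimum $2+2\disc$. It then remains to check that no seller can deviate profitably. Seller~1 already extracts the full value $\disc$ of its only good, hence is optimal. The crux is seller~2, whose options I would enumerate: (a) keep selling its two value-$1$ goods, which caps its revenue at $2$; (b) also sell the value-$2\disc$ good, which forces it to give each of its three goods strictly positive utility in order to beat seller~1's zero-utility good in the tie-break, so in the discretized price domain it must pay at least one price tick $\disc$ on each good, bounding its revenue by $(1-\disc)+(1-\disc)+(2\disc-\disc)=2-\disc<2$; (c) selling fewer goods, which is plainly worse. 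Hence seller~2 strictly prefers its equilibrium strategy, the profile is a Nash equilibrium, and its welfare is suboptimal.

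Finally I would re-run the same revenue comparison starting from an arbitrary profile to pin down \emph{every} Nash equilibrium: seller~1 must sell its good at $\disc$, and, given that, seller~2's unique best reply is to sell only its two value-$1$ goods at price $1$; thus the equilibrium outcome (and so its welfare $2+\disc$) is unique, the only freedom being the price of seller~2's unsold good, which can be set to anything at least $2\disc$ without changing anyone's utility. In particular this instance has a pure Nash equilibrium, it is not efficient, and there is no efficient pure Nash equilibrium.

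The step I expect to be the main obstacle is case (b) of the seller~2 analysis, because the deviation it must rule out is precisely the one that would restore efficiency, so the construction has to be calibrated so that this deviation is \emph{strictly} unprofitable rather than a tie (a tie would be resolved toward efficiency by the ``prefer selling more items'' rule). The slack is a single price tick: seller~2 gains a good worth $2\disc$ but must spend $\disc$ to make it competitive \emph{and} spend $\disc$ on each of its two already-sold goods, for a net loss of $\disc$. Verifying that this strict inequality survives against every combination of seller~2's prices and every branch of the buyer's greedy tie-break — and, in the uniqueness argument, also disposing of non-uniform price vectors and ``threat'' prices for the unsold good — is where the care is needed; by contrast the monopolist's efficiency and seller~1's optimality are immediate.
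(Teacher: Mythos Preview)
Your proposal is correct and uses essentially the same mechanism as the paper: a lexicographically preferred seller~1 with a single low-value item crowds out one of seller~2's higher-value items in equilibrium, and seller~2 cannot profitably undercut because displacing seller~1 costs at least one price tick $\disc$ on each of its three items. The paper's instance is slightly cleaner---seller~2 holds three \emph{identical} items of value $2\disc$ rather than your heterogeneous $(1,1,2\disc)$---which spares the extra case analysis over which of seller~2's items are sold, but the idea, the equilibrium verification, and the uniqueness argument are the same.
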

\begin{proof}
This fact is again easy to see. Consider two sellers with seller $1$ holding $1$ item of value $\epsilon$ to the buyer and seller $2$ holding $3$ items of value $2\epsilon$ each to the buyer. The buyer breaks ties towards seller $1$. If seller $1$ prices his item at $\epsilon$, then for seller $2$, the best response is to price all the items at $2\epsilon$. This pricing pair is an equilibrium, and it is easy to verify that in any equilibrium the two sellers price the sold items the same and have the same utility as in this equilibrium.
This results in a buyer welfare of $5\epsilon$, which is strictly smaller than the optimal welfare of $6\epsilon$.
\end{proof}

An interesting fact about the example in the proof of Fact~\ref{fact:additiveNoEff} is that if instead of $2$ sellers, all the $4$ items were held by a single monopolist seller, the social welfare would have been at its optimal level ($6\epsilon$),
strictly larger than with two competing sellers.
While one might intuitively think that splitting the items between two sellers could only increase welfare as it increases competition,
this example shows that it could also {\em decrease welfare in equilibrium} (note that equilibrium welfare is well defined when all the sellers' and buyer's utilities are same across all equilibria, like in the example used in the proof of Fact~\ref{fact:additiveNoEff}).
\begin{remark}
When the items held by a monopolist are split between two sellers, equilibrium welfare could decrease.
\end{remark}

The proof of Fact~\ref{fact:additiveNoEff} presents an example with equilibrium welfare loss of $\epsilon$.
Can the loss be much larger?
In Theorem~\ref{thm:k-additive} we show that with a $k$-additive buyer, whenever there exists an equilibrium, the loss in welfare cannot get larger than $2k\epsilon$.
Thus, the loss in welfare vanishes with $\epsilon$, and the welfare {\em in equilibrium} is almost optimal.
Yet, unlike in the unit-demand and additive cases, for $k$-additive valuations equilibrium might fail to exist and thus the welfare guarantee is much weaker as it relies on assuming that an equilibrium is reached, but for some valuations that might not be the case.
What happens when equilibrium is not reached? Is welfare still guaranteed to be high?
We next show that in some settings, welfare can be very low.

\subsection{Out of equilibrium cycling}
We show that even for a $k$-additive buyer, when equilibria don't exist, welfare can be very poor in a strong sense. First, the Eventual Welfare Guarantee can be arbitrarily large. Further, only in $\frac{1}{s}$ fraction of the time does the welfare reach even a non-zero fraction of the optimal welfare (and thus, the difference between the welfare most of the time, and the optimal welfare, is very large).

\begin{theorem}
The full information pricing game with heterogeneous goods, and a $k$-additive buyer does not always have a pure Nash equilibrium. When a pure Nash equilibrium does not exist, the Eventual Welfare Guarantee is $\infty$. Further, there exist best response sequences where the welfare obtained approximates the optimal welfare within a non-zero factor only for $O(\frac{1}{s})$ fraction of the time.
\end{theorem}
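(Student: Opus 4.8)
The plan is to construct a single explicit family of instances with $s$ sellers and a $k$-additive buyer (for a fixed small $k$, say $k=3$, which already generalizes additive and unit-demand) that exhibits all three phenomena: no pure Nash equilibrium, unbounded Eventual Welfare Guarantee, and a best-response cycle spending a $1-O(1/s)$ fraction of time at near-zero welfare. The construction should generalize the two-seller example behind Fact~\ref{fact:additiveNoEff}: there, a tiny seller ($1$ item of value $\epsilon$) forced the large seller into an inefficient best response. Here I would take one ``big'' seller holding $s-1$ heterogeneous high-value items (say values around $1$ each, with a $k$-additive cap) and $s-1$ ``small'' sellers, each holding a single low-value item (value $\delta \ll 1$), arranged so that the big seller's best response to the configuration of small sellers keeps flipping which of its items it chooses to sell and at what price, while most small sellers are left selling nothing. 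The $k$-additive cap is what prevents an efficient equilibrium from stabilizing: the buyer only wants $k$ items total, so the big seller is always in tension between undercutting a small seller's cheap item (capturing a slot cheaply) versus holding out for a high price on a premium item, and this tension has no fixed point once competition from the small sellers is present.

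First I would formalize the valuation and the supplies so that the optimal allocation has welfare $\Theta(1)$ (the $k$ best items all sold), but any best-response state sells only $O(1)$ of the ``wrong'' items or prices so high that the buyer drops down to a cheap bundle, giving welfare $O(\delta)$ or $O(1/s)$ of optimal; letting $\delta \to 0$ (and $\epsilon\to 0$) drives the ratio to $\infty$, which yields the ``$\mathrm{EWG}=\infty$'' claim and simultaneously the ``non-zero factor'' claim since for \emph{every} constant $c>0$ the welfare is below $c\cdot W^*$ in those steps. Second, I would exhibit an explicit fair best-response schedule — cycling the small sellers in round-robin order, with the big seller responding in between — and track the induced price trajectory to show it is periodic with period $\Theta(s)$, and that in only one phase of each period (when one particular small seller has just undercut and the big seller responds by joining a high-value slot) does the realized welfare rise to a non-trivial fraction of $W^*$; in the remaining $s-1$ phases the sold bundle is the cheap/degenerate one. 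This gives the $O(1/s)$ time-fraction bound. Third, nonexistence of pure Nash follows either as a byproduct of the cycle (a strict best-response cycle precludes equilibrium) or, more cleanly, by a short direct case analysis: in any candidate equilibrium, either some small seller is undercuttable by the big seller (who then strictly gains), or the big seller is pricing a premium slot high enough that a small seller can profitably enter, or the bundle sold is suboptimal and someone can raise price — mirroring the Edgeworth-style argument in the Proposition earlier in the paper.

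The main obstacle I anticipate is the bookkeeping in the cyclic schedule: because the buyer is $k$-additive and breaks ties toward the lexicographically earlier seller and toward the more expensive item within a seller, I must choose the numerical values (the $s-1$ big-seller item values, the common small value $\delta$, the discretization $\epsilon$, and $k$) so that (i) each small seller's myopic best response is genuinely to undercut by exactly one $\epsilon$ rather than to price at $\delta$ and sell nothing, (ii) the big seller's best response is uniquely determined and alternates as claimed, and (iii) no ``shortcut'' deviation (e.g. the big seller flooding all its items at $\epsilon$, as in the Proposition's argument) is profitable — this last point is exactly why I need the high-value items to be worth enough, and $k$ small enough, that occupying $k$ cheap slots is worse for the big seller than selling one or two premium slots at a high price. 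Once the inequalities pinning down each best response are written out, the periodicity and the welfare accounting are routine. I would present the instance, state the three lemmas (no equilibrium; explicit periodic cycle; welfare $\le O(1/s)\cdot W^*$ off the good phase and $\to 0$ relative to $W^*$ as $\delta,\epsilon\to 0$), and note that the theorem's three assertions follow immediately.
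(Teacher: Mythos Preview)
Your construction has the roles inverted relative to what is needed, and as written it very likely admits an efficient pure Nash equilibrium, so it cannot deliver any of the three claims. If the ``big'' seller holds all the high-value items and the buyer is $k$-additive with $k$ at most the big seller's supply, then the big seller simply sells his $k$ best items at roughly $v_i-\delta$ (or $v_i-\delta-\epsilon$, depending on lexicographic position), the small sellers price at $0$ and sell nothing, and no one has a profitable deviation. The ``tension'' you describe, between the big seller undercutting a cheap slot versus holding out on a premium slot, does not arise: the big seller's items \emph{are} the premium items, and the small sellers only set an outside option that caps his price. Generalizing the example behind Fact~\ref{fact:additiveNoEff} is misleading here because that example \emph{does} have an equilibrium; it is merely $\epsilon$-inefficient.

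The paper's construction reverses the asymmetry. Seller~$1$ holds a \emph{single} item of arbitrarily large value $V$; sellers $2,\ldots,s-1$ each hold one item of value $10$; seller~$s$ holds $s=1/\epsilon$ items of value $10$ each; and the buyer is $s$-additive. The cycling is an Edgeworth undercutting race among the \emph{low-value} sellers: seller~$s$ can fill all $s$ slots by himself, which squeezes seller~$1$'s high-value item out of the bundle whenever seller~$s$ has just undercut; sellers $2,\ldots,s-1$ then match, and only when seller~$1$ finally responds (once per period of length $s$) does his item re-enter the bundle. Because $V$ can be taken arbitrarily large relative to $10s$, the welfare ratio is unbounded in the $s-1$ ``bad'' steps and positive only in the one step after seller~$1$ moves, giving the $O(1/s)$ time fraction. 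Nonexistence of equilibrium is then a short direct argument: seller~$s$ must price his sold item at $10$ in any equilibrium, but then deviating to $10-\epsilon$ on all items strictly helps him. The essential structural point you are missing is that the multi-item seller must hold \emph{low}-value items so that flooding the $k$ slots is both feasible for him and welfare-destroying; putting the high-value items in the hands of the multi-item seller eliminates the phenomenon entirely.
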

\begin{proof}
Consider $s$ sellers with seller $1$ holding one item item of a large value $V$, and sellers $2$ to $s-1$ holding one item each of a small value, say $10$. Seller $s$ holds $s = \frac{1}{\epsilon}$ items of value $10$ each.  The buyer is $s$-additive, and prefers sellers in the order $1,\dots,s$. Sellers best respond in the order $s,\dots, 1$. Let the price sequence begin from all  prices being $\infty$. First, seller $s$ sets a price of $10$ for each of his items and gets the maximal possible revenue of $10s$ for his items. Then seller $s-1$ updates his price from $\infty$ to $10$, following by seller $s-2$ and so on till seller $2$ updates his price to $10$. Then, seller $1$ will update his price from $\infty$ to $V$. In the second best response cycle, seller $s$ slashes his price down to $10-\epsilon$ each, and so will sellers $s-1$ to $2$ in sequence, and then seller $1$ will slash his price down to $V-\epsilon$. This price competition will continue until seller $s$ slashes his prices down to $10\epsilon$ each, and so do sellers $s-1$ to $2$, and seller $1$ sets his price at $V-10+ 10\epsilon$. At this point, seller $s$ will no more drop his price, but raise the prices of all his items to $10$ each, upon which sellers $s-1$ to $2$ will update their price to $10$, and seller $1$ will update his price to $V$, bringing the sequence back to where it started. Thus in this infinite looping, only at the stage immediately after seller $1$ responded does the item of value $V$ ever get bought. At other times, only items of value $10$ get bought leading to social welfare of $10s$. Since $V$ can be made arbitrarily big compared to $s$, a non-zero approximation to welfare is attained only when seller $1$ responds, which happens $O(\frac{1}{s})$ fraction of the time.

To see that no pure equilibrium exists in this setting, let $p$ be the price of the most expensive item sold by seller $s$ in some equilibrium. Note that $p > 0$ because seller $s$ is guaranteed to get a revenue of at least $10$ by just selling one item priced at $10$. Given this, sellers $2,\dots,s-1$ best respond by setting prices of $p$ each, and $1$ sets a price of $V-10+p$. This would mean that in any equilibrium the buyer buys exactly one item from seller $s$, and thus $s$ has to price it at $p=10$, and sellers $2,\dots,s-1$ best respond with a price of $10$, and $1$ sets a price of $V$. But given this, seller $s$ is not best responding with a price of $10$: he can sell all his items at a price of $10-\epsilon$ each.
\end{proof}

\subsection{In equilibrium bounds for a $k$-additive buyer}
We next show that with a $k$-additive buyer, whenever there exists an equilibrium, the loss in welfare cannot get larger than $2k\epsilon$. This is almost tight: we show an example where there can be an additive loss of $k\epsilon$. As a corollary, we get that whenever all items are of value at least $r\epsilon$ with $r > 2$, the Price of Anarchy in a $k$-additive setting is at most $\frac{r}{r-2}=1+\frac{2}{r-2}$, which is bounded from above by $3$ and for any fixed valuation decreases monotonically to $1$ as $\epsilon$ goes to $0$ (which implies that $r$ grows to infinity).

\begin{theorem}
\label{thm:k-additive}
Every pure Nash equilibrium in a full information pricing game with heterogeneous goods and a $k$-additive buyer obtains a welfare of at least optimal welfare less $2k\epsilon$. 
\end{theorem}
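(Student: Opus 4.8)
The plan is to argue by contradiction. Suppose a pure Nash equilibrium achieves welfare $W<W^*-2k\epsilon$, and produce a profitable deviation. Write $v_j$ for the buyer's per-item value of item $j$ (the valuation is $k$-additive), let $X$ be the set the buyer purchases in the equilibrium (so $|X|\le k$ and $W=\sum_{j\in X}v_j$), and let $O$ be a welfare-optimal set, i.e. the $\min(k,m)$ highest-value items ($m$ the number of items), so that $W^*=\sum_{o\in O}v_o$. First dispose of the degenerate cases. If $m\le k$ the optimum is ``buy everything'' and one checks $X$ must be all items, so $W=W^*$. If $|X|<k$, then every unsold item is strictly over-priced (the buyer greedily grabs any non-negatively-priced item while the bundle is not full), so the holder of any unsold $o\in O$ can reprice $o$ at $v_o$ (a legal price, values being multiples of $\epsilon$) and get it bought without displacing anything, gaining revenue $v_o$; equilibrium forces $v_o=0$ for all such $o$, hence $O\subseteq X$ and $W=W^*$. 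So from now on $m>k$, $|O|=k$, $|X|=k$, and $d:=|O\setminus X|=|X\setminus O|\ge 1$.

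The heart of the proof is to bound $W^*-W$ by the buyer's marginal utility. Let $\tau=\min_{j\in X}(v_j-p_j)\ge 0$, and let $j_0\in X$ be the bought item that is last in the buyer's tie-break order (it has utility exactly $\tau$, so price $p_{j_0}=v_{j_0}-\tau$). Since prices are non-negative, every bought item has $v_j\ge v_j-p_j\ge\tau$; hence $\sum_{j\in X\setminus O}v_j\ge d\tau$, and as the $O\cap X$ terms cancel in $W^*-W=\sum_{o\in O\setminus X}v_o-\sum_{j\in X\setminus O}v_j$ we obtain
$$W^*-W\ \le\ \sum_{o\in O\setminus X}(v_o-\tau).$$
One also checks each $v_o\ge\tau$ for $o\in O$ (otherwise $X$ would be $k$ items of value exceeding $v_{(k)}$, impossible), so every summand is non-negative; it therefore suffices to bound this sum by $2k\epsilon$.

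Now the key deviation. For any unsold item $o$ with $v_o>\tau+\epsilon$, its holder can reprice $o$ so that it yields utility $\tau+\epsilon$; then $o$ is bought and the unique displaced item is $j_0$. If $j_0$ belongs to a different seller, the deviating seller gains $v_o-\tau-\epsilon>0$, contradicting equilibrium; hence all unsold items of value $>\tau+\epsilon$ belong to the single seller $i^*$ that holds $j_0$, and for each such $o$ the same deviation gives $v_o\le v_{j_0}+\epsilon$. Consequently the items of $O\setminus X$ not held by $i^*$ each contribute at most $\epsilon$ to the sum above, i.e. at most $(d-a)\epsilon\le k\epsilon$ in total, where $a$ is the number of items of $O\setminus X$ held by $i^*$. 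It remains to show the $a$ items held by $i^*$ contribute at most $k\epsilon$ altogether.

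This last point — controlling the ``monopoly rent'' $v_{j_0}-\tau$ that $i^*$ charges on the marginal slot while also sitting on unsold high-value items — is the main obstacle. I would handle it with a joint deviation in which $i^*$ simultaneously reprices several of his unsold items $o_1,\dots,o_\ell$ to utility $\tau+\epsilon$: these displace the $\ell$ tie-break-last bought items, only one of which ($j_0$) is his, so equilibrium yields $\sum_{t\le\ell}(v_{o_t}-\tau)\le p_{j_0}+\ell\epsilon$; combining this with the single-item bound $v_{o_t}-\tau\le p_{j_0}+\epsilon$ and with the fact that $i^*$ cannot hold all of $O$ unsold (undercut again), one balances the estimates to bring the remaining contribution down to $k\epsilon$, for a total of $2k\epsilon$. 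Throughout, tie-breaking must be tracked carefully; the systematic ``$+\epsilon$'' slack needed to \emph{win} a tie rather than merely match it is exactly what turns the idealized bound $k\epsilon$ into the stated $2k\epsilon$. I expect the bookkeeping over which item is displaced by each deviation, and whether it is one of $i^*$'s, to be the most delicate part.
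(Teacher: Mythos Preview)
Your skeleton matches the paper's: isolate the seller $i^*$ holding the last-bought item $j_0$, control the other sellers' unsold optimal items by single-item swaps, and finish $i^*$ with a multi-item deviation. The degenerate cases and the bound $W^*-W\le\sum_{o\in O\setminus X}(v_o-\tau)$ are fine, and your single-item argument for sellers $\neq i^*$ (yielding $v_o\le\tau+\epsilon$) is essentially the paper's Claim~2.

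The gap is the joint deviation for $i^*$. Your assertion that among the $\ell$ displaced items ``only one of which ($j_0$) is his'' is false in general. An easy lemma (the paper's Claim~1) says that in equilibrium every seller prices all his sold items to offer the buyer the \emph{same} utility; since $i^*$ owns $j_0$, every item of $S_{i^*}$ sits at utility exactly $\tau$, and since $i^*$ is approached last, his items are last in the tie-break order among the utility-$\tau$ items. So when you inject $\ell$ new items at utility $\tau+\epsilon$, the items displaced first are precisely $i^*$'s own sold items, not just $j_0$. The inequality you write, $\sum_t(v_{o_t}-\tau)\le p_{j_0}+\ell\epsilon$, therefore does not follow; the right-hand side must include the prices of \emph{all} displaced $S_{i^*}$-items. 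And even granting your inequality, you still owe a bound on $p_{j_0}$ to reach $k\epsilon$; ``balancing the estimates'' is not a proof here.

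What is missing is a structural fact about the whole equilibrium, not just about $i^*$: the utilities offered by \emph{all} sold items lie in $\{\tau,\tau+\epsilon\}$ (the paper proves this by tracking the order in which the buyer visits sellers and how each seller must match or undercut the next by exactly $\epsilon$). With this in hand, the correct multi-item deviation is for $i^*$ to reprice every item of $O\cap I_{i^*}$ to utility $\tau+2\epsilon$; this strictly beats every other sold item, so all of $O\cap I_{i^*}$ is purchased regardless of how many items $i^*$ was selling before or which get displaced. Comparing revenues then gives $\sum_{o\in(O\setminus X)\cap I_{i^*}}(v_o-\tau)\le 2\epsilon\,|O\cap I_{i^*}|$, and adding the $\epsilon$-per-item loss from the other sellers yields $W^*-W\le \epsilon\,|O\setminus I_{i^*}|+2\epsilon\,|O\cap I_{i^*}|\le 2k\epsilon$. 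The $+2\epsilon$ (rather than $+\epsilon$) in the deviation is exactly what lets you ignore which items are displaced, and is the source of the factor $2$ in the theorem.
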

We prove the theorem in Appendix~\ref{app:hetero}.
The lost welfare of $2k\epsilon$ is tight up to constant factors. Here is an example where losing welfare of magnitude $k\epsilon$ is inevitable.
\begin{example}
Consider two sellers. The first holds $k$ items of value $0$ each, and the second holds $k$ items of value $\epsilon$ each. Suppose the first seller is more preferred. Then, the first seller selling all his items at $0$, and the second seller posting a price of $\epsilon$ for all his items is an equilibrium with a welfare of $0$, causing an additive loss in welfare of $k\epsilon$.
\end{example}

The example shows that the Price of Anarchy can be unbounded. Yet, this happens when the value of items is very low (only $\epsilon$).
We next show that this cannot happen when the value is large relative to $\epsilon$, in all such settings the PoA is small.

\begin{corollary}
For $r > 2$, if all items are of value at least $r\epsilon$, the pure Nash price of anarchy of the full information pricing game with heterogeneous goods and a $k$-additive buyer is at most $\frac{r}{r-2}$. As $r$ increases, the PoA monotonically decreases, approaching $1$.
\end{corollary}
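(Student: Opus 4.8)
The plan is to derive this directly from Theorem~\ref{thm:k-additive}. Write $W^*$ for the optimal welfare and let $W$ be the welfare at an arbitrary pure Nash equilibrium. Since the buyer is $k$-additive, the optimal allocation gives the buyer its $\ell$ most valuable items, where $\ell = \min(k,m)$ and $m$ is the total number of items in the market; I would first record the elementary bound $W^* \ge \ell \cdot r\epsilon$, which uses only the hypothesis that every item is worth at least $r\epsilon$. Theorem~\ref{thm:k-additive} then gives $W \ge W^* - 2k\epsilon$ (and, as its proof actually shows, $W \ge W^* - 2\ell\epsilon$ when only $\ell$ items are sold in the optimum).

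Next I would combine the two estimates. Using $2\ell\epsilon \le \tfrac{2}{r}\,\ell r\epsilon \le \tfrac{2}{r}\,W^*$, we obtain
\[
W \;\ge\; W^* - \tfrac{2}{r}\,W^* \;=\; \tfrac{r-2}{r}\,W^*,
\]
so the price of anarchy $W^*/W$ is at most $r/(r-2)$, as claimed. For the monotonicity statement I would simply rewrite $r/(r-2) = 1 + 2/(r-2)$, which is strictly decreasing for $r > 2$ and tends to $1$ as $r\to\infty$; and since for any fixed valuation each (positive) item value exceeds $r\epsilon$ for arbitrarily large $r$ once $\epsilon$ is small, letting $\epsilon\to 0$ drives the bound to $1$.

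There is essentially no obstacle here: the corollary is pure arithmetic layered on Theorem~\ref{thm:k-additive}. The only point requiring a moment's care is reconciling the additive loss $2k\epsilon$ with the lower bound on $W^*$ when the optimal bundle has fewer than $k$ items; this is resolved either by the sharper $2\ell\epsilon$ bound implicit in the proof of Theorem~\ref{thm:k-additive}, or simply by noting that in the regime where $k$-additivity is a genuine constraint we have $m \ge k$ and hence $W^* \ge k r\epsilon$ outright.
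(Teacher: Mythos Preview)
Your proposal is correct and matches the paper's (implicit) approach: the paper states the corollary without proof, treating it as immediate arithmetic from Theorem~\ref{thm:k-additive} together with the lower bound $W^*\ge k r\epsilon$ (in the nontrivial case $m\ge k$), exactly as you do. Your handling of the edge case $m<k$---either via the sharper $2\ell\epsilon$ loss visible in the proof of Theorem~\ref{thm:k-additive}, or by noting that then $|S|=|I|$ and the equilibrium is efficient---is the only point requiring care, and you address it adequately.
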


\section{Conclusion and Open Problems}
\label{sec:conc}
In this paper we consider a model of price competition that exhibits price
fluctuations and we provide welfare guarantees, despite the lack
of convergence to equilibrium. We did so in a very simple model of seller
behavior: myopic best response. It would be interesting to try to do the same
for more sophisticated models of {behavior},
such as Sequential Equilibrium
\cite{kreps_wilson} or Markov Perfect Equilibrium \cite{markov_perfect_eq}.
{In such models a seller would reason not only about his utility in the
next round but also about the influence of his action}
on the long-run utility --
measured, say, by time discounted payoffs. For the uncertain demand setting,
sellers would also face the trade-off between learning more about the buyer
valuation and maximizing gains given their current information.

Our results also highlight the value of accurately knowing the valuation
function of the buyer. In the setting with uncertainty, the Eventual Welfare Guarantee can be exponentially worse than in the full information setting. Are
there natural learning strategies that sellers could employ that could overcome
this gap and at the same time preserve some sort of equilibrium behavior ?

Another direction to extend in the setting of heterogeneous items is to understand what buyer valuations yield good Eventual Welfare Guarantee. An important ingredient in our analysis is that we have a precise description of how a buyer reacts to price updates. For heterogeneous items, understanding the effect of price updates is considerably more challenging.


One other feature absent from our model is the intrinsic costs of sellers. For
example, in the retail market for gasoline the cost is typically the wholesale
price. In the case of publishers and impressions, the cost of publishers to
serve ads corresponds to the dis-utility of the website-users in seeing the ads.
{It would be interesting to see how non-zero costs would
alter the model.


\bibliographystyle{acmsmall}
\bibliography{sink}

\appendix
\section{Random Tie Breaking}
\label{app:randomTie}
We next show that with random tie breaking, as long as $\disc$ is small enough,
no best response will result in a tie. One can use this to derive similar bounds as we do in the paper.

A random tie breaking is such that items of the same price are considered in a uniformly random order by the greedy algorithm. 

\begin{proposition}
Consider the full information game. 
Assume that the buyer has a decreasing marginal valuation with
$m_n=\val(n)-\val(n-1) > 0$ and there are at least $2$ sellers. 
In the full information game with the random tie breaking rule and small enough $\disc$, any best response of a seller will never result with an item of this seller that is priced at exactly the same price as an item offered by another seller, unless this item is sold with probability 0 or 1.
\end{proposition}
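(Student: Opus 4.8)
The plan is to argue by contradiction. Suppose a best response $\prices_i$ of seller $i$ to a fixed $\pricesi[-i]$ prices some unit of seller $i$ at a price $p$ equal to the price of some unit of another seller, and that this unit is sold with probability $\rho\in(0,1)$; I will produce a strictly more profitable price vector for seller $i$. The first step is to read off the structure of the random‑greedy purchase at $(\prices_i,\pricesi[-i])$. Let $S$ be the set of all units (over all sellers) priced exactly $p$, put $T=|S|$, and let $q$ be the number of units priced strictly below $p$; by assumption $S$ contains $a\ge 1$ units of seller $i$ and $b=T-a\ge 1$ units of others, so $T\ge 2$. Since the contested unit is sold with positive probability, the greedy cannot stop among the units priced below $p$ (whether it does is a deterministic event, as tied units have equal price), so it buys all $q$ of them and then buys a uniformly random $t$‑subset of $S$, where $t=\max\{j:\ m_{q+j}\ge p\}$. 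Hence $\rho=t/T$, so $1\le t\le T-1$, which gives $m_{q+t}\ge p>m_{q+t+1}$, and in particular $p>m_{q+t+1}\ge m_n>0$. If $c$ is the number of seller $i$'s units priced below $p$ (all sold), then seller $i$'s units priced above $p$ are unsold (the greedy halts inside $S$), so his expected revenue is at most $cp+p\cdot at/T=p(c+at/T)$, and an easy computation gives $c+at/T\le \suppi-1/\supp$.

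Next I would define the deviation: seller $i$ reprices all $\suppi$ of his units at $p-\disc$. This price is positive because $p>m_n$ forces $p\ge 2\disc$ as soon as $\disc\le m_n$, and it is exactly the role of the hypothesis ``$\disc$ small enough'' that $\disc$ is tiny relative to the fixed positive number $m_n$, hence relative to $p$, so that the $\disc$‑per‑unit price drop is dominated by the gain from obtaining priority in the greedy order. After the deviation, the only units priced below $p-\disc$ are units of other sellers that were already being bought; they remain bought and occupy the first processing positions, while seller $i$'s $\suppi$ units together with the $b'\ge 0$ other units now at $p-\disc$ come next. Since $m_{q+t}\ge p>p-\disc$, the greedy processes through position $K''=\max\{k:\ m_k\ge p-\disc\}\ge q+t$, so at least $\min\{\suppi+b',\ c+b'+t\}$ of the units at price $p-\disc$ are bought; by symmetry seller $i$'s expected revenue is therefore at least $(p-\disc)\cdot\min\{\suppi,\ \tfrac{\suppi}{\suppi+b'}(c+b'+t)\}$.

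It then remains to check that this exceeds $p(c+at/T)$ once $\disc$ is small. If $\tfrac{\suppi}{\suppi+b'}(c+b'+t)\ge\suppi$, seller $i$ surely sells all $\suppi$ units and $(p-\disc)\suppi-p(c+at/T)\ge p/\supp-\disc\suppi>0$ whenever $\disc<m_n/\supp^2$. Otherwise the deviation revenue is at least $(p-\disc)\tfrac{\suppi}{\suppi+b'}(c+b'+t)$, and an elementary inequality — which, using $\suppi\ge c+a$, boils down to $ab'(T-t)+\suppi\,t(T-a)>0$, true because $a<T$ and $t<T$ — shows $\tfrac{\suppi}{\suppi+b'}(c+b'+t)>c+at/T$ with a ratio bounded below by a fixed polynomial in $1/\supp$; since $p>m_n$, choosing $\disc$ below $m_n$ times that polynomial makes the deviation revenue strictly larger than $p(c+at/T)$, hence strictly larger than seller $i$'s original revenue, contradicting optimality of $\prices_i$. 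I expect the main obstacle to be exactly this last comparison: one cannot always undercut to a \emph{fresh} price just below $p$, since other sellers may occupy $p-\disc$ and a block of prices beneath it, so the deviation must sit at $p-\disc$ and tolerate a fresh tie there, and one has to verify that sharing that level with $b'$ foreign units still leaves seller $i$ strictly better off than keeping the $\rho$‑fraction at price $p$ — which is precisely what pins down the quantitative meaning of ``$\disc$ small enough'' ($\disc<m_n\cdot\mathrm{poly}(1/\supp)$).
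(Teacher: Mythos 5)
Your proof is correct, but you have chosen a heavier deviation than the paper does, and the complication you flag at the end (a fresh tie at $p-\epsilon$, possibly shared with a block of foreign units) is an artifact of that choice rather than an intrinsic obstacle. The paper instead leaves every other unit of seller $i$ where it is and drops \emph{only} the one contested unit from $p$ to $p-\epsilon$. Every unit priced strictly below $p$ is sold with probability $1$ both before and after the move (the marginals up to position $q+t$ exceed $p$), so the dropped unit is certainly sold even if it now ties with pre-existing units at $p-\epsilon$ -- those are all sold too, so there is nothing to share. The remaining $m-1$ units of seller $i$ at $p$ then compete with the $k-m$ foreign units at $p$ for the $r-1$ surviving slots, and the revenue change collapses to the single closed-form expression $p\,\frac{(k-m)(k-r)}{k(k-1)} - \epsilon$, which is positive for small $\epsilon$ since $k>m$, $k>r$, and $p \geq m_n > 0$. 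This sidesteps your case split, the $\min$-expression, and the inequality $ab'(T-t)+n_i\,t(T-a)>0$ entirely. Your ``reprice all $n_i$ units at $p-\epsilon$'' deviation does work -- I checked that the original revenue is indeed bounded by $p(c+at/T)$, that the deviated expected revenue is at least $(p-\epsilon)\min\{n_i,\tfrac{n_i}{n_i+b'}(c+b'+t)\}$, and that the strict gap is of order $\mathrm{poly}(1/n)$ so a choice $\epsilon < m_n\cdot\mathrm{poly}(1/n)$ closes the argument -- but it is a coarser hammer: by moving units that were already winning (the $c$ cheap ones) you are forced to re-account for everything, and by moving all $n_i$ units into the new tie you create exactly the sharing problem you then have to fight. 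The one-unit drop isolates precisely the marginal quantity you care about and yields the same conclusion with essentially no computation.
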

\begin{proof}
Consider such a best response in which the item is priced at $p$ and sold with probability $z>0$, $z<1$.
We argue that this seller can strictly increase his utility by decreasing the price of that item by $\disc$,
as long as $\disc$ is small enough.

Observe that before the change in price, any item with price smaller than $p$ is sold, and any item with price larger than $p$ is not sold. 
This is still true after the price decrease, and in particular, the item for which the price has decreased is sold with probability 1. 
Assume that there is a total of $k>1$ items priced at $p$ before the price decrease, out of which $m<k$ are from the considered seller. 
Assume that $r<k$ items are sold at price $p$ before the price decrease. 
The probability of each of the items priced at $p$ to be sold before the price change is $z=r/k$, and after the price change the probability each of the items priced at $p$ to be sold is $(r-1)/(k-1)$, and now the seller is pricing only $m-1$ items at this price.

The increase in utility for this seller is:
$$(p-\disc)+p\left(\frac{(m-1)(r-1)}{k-1}- \frac{m r}{k}\right)=
p\left(\frac{(k-m)(k-r)}{k(k-1)}\right)-\disc
$$
which is positive for small enough $\disc$ as long as $p$ is no too small, but $p\geq m_n$ since otherwise $z=1$.
\end{proof}

\section{Nonexistence of Pure Nash Equilibria in Full Information Game for Homogeneous Goods}
\label{app:fullInfo}
We now show that for the example~\ref{example:no-Nash-eq} presented in Section~\ref{sec:fullInfo} there does not exist a pure Nash equilibrium. 

\begin{oneshot}{Example~\ref{example:no-Nash-eq}}
Consider a setting with two sellers, each holding 2 units of the same good.
The buyer has diminishing marginal valuations, of $5,5,3,1$, i.e.,
$\val(1) = 5$, $\val(2) = 10$, $\val(3) = 13$, $\val(4)=14$. This setting does not have any pure Nash equilibrium.
\end{oneshot}

Clearly, the number of units sold at any pure NE cannot be 1 because every
seller can always make a non-zero revenue. It cannot be two either. There are
two ways in which two goods can be sold. Both goods sold from the same seller,
which is not possible at any equilibrium. The second way is to have just one
good sold from each seller. This can happen only if the unsold item from each
seller is priced above $3$. Either seller can price the unsold item at $3$ (and
the sold item also at $3$) to get a revenue of $6$ which is larger than $5$
which is the maximum they would have gotten previously. Next, the number sold at
pure NE cannot be three too. Without loss of generality, we assume that this
was reached with seller 1 selling two items and seller 2 selling one item. Let
$p_1$ and $p_2$ denote the prices used by seller 1. Given this, seller 2 can
obtain a revenue of at least $p_1 + p_2 -2\epsilon$ by pricing at $p_1-\epsilon$
and $p_2-\epsilon$. If $p_1 + p_2 > 3+2\epsilon$, then seller 2 now gets a
revenue strictly larger than $3$ which is the maximum he could have gotten by
selling one item when three items were totally sold. On the other hand, if
$p_1+p_2 \leq 3+2\epsilon$, seller 1 could have priced his items at $3$ and $1$
and gotten a total revenue of $4 > 3+2\epsilon = p_1+p_2$. Finally, the number
of goods sold at any pure NE cannot be 4 because that would imply that all the
four units were priced at $1$, and one of the sellers can instead price both his
items at $3$, sell one item and get a larger revenue of $3$ instead of $2$.

\section{Pricing Game with Uncertain Demand and Homogeneous Goods}
\label{app:bayesian}
\paragraph{Uniform pricing}
In the uncertain demand setting it is no longer true that a seller
always has a best response that sets uniform price for all his units. Consider for example a
monopolist seller whose belief about the valuation of the buyer is as follows:
he believes that with probability $.99$ the seller has marginals $1,1,0$ and
with probability $.01$ he has marginals $10,10,10$. Then, his unique best
response is to set prices $1,1,10$ for his items. However, we can argue that if
a seller has no uniform price best-response, after posting these prices 
he will necessarily learn something about the buyer's valuation.

At any given time-step $t$, given the set of beliefs $\B_1^{t-1}, \hdots,
\B_s^{t-1}$ formed by the sellers in the previous time-step, we say that a price
vector $\prices$ is \emph{informative} if for at least one buyer, there are
$v,v' \in \text{supp}(\B_i^{t-1})$ such that $\abs{X^v(\prices)} \neq
\abs{X^{v'}(\prices)}$. Otherwise we say that the price vector is
\emph{uninformative}.
We say that seller $i$ has an {\em uninformative best response} to
$\pricesi[-i]$ if $i$ has a best response $\pricesi$ such that the price vector
$(\pricesi[i], \pricesi[-i])$ is uninformative.

We observe now that if $\prices^1, \prices^2, \hdots$ is a best-response
sequence, there can be at most $\sum_{i=1}^s
\abs{\text{supp}(\B_i^0)-1}$ informative prices, since if $\prices^t$ is
informative, then $\sum_{i=1}^s \abs{\text{supp}(\B_i^t)} < \sum_{i=1}^s
\abs{\text{supp}(\B_i^{t-1})}$ and once $\abs{\text{supp}(\B_i^t)} = 1$ for all
$i$, then all prices are uninformative. We assume that the seller
favors uniform price uninformative best responses over non-uniform price
uninformative ones. In other words, when an uninformative
uniform best response is available, the seller will never play a
non-uniform best response unless it is informative.
{The following lemma shows that it is indeed the case that when a seller has an
uninformative best response, he has an uniform one as well.}

\begin{lemma}\label{lemma:uniform_uninformative}
If a seller has an uninformative best-response, then he also has a uniform price
best-response. Therefore, if sellers break ties in favor of uniform prices,
there will be at most $\sum_{i=1}^s \abs{\text{supp}(\B_i^t)} - s$
time-steps in which it does not hold that every seller will post uniform prices
across his items.
\end{lemma}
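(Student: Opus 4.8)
The plan is to prove the two assertions separately: first the structural claim (an uninformative best response forces the existence of a uniform best response), and then derive the counting statement from it together with the fact, already recorded in the paragraph preceding the lemma, that a best‑response sequence contains at most $\sum_i(\abs{\text{supp}(\B_i^0)}-1)$ informative steps.

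For the structural claim, fix the other sellers' prices $\pricesi[-i]$, an uninformative best response $\pricesi$ of seller $i$, and write $\prices=(\pricesi,\pricesi[-i])$. The first step is to observe that uninformativeness collapses the uncertainty: since the goods are homogeneous and the buyer runs the fixed greedy/tie‑breaking rule, the bundle $X^v(\prices)$ is always the $\abs{X^v(\prices)}$ cheapest units in the lexicographic order, so equality of the cardinalities $\abs{X^v(\prices)}$ over all $v\in\text{supp}(\B_i^{t-1})$ forces equality of the sets themselves. Hence seller $i$'s sold units — and therefore his revenue $\rho$ — are identical for every $v$ in the support, and because $\pricesi$ is a best response, $\rho$ equals the maximum attainable expected revenue. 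The second step splits on whether $i$ sells anything. If $i$ sells no units, then the maximum expected revenue is $0$, and any uniform price vector pricing all of $i$'s units strictly above $\max_{v\in\text{supp}(\B_i^{t-1})}v(1)$ (finite, since the support is finite) sells nothing under every $v$ in the support and is thus a uniform best response. If $i$ sells $q_i\ge 1$ units, let $h$ be the price of the most expensive of them (well defined and common to all $v$ by the first step), and consider the uniform vector $\pricesi'$ that prices every unit of $i$ at $h$.

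The third step is to run, for each fixed $v\in\text{supp}(\B_i^{t-1})$ separately, exactly the argument in the proof of Lemma~\ref{lem:unifPricing}, which uses only that this particular buyer runs the greedy rule: raising all of $i$'s units to $h$ does not decrease the total number of units sold (Lemma~\ref{lem:mostExpensive}, since the new vector has at least $\abs{X^v(\prices)}$ prices no larger than the most expensive unit bought at $\prices$), nor the number of units $i$ sells, because every unit of the other sellers that was left unsold at $\prices$ is priced at least $h$, with ties broken against a seller lexicographically later than $i$. Consequently $i$ sells at least $q_i$ units at price $h$ to every $v$, so his revenue under $\pricesi'$ is at least $\rho$ for every $v$, hence at least $\rho$ in expectation; since $\rho$ is the maximum, $\pricesi'$ is a uniform best response. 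For the counting statement, invoke the tie‑breaking convention favoring uniform price vectors: if at some time‑step the active seller posts a non‑uniform vector, then he has no uniform best response there, so by the contrapositive of the structural claim he has no uninformative best response either; in particular the vector $\prices^t$ he posts is informative. Since each informative step strictly decreases $\sum_i\abs{\text{supp}(\B_i^{\cdot})}$, a quantity always at least $s$ (consistency keeps each support nonempty) and equal to $\sum_i\abs{\text{supp}(\B_i^t)}$ at time $t$, from time $t$ onward there are at most $\sum_i\abs{\text{supp}(\B_i^t)}-s$ informative steps, hence at most that many steps at which some seller fails to post uniform prices.

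The main obstacle is the third step — transplanting the full‑information uniform‑pricing argument of Lemma~\ref{lem:unifPricing} into the uncertain‑demand setting. What makes it work is precisely the first step: uninformativeness guarantees that every valuation in the support induces the identical purchased set at $\prices$, so the threshold $h$ and the set of $i$'s sold units are unambiguous and a single common $h$ serves all valuations simultaneously. One should also note that the constructed $\pricesi'$ need only be a uniform best response, not necessarily an uninformative one, which is all that the stated lemma and the counting argument (via the ``favor uniform prices'' tie‑break) require.
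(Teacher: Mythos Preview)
Your proof is correct and follows essentially the same approach as the paper: observe that uninformativeness makes the purchased set (hence the most expensive sold unit, price $h$) identical across all valuations in the support, then replay the argument of Lemma~\ref{lem:unifPricing} at this common $h$ to obtain a uniform best response. The paper's own proof is a two-line sketch invoking Lemma~\ref{lem:unifPricing} directly; you have supplied the details it omits, including the degenerate case where $i$ sells nothing and the explicit derivation of the counting bound.
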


\begin{proof}
If a seller has an uninformative best-response $\prices_i$, then for every $v
\in \text{supp}(\B_i^{t-1})$, $X^v(\prices_i, \prices^{t-1}_{-i})$ is the same.
In particular, the most expensive item bought has the same price for
every $v$. So we can use the exact same argument as in Lemma
\ref{lem:unifPricing} to guarantee that there is an uniform best-response.
\end{proof}

\begin{lemma}\label{lemma:best_response_eventual}
Consider a (possibly infinite) interval in a best-response sequence where the
price posted in each time-step is uninformative.
{Then in every time-step in the interval that is after the first $\frac{K
\cdot s}{\epsilon}$ rounds, it holds that at least $n-n_{\max}+1$ items will be
sold.}
\end{lemma}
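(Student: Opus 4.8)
The plan is to mirror the full-information argument of Lemmas~\ref{lem:mostExpensive}--\ref{lem:noFurtherFalling}, using the uninformativeness hypothesis to reduce the belief-driven dynamics inside the interval to an essentially full-information one against the true valuation $v^*$. First I would record three consequences of being inside an uninformative interval. (i) No seller's belief changes, since beliefs update only on informative time-steps, and by consistency every seller keeps $v^*$ in his support. (ii) By Lemma~\ref{lemma:uniform_uninformative} and the tie-breaking convention, every price vector posted is uniform within each seller, so the greedy buyer visits sellers in nondecreasing price order (ties lexicographically) and the notion of \emph{last considered seller} makes sense exactly as before. (iii) At any uninformative price vector, all valuations in every seller's support --- $v^*$ included --- agree on the total quantity bought; and since per-seller prices are uniform and marginals decreasing, the greedy buyer buys a prefix of the combined price-sorted list of units, so the \emph{split} of that total among the sellers is a function of the prices alone and is likewise common to the whole support. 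Hence each seller's belief-expected revenue at any such price equals the revenue he actually collects from $v^*$, and inside the interval the configuration evolves like a full-information best-response dynamics against $v^*$ over uniform (uninformative) price vectors, the only change being that seller $i$ may rationally price a unit as high as $K_i := \max_{v \in \mathrm{supp}(\B_i^0)} v(1) \le K$ rather than only up to $v^*(1)$.

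With this reduction, I would re-run the two key lemmas. For the analogue of Lemma~\ref{lem:monopolistQuantity}: when a best response of seller $i$ makes him the last considered seller, the quantity he sells depends only on his own price and his own belief --- every other seller's supply is exhausted before the buyer reaches him --- so the comparison argument of Lemma~\ref{lem:monopolistQuantity} (the belief-monopolist facing a buyer already holding $n-n_i$ units can copy seller $i$'s price, and conversely) shows he sells exactly his monopolist quantity $\mu_i$, which by consequence (iii) equals the quantity sold under $v^*$, and which is $\ge 1$ because $\B_i^0(v^*)>0$ makes low prices yield strictly positive expected revenue. So the total sold is $n-n_i+\mu_i \ge n-n_i+1 \ge n-n_{\max}+1$. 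Then I would repeat the two-case induction of Lemma~\ref{lem:noFurtherFalling} with $v^*(1)$ replaced by $K$: either within the first $s(K/\epsilon-1)$ rounds some seller's best response makes him last considered --- and we are done by the previous paragraph --- or not, in which case the largest posted price strictly drops by $\epsilon$ every $s$ rounds, which is impossible for more than $K/\epsilon$ decrements once prices lie in $[0,K]$ (which holds after the first round, since no seller ever posts above $K_i$). Hence within $\frac{Ks}{\epsilon}$ rounds at least $n-n_{\max}+1$ units are sold, and the ``never falls below'' part follows exactly as in the second half of Lemma~\ref{lem:noFurtherFalling} from Lemma~\ref{lem:mostExpensive} applied to $v^*$.

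I expect the main obstacle to be consequence (iii) and its interaction with ``best response with respect to the belief'': one must check that a seller's \emph{unrestricted} belief-best response --- which on this interval happens to be uninformative --- still behaves like the belief-monopolist's choice when it makes him last considered, and this is precisely where it matters that the last-considered quantity depends only on $i$'s own data, so that the Lemma~\ref{lem:monopolistQuantity} comparison needs no informativeness caveat. A secondary point is the round count: a seller may post as high as $K_i$ purely because some valuation in his frozen support assigns that much to a single item, so the draining argument must run with $K$ in place of $v^*(1)$, which is exactly the source of the $\frac{Ks}{\epsilon}$ in the statement. Finally, the degenerate case where $v^*$ is flat at $n-n_i$, so that $\mu_i$ could be $0$, is dispatched as in the full-information proof by using the weaker bound $n-n_{\max}+1$ (rather than $\min_i(n-n_i+\mu_i)$) together with the standing assumption that the relevant marginals are positive.
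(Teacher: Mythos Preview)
Your proposal is correct and takes essentially the same approach as the paper: observe that Lemma~\ref{lem:mostExpensive} and (via Lemma~\ref{lemma:uniform_uninformative}) uniform pricing carry over to the uninformative interval, then replay the two-case induction of Lemma~\ref{lem:noFurtherFalling} with $K$ in place of $v(1)$ and a belief-monopolist quantity $\tilde\mu_i\ge 1$ in place of $\mu_i$. Your consequence~(iii) --- that at an uninformative uniform price the per-seller split, and hence each seller's belief-expected revenue, coincides with what $v^*$ actually produces --- makes explicit a reduction the paper leaves implicit, but the structure and ingredients are the same.
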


\begin{proof}
We note that Lemma \ref{lem:mostExpensive} is a statement about the buyer
behavior, which is the same in this setting -- so the lemma continues to hold.
Lemma \ref{lemma:uniform_uninformative} guarantees that the conclusion of Lemma
\ref{lem:unifPricing} holds in the uncertain demand setting if the prices being
posted are uninformative. Finally, the exact same argument used in
Lemma \ref{lem:noFurtherFalling} can be replayed to show that in less than
$\frac{s \cdot K}{\epsilon}$ rounds, {there will be a time-step in which} all
sellers except one will sell their
entire supply, making the total amount sold at least $n-n_{i}+\tilde{\mu}_i$ and
from that point on this amount will be sold for all subsequent time-steps. Where
$\tilde{\mu}_i\geq 1$
is the amount that a monopolist seller would sell at the
optimal price when facing a buyer with valuation $v(\cdot \vert n -
n_i)$ drawn from the distribution representing the limit belief of seller $i$.
Note that since all prices from the interval are uninformative, the belief of
the seller is constant inside that interval.
\end{proof}

We observe that the limit belief of each seller might not be accurate, in which
case $\tilde{\mu}_i$ might be far from the value of {{$\mu_i$ taken with respect
to the true valuation $v^*$}
(as we demonstrate in Example \ref{ex:stochastic-lb}). The fact that the
last considered seller doesn't best respond properly is the main reason why the
bound is linear rather than logarithmic.

\begin{proofof}{Theorem \ref{thm:stochastic-demand}}
The previous lemma guarantees that in all but
 {$\frac{s \cdot K}{\epsilon} \cdot \sum_i \abs{\text{supp}(\B_i^0)}$ rounds,}
$\supp-\suppmax+1$ items are sold.
Since $v$ has the decreasing marginals property,
then for every time-step $t$ that is not one of these rounds
$W^t \geq v(n-n_{\max}+1) \geq
\frac{n-n_{\max}+1}{n} \cdot v(n)$.
This completes the proof of the theorem.
\end{proofof}

{Theorem \ref{thm:stochastic-demand} is stated for the case that sellers
are completely myopic and do not post non-optimal prices only for the sake of
learning the valuation of the buyer (exploring).
We remark that the same proof (and thus the same result) also holds if sellers
{\em do} try to learn the valuation of the buyer by sometimes posting
informative prices that are not best responses for them. This is true as long as
it holds that in every time-step in which a seller is not exploring  and in which
prices are uninformative, the seller posts a uniform-price best response (which
we know that is available to him).}

\section{Heterogeneous Goods}
\label{app:hetero}
\begin{oneshot}{Fact~\ref{fact:additive}}
 The full information pricing game with sellers holding heterogeneous goods, and the buyer having additive valuation, is guaranteed to have an efficient pure Nash equilibrium.
\end{oneshot}
\begin{oneshot}{Fact~\ref{fact:unit-demand}}
 The full information pricing game with sellers holding heterogeneous goods, and the buyer having unit-demand valuation, is guaranteed to have an efficient pure Nash equilibrium.
\end{oneshot}
\begin{proofof}{Facts~\ref{fact:additive} and~\ref{fact:unit-demand}}
These facts are easy to see. In the additive valuation case, it is an equilibrium for each seller to price every good he owns exactly at the buyer's value for that good. An additive buyer breaking ties towards buying items of zero marginal utility will buy the entire set of items, getting the optimal social welfare. In the unit-demand case, if there is only one type of good, clearly all sellers pricing it at $0$ is an efficient pure Nash equilibrium. Suppose there were two or more types of good, let $v_{max}$ the buyer's value for his most preferred item type and let $i^*$ denote the most preferred seller (lexicographically least) holding such a good. Let $v_2$ be the most valuable good owned by some seller other than $i^*$. Then $i^*$ pricing all his items at $v_{max} - v_2$, and the other sellers pricing all their items at $0$ is an efficient equilibrium.
\end{proofof}

\begin{oneshot}{Theorem~\ref{thm:k-additive}}
Every pure Nash equilibrium in a full information pricing game with heterogeneous goods and a $k$-additive buyer obtains a welfare of at least optimal welfare less $2k\epsilon$. 
\end{oneshot}
\begin{proof}
Let $W^*$ be a welfare maximizing set of items defined by mimicking the buyer's greedy algorithm and his tie breaking procedure. Start with an empty set, and recursively pick the item of maximum value till the set is of size $k$ or all items get exhausted. When multiple items are of same value, pick the item from the most preferred seller, and when multiple items from same seller are of same value, break ties arbitrarily. This arbitrary tie-breaking rule makes $W^*$ non-unique. While this non-uniqueness is straight-forward to handle\footnote{For instance, note that all $W^*$'s that satisfy this definition, not only get the same welfare, but also contain the same number of items from every seller, and contain the same value profile of items from each seller. }, for clarity of proof, we fix some arbitrary tie breaking rule and assume that $W^*$ is uniquely defined.

Let $I_i$ be the set of items owned by seller $i$, and let $I = \cup_i I_i$. Fix an arbitrary pure strategy equilibrium and let $S_i$ be the set of items sold by seller $i$ in this equilibrium. Let $S = \cup_i S_i$.
Note that $|S| = \min(k,|I|)$, i.e., when $k$ or more items are available, no equilibrium will involve the buyer buying less than $k$ items. Let $\underline{v}$ be the value of the least valued item that gets sold.

We prove this theorem by proving several claims in sequence. All of them will refer to the order in which the buyer approached sellers according to his greedy procedure on utilities. Thus it is useful to keep in mind the global order of utilities offered by items in equilibrium (where the ordering is defined with ties broken as defined before). Let $1, 2,\dots , s$ denote the lexicographical ordering of sellers.
Let $\ell$ be the seller who was approached last by the greedy procedure, and, sold at least one item. Note that $\ell$ is not necessarily equal to $s$.
\begin{claim}
\label{cl:equalUtilityLocal}
In any pure equilibrium, all items sold by a seller offer equal utility for the buyer.
\end{claim}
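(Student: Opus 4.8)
I want to show that if a seller $i$ sells two items in a pure Nash equilibrium, then those two items give the buyer the same utility (value minus price). Suppose not, and let $a, b \in S_i$ with the utility from $a$ strictly less than that from $b$; concretely, $v(a) - p(a) < v(b) - p(b)$. The idea is that seller $i$ is then leaving money on the table on item $b$: he could raise the price of $b$ slightly (by $\epsilon$) without changing which items the buyer selects.

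\textbf{Key steps.} First I would set up the greedy-process picture for the equilibrium price vector: the buyer considers items in decreasing order of offered utility, breaking ties as specified, and stops either at $k$ items or when the next best item offers negative utility. Since both $a$ and $b$ are bought, both lie ``above the cutoff'' in this order. Second, I would argue that raising $p(b)$ to $p(b) + \epsilon$ keeps $b$ strictly above the cutoff: by hypothesis the utility of $b$ strictly exceeds that of $a$, and since everything lives on the $\epsilon$-grid, the new utility of $b$ is still at least the utility of $a$, hence at least the utility of every item not bought; I must handle the tie-breaking carefully so that $b$ does not get displaced by an unsold item of equal utility (here I would invoke the buyer's tie-breaking rule — preferring more expensive items from the same seller, and more preferred sellers — together with the observation that after the deviation $b$'s utility is no worse than before relative to any unsold item, and strictly better than $a$'s). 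Third, I would note that the buyer's purchased \emph{set} is therefore unchanged (he still buys exactly $S$, since the set of items weakly above the cutoff is unchanged and has the same size), so seller $i$'s revenue strictly increases by $\epsilon$ — contradicting the equilibrium assumption.

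\textbf{Main obstacle.} The delicate point is the tie-breaking bookkeeping in step two. After raising $p(b)$ by $\epsilon$, item $b$'s utility could become exactly equal to that of some previously-unsold item $c$, and I need to ensure $b$ is still the one selected. This is where the specific buyer tie-breaking rule matters: among equal-utility items the buyer prefers the more preferred seller, and among a seller's own equal-utility items prefers the more expensive one. I should check that these rules, combined with the fact that in the original equilibrium $b$ was strictly preferred to $c$ (as $b$ was bought and $c$ was not), still give $b$ the edge after the $\epsilon$ bump — essentially because the bump only \emph{raises} $b$'s price (helping it in the within-seller tie-break) and the cross-seller order is unaffected. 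An alternative, cleaner route that sidesteps some of this: choose $b$ to be the \emph{least}-utility item among those $i$ sells, so that raising its price by $\epsilon$ cannot push it below any other item $i$ sells, and argue directly that the global cutoff set is preserved; I would likely present whichever of these is shortest. Either way, once the purchased set is shown invariant, the revenue-improvement contradiction is immediate.
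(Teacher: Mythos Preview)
Your core idea---find a price increase that strictly raises seller $i$'s revenue---is exactly right, and the paper does the same. The paper's deviation is slightly different and a bit cleaner: rather than bumping a single item $b$ by $\epsilon$, it raises the prices of \emph{all} of $i$'s sold items so that each now offers the buyer the minimum utility $u_{\min}$ among $i$'s sold items. The justification is then that the item achieving $u_{\min}$ was already sold, so the lexicographic tie-breaking across sellers guarantees that seller $i$ still fills the same number of slots at the $u_{\min}$ level; revenue strictly increases because at least one price went up.

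Your single-$\epsilon$ bump also leads to a profitable deviation, but the argument you outline has a genuine gap. The claim that ``the buyer's purchased set is therefore unchanged'' can fail in the tie case $u_b-\epsilon=u_a$. Concretely, suppose seller $i$ has an \emph{unsold} item $c$ with utility $u_a$ and $v(c)>v(b)$. After your bump, $b$ and $c$ both sit at utility $u_a$ from the same seller, and the buyer's within-seller rule (prefer the more expensive item) now ranks $c$ ahead of $b$, so $c$ takes the freed slot and $b$ is dropped. Your proposed fix---``the bump only raises $b$'s price, helping it in the within-seller tie-break''---does not save you here: $c$'s price $v(c)-u_a$ exceeds $b$'s new price $v(b)-u_a$. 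The good news is that the deviation is \emph{still} profitable in this case (you trade revenue $v(b)-u_b$ for $v(c)-u_a$, a gain of $v(c)-v(b)+\epsilon>0$), so the contradiction survives; but you must argue this displacement case separately rather than assert the set is invariant. The paper's bulk deviation sidesteps most of this bookkeeping because all of $i$'s sold items move together to $u_{\min}$, so the relevant comparison is only against items of other sellers, where lexicographic preference settles things directly.

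Finally, your ``alternative, cleaner route'' is stated backwards: if you take $b$ to be the \emph{least}-utility item $i$ sells and raise its price, its utility drops further below every other sold item, which is the opposite of what you want. If you intended to fix $a$ as the least-utility sold item and bump some $b$ with $u_b>u_a$, that is exactly your main argument and inherits the same tie-breaking issue.
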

\begin{proof}
Suppose not: consider an arbitrary seller whose sold items offered different utilities for the buyer. The seller could consider reducing the utilities of all items to match that of the sold item offering the lowest utility. The fact that  the buyer, among items of same utility, picks from the lexicographically most favorable seller, and the fact that the lowest utility item was previously sold, together imply that the seller continues to sell the same set of items as before, and gets a higher revenue. This contradicts that the seller is best-responding.
\end{proof}
\begin{corollary}
In any pure equilibrium, all items sold by a seller are sold in one contiguous stretch.
\end{corollary}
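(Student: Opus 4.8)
The plan is to read off the corollary directly from Claim~\ref{cl:equalUtilityLocal} together with the explicit structure of the buyer's tie-breaking rule. Recall that the buyer's greedy procedure traverses all items in one global order: items are sorted by the utility they offer (value minus posted price) in decreasing order; ties among equal-utility items are broken first in favor of the lexicographically earliest seller, and then, among items of a single seller, in favor of the more expensive item; the buyer buys items along this order (including zero-utility ones) until he owns $\min(k,|I|)$ of them. By ``sold in one contiguous stretch'' I mean that, for a fixed seller $i$, the items of $S_i$ occupy consecutive positions in this global order.

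First I would fix a seller $i$ with $S_i\neq\emptyset$ and let $\tilde u_i$ be the common utility of all items of $S_i$, which is well-defined by Claim~\ref{cl:equalUtilityLocal}. Next I would note that in the global order every item offering utility strictly larger than $\tilde u_i$ precedes all of seller $i$'s items and every item offering utility strictly smaller than $\tilde u_i$ follows them, so it suffices to examine the block $B$ of items whose offered utility equals exactly $\tilde u_i$. By the tie-breaking rule, $B$ is partitioned by seller in lexicographic order, so all items of seller $i$ lying in $B$ form a single consecutive sub-block $B_i\subseteq B$; since $S_i\subseteq B_i$ by Claim~\ref{cl:equalUtilityLocal}, and the buyer, upon reaching $B_i$, takes its items in decreasing price order until possibly hitting the cap $k$, the purchased items of seller $i$ are a prefix of $B_i$ and hence consecutive. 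This yields the claim.

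The only point needing a word of care is ruling out an interleaving in which a sold item of some other seller falls strictly between two sold items of seller $i$: this would force those two items of seller $i$ into different utility classes (within a class, seller $i$'s items are grouped together), contradicting Claim~\ref{cl:equalUtilityLocal}. I do not expect a real obstacle here; the mild bookkeeping is just noting that the last-considered seller $\ell$ may sell only a strict prefix of its utility-class sub-block because the cap $k$ is reached there, but a prefix is still a contiguous stretch.
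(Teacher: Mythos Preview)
Your argument is correct and is precisely the unpacking the paper has in mind: the corollary is stated without proof immediately after Claim~\ref{cl:equalUtilityLocal}, relying on the fact that equal-utility items are grouped by seller under the buyer's tie-breaking rule, and you have simply written out those details. There is no substantive difference in approach.
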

The corollary implies that each seller is approached at most once by the algorithm, so we can talk about the order that sellers are approached (also note that every seller that is approached, sells at least one item).

\begin{claim}
\label{cl:losingWelfareEll}
In any pure equilibrium, all sellers $i\neq \ell$ sell all items from $W^* \cap I_i$ that have values at least $\underline{v} + 2\epsilon$.
\end{claim}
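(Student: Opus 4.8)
# Proof Proposal for Claim~\ref{cl:losingWelfareEll}

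The plan is to argue by contradiction: suppose some seller $i \neq \ell$ fails to sell an item $g \in W^* \cap I_i$ with $v(g) \geq \underline{v} + 2\epsilon$. I would then exhibit a profitable deviation for seller $i$, contradicting the equilibrium assumption. The key observation driving the deviation is that $g$ is a high-value item (value at least $\underline{v}+2\epsilon$), and since $i$ is not the last-approached seller $\ell$, the greedy buyer still has "room" to absorb an extra item after considering $i$: concretely, there is some item currently sold by a seller approached after $i$ (in particular, by $\ell$) whose displacement does not hurt $i$.

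First I would set up the relevant quantities. By Claim~\ref{cl:equalUtilityLocal} and its corollary, in equilibrium each seller is approached exactly once by the greedy procedure and sells all its items at a common utility level; let $\pi_i$ denote the equilibrium price of the items seller $i$ sells, so the utility offered is $v(\cdot) - \pi_i$ evaluated at those items. Since $|S| = \min(k, |I|)$ and $g \notin S$ while $g \in I_i$, seller $i$ is currently selling some set $S_i$ with every item in $S_i$ offering utility at least as large as the utility of the least-valued sold item, which has value $\underline{v}$. Now consider the deviation where seller $i$ lowers the price of $g$ (and possibly adjusts prices of items in $S_i$) so that $g$ offers utility strictly just above that of the items currently sold by $\ell$. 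The point is that after this deviation, the greedy buyer — who picks $k$ items in decreasing utility order — will include $g$ in place of one of $\ell$'s items (or in place of whatever the marginal sold item was), because $g$'s value is large enough that, at the price $i$ can afford to charge, its utility beats the marginal item.

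The main calculation is to check that this deviation is strictly profitable for seller $i$. The gain is the revenue from selling $g$ at its new price; the loss is at most the price decrease on the items of $S_i$ that seller $i$ was already selling. I would choose the deviation so that $g$'s new price is roughly $v(g) - \underline{v} - \epsilon$ or so (enough to beat the marginal item's utility by one unit of $\epsilon$), which is positive precisely because $v(g) \geq \underline{v} + 2\epsilon$; and I would argue that seller $i$ need not drop the prices of items in $S_i$ at all, or drops them by a controlled amount, because those items already offered utility at least that of the marginal sold item. The cleanest version: seller $i$ keeps the prices of $S_i$ fixed and simply adds $g$ at a price that makes its utility exceed that of $\ell$'s sold items; since $i$ is approached before $\ell$, the buyer picks up $g$, displacing one of $\ell$'s items, and seller $i$'s revenue strictly increases by the (positive) price of $g$. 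This contradicts equilibrium.

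The main obstacle I anticipate is handling the tie-breaking and ordering subtleties carefully: I need $g$'s post-deviation utility to be \emph{strictly} above the utility of the item it displaces (not merely equal, since ties are broken toward lexicographically earlier sellers, and $\ell$ might be earlier than $i$ in the lexicographic order even though $\ell$ is approached later), and I need to ensure that bumping up $g$ doesn't cause the buyer to drop a \emph{different} item of seller $i$ from $S_i$ — which is where Claim~\ref{cl:equalUtilityLocal} and the contiguity corollary are essential, since all of $i$'s sold items sit in one utility block and $g$ slots in at or above that block. I would also need to confirm the edge case where $|I| < k$ (all items sold), in which case "displacing $\ell$'s item" is replaced by "the buyer simply buys $g$ in addition," making the deviation even more clearly profitable; and the case where $W^* \cap I_i$ contains several such high-value items, handled one at a time or all at once by the same argument.
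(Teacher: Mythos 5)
Your proposal follows the same approach as the paper: contradict by taking such an unsold item $g$ of value at least $\underline{v}+2\epsilon$, drop its price until its utility exceeds that of the least-utility sold item (necessarily at most $\underline{v}$), and observe that the resulting price is still at least $\epsilon>0$, so seller $i$ gains revenue. The paper's proof is terser --- it does not dwell on the displacement and tie-breaking edge cases you anticipate --- but the deviation, the use of the $2\epsilon$ margin, and the source of the contradiction are identical.
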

\begin{proof}
Suppose there is a seller $i$ with unsold items of value at least $\underline{v} + 2\epsilon$. In this case, seller $i$ could consider dropping the price of such an unsold item in steps of $\epsilon$. This process will increase the utility of that item, until it matches that of a sold item offering least utility. Due to the ``$2\epsilon$'' assumption in the lemma and that the item under consideration is from $W^*$, it follows that this utility matching will be achieved at a price of $2\epsilon$ or larger. At this point, the item is either already sold, or by dropping the price an $\epsilon$ more, it definitely gets sold, improving the revenue of seller $i$. This proves the claim.
\end{proof}

There is a close connection between the lexicographical preference ordering of sellers, and the order in which the sellers are approached by the buyer. We explore this in the next claim.
\begin{claim}
\label{cl:lexGreedy}
For any two sellers $i,j$ that are approached by the buyer, if $i < j <\ell$, or if $\ell < i < j$, then $i$ is approached before $j$. 
Additionally, every item sold by $i$ offers the same utility as every item sold by $j$.
\end{claim}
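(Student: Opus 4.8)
The plan is to derive both assertions from a single beneficial‑deviation argument, applied to whichever of $i,j$ is visited ``too early'' by the buyer. Throughout I use the structure from Claim~\ref{cl:equalUtilityLocal} and the corollary after it: each approached seller sells one contiguous block of items, all at one common utility, written $u_m$ for seller $m$; the greedy buyer visits sellers in non‑increasing order of $u_m$, breaking ties toward the lexicographically smaller index; and, writing $u_\ell$ for the least utility among sold items (the marginal utility), every selling seller other than $\ell$ is visited before $\ell$. I record two facts: (i) every item offered at utility strictly above $u_\ell$ is bought --- otherwise the buyer, already full, would never reach $\ell$; and (ii) if a seller's index is below $\ell$'s and some of his items are offered at utility exactly $u_\ell$, all of them are bought --- they are processed before $\ell$'s items, so the same reasoning applies. (If $\lvert I\rvert<k$ every item is always bought and in equilibrium every item is priced at the buyer's value, so the statement is immediate; assume $\lvert I\rvert\ge k$.)

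For the ordering, take the case $i<j<\ell$ (the case $\ell<i<j$ is analogous and easier). Suppose for contradiction that $j$ is visited before $i$. Since ties go to the smaller index $i$, this forces $u_j>u_i\ge u_\ell$, hence $u_j\ge u_\ell+\epsilon$. Let $j$ deviate by raising every one of his prices by $\epsilon$, so his formerly‑sold items are now offered at utility $u_j-\epsilon\ge u_\ell$ (his other items stay strictly below $u_\ell$ and remain unsold). If $u_j-\epsilon>u_\ell$: fact (i) pins down exactly which items sit above $u_j-\epsilon$ after the deviation, and a short slot‑count at level $u_j-\epsilon$ shows the buyer still buys all of $S_j$. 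If $u_j-\epsilon=u_\ell$: seller $j$ now sits on the marginal level, and since $j<\ell$ he is processed there before $\ell$; by fact (ii) the sellers of index below $j$ that sit at $u_\ell$ absorb none of the $\lvert S_j\rvert$ slots that $j$ released by vacating the higher level, and $\ell$'s $\ge 1$ sold items show room remains, so again the buyer buys all of $S_j$. In both cases $j$ sells the same $\lvert S_j\rvert$ items at a price higher by $\epsilon$ each --- a strictly profitable deviation, contradiction. Hence $i$ is visited before $j$.

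For the equal‑utility assertion, fix $i<j$ in the same group; by the ordering just proved, $u_i\ge u_j$. If $u_i>u_j$, apply the same deviation to seller $i$: when $i,j$ both have index above $\ell$'s, both are visited before $\ell$ with larger index, forcing $u_i>u_j>u_\ell$, so $i$ lands strictly above $u_\ell$ and only the first case above occurs; when $i,j$ both have index below $\ell$'s, $i<\ell$ and the two cases are handled exactly as above, the boundary case using that $\ell$'s sold items provide the needed slack. This again strictly increases $i$'s revenue --- contradiction --- so $u_i=u_j$.

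The step I expect to be most delicate is the slot‑counting in the boundary case, where the deviating seller's new utility coincides with $u_\ell$: one must argue precisely that when one seller shifts all his prices down by $\epsilon$ onto the marginal level, the greedy buyer's re‑filled size‑$k$ bundle still contains that seller's former block, i.e.\ none of the slots released higher up are captured by sellers processed ahead of him. This is exactly where the index comparisons with $\ell$ in the hypotheses, and the defining property of $\ell$ as the last‑visited selling seller, are used in an essential way.
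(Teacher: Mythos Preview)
Your proposal is correct and follows the same beneficial-deviation argument as the paper: if the ordering or the equal-utility conclusion failed, the higher-utility seller could raise prices and still sell everything because a later-approached selling seller absorbs the slack. The only difference is cosmetic---the paper has the deviator jump directly to the other seller's utility level (and, for the equal-utility half, uses $j$ rather than $\ell$ as the anchor), which sidesteps your $\epsilon$-step case split at the marginal level; the underlying logic is identical.
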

\begin{proof}
Consider sellers $i,j$ that were approached by the buyer, and say $i< j < \ell$. Suppose $j$ was approached before $i$. This is possible only if $j$ offered a strictly larger utility than $i$. In such a case, $j$ could reduce its utility to that offered by $i$. This makes $j$ get more revenue. This follows by noting that $j$ still manages to sell all the items he previously sold because the seller $\ell$ who offers an equal or lesser utility (because $\ell$ is approached later) is less preferred than $j$ yet is still selling at least one item. Similar argument holds for $i,j > \ell$ too except that to argue that $j$ manages to sell all items he previously sold, we must note that $i$ must have offered strictly larger utility to have been considered before $\ell$. Thus even if $j$ matches $i$'s utility, he still manages to be considered before $\ell$ because of higher utility offered to the buyer.

From the argument above, it follows that utility offered by $i$'s sold items is at least that offered by $j$'s sold items.  It is easy to see that seller $i$'s sold items will not offer a strictly higher utility than that of $j$ because even if $i$ matches $j$'s utility, $i$'s items will be considered before $j$'s.
\end{proof}

\begin{corollary}
\label{cl:equalUtilityGlobal}
All the sold items from sellers in the set $\{1,\dots,\ell-1\}$ offer the same utility, and the sold items from sellers in $\{\ell+1,\dots,s\}$ offer the same utility.
\end{corollary}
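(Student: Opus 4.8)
The plan is to read off the corollary directly from Claim~\ref{cl:lexGreedy}, which already contains all the substance; the corollary is simply the ``global'' repackaging of that pairwise statement. First I would fix an arbitrary pure Nash equilibrium and recall, from the corollary following Claim~\ref{cl:equalUtilityLocal}, that every seller approached by the buyer's greedy procedure sells at least one item (and is approached exactly once), while a seller who sells nothing is never approached. Consequently, when the corollary speaks of ``the sold items from sellers in $\{1,\dots,\ell-1\}$'', only those indices in that range corresponding to \emph{approached} sellers contribute anything, so I may restrict attention to them; sellers in $\{1,\dots,\ell-1\}$ that sell nothing are irrelevant to the claim. The same reduction applies to the block $\{\ell+1,\dots,s\}$.

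For the first block: let $i<j$ be any two approached sellers with $j<\ell$. Then $i<j<\ell$, which is exactly the hypothesis of Claim~\ref{cl:lexGreedy}, whose ``additionally'' conclusion says that every item sold by $i$ offers the buyer the same utility as every item sold by $j$. Since this holds for every such pair and ``offers the same utility'' is an equivalence relation on the (finite) set of sold items, pairwise agreement across all approached sellers in $\{1,\dots,\ell-1\}$ propagates to a single common utility value; within a single seller, equality of the utilities of his sold items is in any case Claim~\ref{cl:equalUtilityLocal}. The degenerate cases are trivial: if there is exactly one approached seller in the block the conclusion is just Claim~\ref{cl:equalUtilityLocal}, and if there is none the statement is vacuous.

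For the second block the argument is symmetric: for approached sellers $i<j$ with $\ell<i$ we have $\ell<i<j$, so Claim~\ref{cl:lexGreedy} again applies verbatim and forces all their sold items to share one utility level, with the one-seller and no-seller cases handled exactly as above. (Note that the two blocks may well realize different utility levels; the corollary does not assert otherwise, so nothing more is needed.)

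I do not expect a real obstacle here, since the entire content lies in Claim~\ref{cl:lexGreedy}. The only two points deserving a line of care are: (i) first reducing to \emph{approached} sellers, so that the index sets $\{1,\dots,\ell-1\}$ and $\{\ell+1,\dots,s\}$ correctly describe exactly the sellers contributing sold items; and (ii) observing that Claim~\ref{cl:lexGreedy} is phrased for a single pair, so one must chain it via transitivity of equality to pass from pairwise agreement to a single utility per block.
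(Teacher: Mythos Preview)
Your proposal is correct and matches the paper's approach: the paper states Corollary~\ref{cl:equalUtilityGlobal} without a separate proof, treating it as an immediate consequence of Claim~\ref{cl:lexGreedy}, and your write-up simply spells out that immediate deduction (restrict to approached sellers, apply the pairwise equal-utility conclusion of Claim~\ref{cl:lexGreedy}, and chain by transitivity).
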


\begin{claim}
\label{cl:uue}
Let $u$ be the utility offered by seller $\ell$. Then every sold item offers utility $u$ or $u+\epsilon$.
\end{claim}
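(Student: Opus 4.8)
The plan is to prove the two-sided bound ``$u \le (\text{utility of any sold item}) \le u+\epsilon$''; since all utilities lie in $\epsilon\Z_+$, this pins every sold item's utility to $\{u,u+\epsilon\}$. The lower bound is the easy half. By Claim~\ref{cl:equalUtilityLocal} each seller's sold items all offer a common utility, so suppose some seller $i$ sells items at utility $u_i<u$. Then every item $i$ sells ranks strictly after every item $\ell$ sells in the buyer's greedy order (items are processed by decreasing utility, ties broken lexicographically and then by price), so the buyer reaches $\ell$ strictly before reaching $i$; since $i$ sells at least one item, this contradicts the choice of $\ell$ as the last seller approached among those who sell something. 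Hence every sold item offers utility at least $u$.

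For the upper bound I would first establish the auxiliary fact that \emph{every item offering utility strictly greater than $u$ is sold}: such an item ranks before all of $\ell$'s sold items, which lie in the top $\min(k,|I|)$ items bought by the buyer, so it too is bought (it has positive utility, so the greedy procedure does not stop short of it). Now suppose, for contradiction, that some sold item offers utility $u'\ge u+2\epsilon$, and let $i\neq\ell$ be the seller selling it; by Claim~\ref{cl:equalUtilityLocal} all $m_i\ge 1$ items $i$ sells offer utility $u'$, and by the auxiliary fact all of $i$'s \emph{unsold} items offer utility $\le u$. Consider the deviation in which $i$ raises each of his prices by $\epsilon$: his previously-sold items now offer utility $u'-\epsilon\ge u+\epsilon$, and his previously-unsold items offer utility $\le u-\epsilon$. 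Let $T$ be the set of items offering utility at least $u'-\epsilon$ after the deviation; this set coincides with the set of items that offered utility at least $u'-\epsilon$ before the deviation (seller $i$'s sold items stay in it, his unsold items are below it, everyone else is unchanged), all of which were sold in the original equilibrium (by the auxiliary fact, as $u'-\epsilon>u$) and ranked before $\ell$'s items, so $|T|\le \min(k,|I|)-1<\min(k,|I|)$. Since the greedy buyer exhausts all of $T$ before touching anything else and stops only at $\min(k,|I|)$ items, he buys every item of $T$ --- in particular all $m_i$ of $i$'s items --- so $i$'s revenue rises by $m_i\epsilon>0$, contradicting equilibrium. Thus no sold item offers utility $\ge u+2\epsilon$, completing the argument.

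The step I expect to require the most care is the deviation: one must check that raising all of $i$'s prices does not let previously-unsold items from other sellers ``leapfrog'' $i$'s items and crowd them out of the buyer's basket, and the counting bound $|T|<\min(k,|I|)$ is precisely what prevents this --- after the deviation there are strictly fewer than a full basket of items weakly above $i$'s new utility level, so the greedy buyer is forced to reach and purchase all of them. A minor point to dispatch along the way is the degenerate regime $|I|\le k$, where the size cap is slack; there all items are sold, and since $u'-\epsilon\ge\epsilon>0$ the same price increase leaves $i$'s items with positive utility, hence still all sold, and the contradiction goes through verbatim.
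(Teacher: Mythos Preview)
Your proof is correct, and it takes a genuinely different route from the paper's.

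The paper arrives at this claim by first establishing a structural lemma (Claim~\ref{cl:lexGreedy}) relating the lexicographic order on sellers to the order in which the greedy buyer approaches them, and deducing from it that all sold items from sellers numbered below $\ell$ offer a common utility, and likewise for sellers numbered above $\ell$ (Corollary~\ref{cl:equalUtilityGlobal}). The proof of Claim~\ref{cl:uue} is then a short case analysis on whether the seller approached immediately before $\ell$ sits lexicographically above or below $\ell$, using these structural facts to pin the two common utility levels to $u$ and $u+\epsilon$.

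Your argument bypasses this structure entirely. The lower bound is immediate from the definition of $\ell$, and the upper bound is a direct deviation: if some seller $i$ were selling at utility $u'\ge u+2\epsilon$, raising all of $i$'s prices by $\epsilon$ keeps the set $T$ of items at utility $\ge u'-\epsilon$ unchanged, and the counting bound $|T|\le |S|-|S_\ell|\le \min(k,|I|)-1<k$ forces the greedy buyer to still absorb all of $T$ (hence all of $i$'s previously-sold items), giving $i$ a strict revenue increase. This uses only Claim~\ref{cl:equalUtilityLocal} and the definition of $\ell$; in particular it does not need Claim~\ref{cl:lexGreedy} or Corollary~\ref{cl:equalUtilityGlobal}. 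The paper's approach yields more structural information along the way (the lex-versus-greedy ordering), which it reuses later; your approach is more self-contained and the counting argument is clean. Your handling of the care points is also sound: the bound $|T|<k$ is exactly what rules out leapfrogging, and the degenerate case $|I|\le k$ is in fact already covered by the general argument since $|T|\le \min(k,|I|)-1<k$ and all $T$ items retain strictly positive utility.
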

\begin{proof}
Suppose no seller was approached before $\ell$, then the claim is trivial. Let $\ell^-$ denote the largest numbered seller in $\{1,\dots,\ell-1\}$ who sells at least one item, and let $\ell^+$ be the largest numbered seller in $\{\ell+1,\dots,s\}$ who sells at least one item. By Claim~\ref{cl:lexGreedy}, the seller who was approached just before $\ell$ will be either $\ell^+$ or $\ell^-$. Suppose it was $\ell^+$, then seller $\ell^+$ should offer a strictly larger utility than $u$, but need not offer more than  $u+\epsilon$. Thus, by Corollary~\ref{cl:equalUtilityGlobal}, all sold items from sellers $\ell+1,\dots, s$ offer utility $u+\epsilon$. Given this, seller $\ell^-$ need not offer a utility higher than $u+\epsilon$, and he will not offer a smaller utility either, for otherwise he would have been considered after $\ell^+$. This completes the first case. Consider the other case where the seller approached just before $\ell$ was $\ell^-$. Then $\ell^-$ will offer a utility of exactly $u$, and by Corollary~\ref{cl:equalUtilityGlobal}, so will all the sold items from sellers $\{1,2,\dots,\ell-1\}$. The seller $\ell^+$ should offer utility $u+\epsilon$ to be considered before $\ell^-$, and by Corollary~\ref{cl:equalUtilityGlobal}, so will all the sold items from sellers $\ell+1,\dots,s$.
\end{proof}

Now consider seller $\ell$. He currently sells the most valuable set of $|S_{\ell}|$ items $I_\ell$ (if not he can raise his revenue by switching to sell more valuable items at an increased price while giving the buyer the same utility).
Suppose all the items from $W^*\cap I_{\ell}$ are sold, then the Theorem stands proved by Claim~\ref{cl:losingWelfareEll}. If $\ell$ had unsold items in $W^*\cap I_{\ell}$ of value at most $\underline{v}+\epsilon$, each such item causes at most an additive $\epsilon$ welfare loss, which our Theorem accommodates. Thus, assume that all unsold items in $W^*\cap I_{\ell}$ are of value at least $\underline{v}+2\epsilon$. Without loss of generality, $\ell$ can adjust the prices of all unsold items in $W^*\cap I_\ell$ so that they all offer the same utility $u$ offered by the sold items in $S_\ell$. This is because $\ell$ can drop the prices of unsold items in $W^*\cap I_\ell$, and along the lines of the argument in Claim~\ref{cl:losingWelfareEll}, at a price no smaller than $2\epsilon$ the utility of unsold items will match the utility $u$ offered by items in $S_\ell$. Now, by claim~\ref{cl:uue}, all sold items offer utility of $u$ or $u+\epsilon$. Seller $\ell$ has the option of dropping the prices of all items in $W^*\cap I_\ell$ by $2\epsilon$, to make their utility $u+2\epsilon$, there by selling all items in $W^*\cap I_\ell$. Seller $\ell$ chose not to do this implying that $\ell$ does not see any increase in revenue from doing this change. This means:

\begin{align}
\label{eqn:heteroEq}
\sum_{j \in W^* \cap (I_\ell\setminus S_\ell)} (v_j - u) - 2\epsilon \lvert W^*\cap I_\ell\rvert \leq 0
\end{align}
where $v_j$ is the buyer's value for item $j$.

Recall that $\underline{v}$ denote the value of the least valued item that gets sold in $ALG$. Since $u$ is the smallest utility that is offered in $ALG$, it follows that $u \leq \underline{v}$. This, combined with~\eqref{eqn:heteroEq} gives
\begin{align}
\label{eqn:heteroEq2}
\sum_{j \in W^* \cap (I_\ell\setminus S_\ell)} v_j \leq  \sum_{j \in W^* \cap (I_\ell\setminus S_\ell)} \underline{v} + 2\epsilon \lvert W^*\cap I_\ell\rvert
\end{align}
From Claim~\ref{cl:losingWelfareEll}, it follows that for all sellers other than $\ell$, we have
\begin{align}
\label{eqn:heteroEq3}
\sum_{i \neq \ell} \sum_{j \in W^* \cap (I_i\setminus S_i)} v_j \leq  \sum_{i\neq \ell} \sum_{j \in W^* \cap (I_i\setminus S_i)} (\underline{v} + \epsilon)
\end{align}

The optimal welfare can be written as
\begin{align*}
OPT &= \sum_i \sum_{j \in W^*} v_j\nonumber \\
&= \sum_i \left( \sum_{j\in W^* \cap S_i} v_j \right) + \sum_i \left(\sum_{j\in W^*\cap (I_i\setminus S_i)} v_j\right)\\
&\leq \sum_i \left( \sum_{j\in W^* \cap S_i} v_j \right) +\sum_{i} \sum_{j \in W^* \cap (I_i\setminus S_i)}\underline{v} + 2k\epsilon\\
&\leq ALG + 2k\epsilon
\label{eqn:heteroOPT}
\end{align*}
The first inequality in the chain above follows just by combining~\eqref{eqn:heteroEq2} and~\eqref{eqn:heteroEq3}, and the fact that $|W^*| \leq k$. The second follows from the definition of $\underline{v}$, namely, all sold items in $ALG$ are of value at least $\underline{v}$. \qed
\end{proof}

\end{document}